\definecolor{darkgreen}{rgb}{10,117,28}
\definecolor{blue}{rgb}{0.1,0.2,0.5}
\definecolor{brown}{rgb}{0.6,0.6,0.2}
\newtheorem{theorem}{Theorem}[section]
\newtheorem{lemma}[theorem]{Lemma}
\newcommand{\newtheoremwithcrefformat}[2]{%
  \newtheorem{#1}{#2}[section]%
  \crefformat{#1}{##2\MakeUppercase#1~##1##3}%
  \Crefformat{#1}{##2\MakeUppercase#1~##1##3}%
}
\def\ifenv#1{
   \def\@tempa{#1}%
   \ifx\@tempa\@currenvir
      \expandafter\@firstoftwo
    \else
      \expandafter\@secondoftwo
   \fi
}
\let\wfs@comment@comment\comment
\let\comment\@undefined
\newcommand{\untoto}{\let\toto\@undefined}
\let\wfs@changes@comment\comment
\let\comment\@undefined
\newcommand\comment{%
    \ifthenelse{\equal{\@currenvir}{comment}}
    {\wfs@comment@comment}
    {\wfs@changes@comment}%
}
\renewcommand{\phi}{\varphi}
\renewcommand{\leq}{\leqslant}
\renewcommand{\geq}{\geqslant}
\renewcommand{\le}{\leqslant}
\renewcommand{\ge}{\geqslant}
\newcommand{\tup}[1]{\bar{#1}}
\newcommand{\trans}{\intercal}
\newcommand{\ilpform}{$\spadesuit$}
\newcommand{\ineqform}{$\clubsuit$}
\newcommand{\stochform}{$\diamondsuit$}
\newcommand{\sol}{\mathsf{Sol}}
\newcommand{\opt}{\mathsf{opt}}
\newcommand{\Z}{\mathbb{Z}}
\newcommand{\R}{\mathbb{R}}
\newcommand{\N}{\mathbb{N}}
\newcommand{\Q}{\mathbb{Q}}
\newcommand{\proximity}{\mathsf{proximity}_\infty}
\newcommand{\frc}{\star}
\newcommand{\itg}{\diamond}
\newcommand{\itgo}{\oblong}
\newcommand{\cleq}{\sqsubseteq}
\newcommand{\Graver}{\mathcal{G}}
\newcommand{\depth}{\mathsf{depth}}
\newcommand{\tdP}{\mathsf{td}_{\mathsf{P}}}
\newcommand{\GP}{G_{\mathsf{P}}}
\renewcommand{\setminus}{-}
\DeclareMathOperator{\intcone}{int.cone}
\DeclareMathOperator{\cone}{cone}
\DeclareMathOperator{\lcm}{lcm}
\DeclareMathOperator{\conv}{conv}
\newcommand{\Oh}{\mathcal{O}}
\newcommand{\dbtilde}[1]{\accentset{\approx}{#1}}
\newcommand{\wo}[1]{\overline{#1}}
\newcommand{\wh}[1]{\widehat{#1}}
\newcommand{\wt}[1]{\widetilde{#1}}
\newcommand{\ERCagreement}{%partially supported by the $111$ project of the Ministry of Education of China.
\vspace{-6pt}

\noindent
{\begin{minipage}[t]{.75\textwidth}\small This paper is a part of project {\sc{Total}} that has received funding from the European Research Council (ERC) under the European Union's Horizon 2020 research and innovation programme (grant agreement No 677651). \end{minipage}\hfill\begin{minipage}[t]{.22\textwidth}\raisebox{-42pt}{\includegraphics[width=\textwidth]{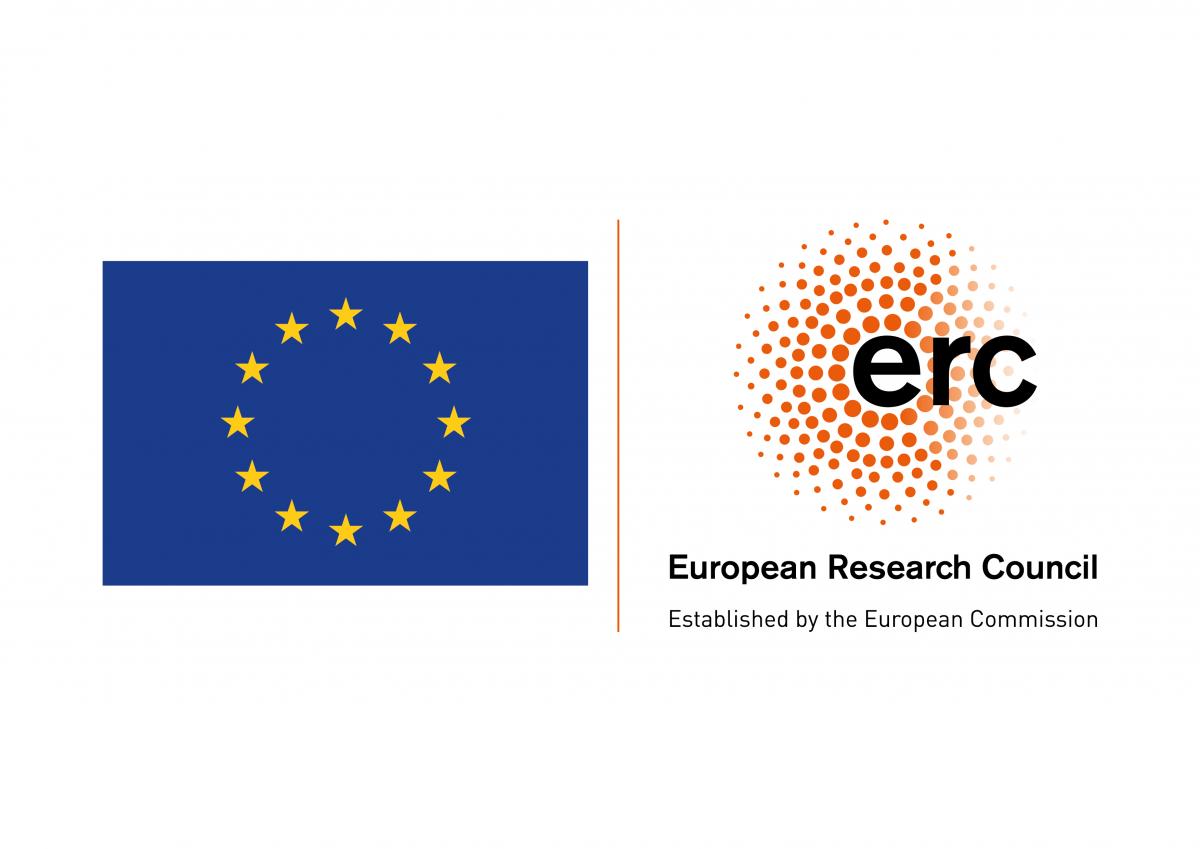}}\\\end{minipage}}}
\newcommand{\@abbrev}[3]{
  \def\c@a@def##1{
      \if ##1.
        \relax
      \else
        \@ifdefinable{\@nameuse{#1##1}}{\@namedef{#1##1}{#2##1}}
        \expandafter\c@a@def
      \fi
    }
  \c@a@def #3.
}
\title{Efficient sequential and parallel algorithms for multistage stochastic integer  programming using proximity}%TODO Please add
\author{
Jana Cslovjecsek\thanks{EPFL, Switzerland, \texttt{jana.cslovjecsek@epfl.ch}}
\and
Friedrich Eisenbrand\thanks{EPFL, Switzerland, \texttt{friedrich.eisenbrand@epfl.ch}}
\and
Micha\l{} Pilipczuk\thanks{University of Warsaw, Poland, \texttt{michal.pilipczuk@mimuw.edu.pl}}
\and
Moritz Venzin\thanks{EPFL, Switzerland, \texttt{moritz.venzin@epfl.ch}}
\and
Robert Weismantel\thanks{ETH Z\"urich, Switzerland, \texttt{robert.weismantel@ifor.math.ethz.ch}}
}
\date{}
\begin{document}
\maketitle

\begin{abstract}
 We consider the problem of solving integer programs of the form $\min \{\,c^\trans x\ \colon\ Ax=b, x\geq 0\}$, where $A$ is a multistage stochastic matrix in the following sense: the primal treedepth of $A$ is bounded by a parameter~$d$, which means that the columns of $A$ can be organized into a rooted forest of depth at most $d$ so that columns not bound by the ancestor/descendant relation in the forest do not have non-zero entries in the same row. We give an algorithm that solves this problem in fixed-parameter time $f(d,\|A\|_{\infty})\cdot n\log^{\Oh(2^d)} n$, where $f$ is a computable function and $n$ is the number of rows of $A$. The algorithm works in the strong model, where the running time only measures unit arithmetic operations on the input numbers and does not depend on their bitlength. This is the first fpt algorithm for multistage stochastic integer programming to achieve almost linear running time in the strong sense.
 For the case of two-stage stochastic integer programs, our algorithm works in time $2^{(2\|A\|_\infty)^{\Oh(r(r+s))}}\cdot n\log^{\Oh(rs)} n$, which gives an improvement over the previous methods both in terms of the polynomial factor and in terms of the dependence on $r$ and~$s$. In fact, for $r=1$ the dependence on $s$ is asymptotically tight assuming the Exponential Time Hypothesis. Our algorithm can be also parallelized: we give an implementation in the PRAM model that achieves running time $f(d,\|A\|_{\infty})\cdot \log^{\Oh(2^d)} n$ using $n$ processors. 
 
The main conceptual ingredient in our algorithms is a new proximity result for multistage stochastic integer programs. We prove that if we consider an integer program $P$, say with a constraint matrix $A$, then for every optimum solution to the linear relaxation of $P$ there exists an optimum (integral) solution to $P$ that lies, in the $\ell_{\infty}$-norm, within distance bounded by a function of $\|A\|_{\infty}$ and the primal treedepth of $A$.
On the way to achieve this result, we prove a generalization and considerable improvement of a structural result of Klein for multistage stochastic integer programs. 
Once the proximity results are established, this allows us to apply a treedepth-based branching strategy guided by an optimum solution to the linear relaxation.  
\end{abstract}

%%% Local Variables:
%%% mode: latex
%%% TeX-master: "main"
%%% End:

\ERCagreement

\pagebreak
\section{Introduction}\label{sec:intro}

We consider integer optimization programs of the form 
\begin{equation}\label{eq:intro}
\min \{\,c^\trans x\ \colon\ Ax=b, x\geq 0, x\textrm{ is integral}\,\},
\end{equation}
where $x$ is a vector of $m$ variables, $Ax=b$ is a system of $n$ equations, and $c$ is the {\em{optimization goal}} --- an integer vector of dimension $m$. Our goal is to exploit specific structural properties of the {\em{constraint matrix}} $A$ in order to develop efficient solvers for programs posessing these properties. Precisely, we assume that $A$ is either {\em{two-stage}} or {\em{multistage stochastic}}, which are the block forms depicted below.

\begin{figure}[h!]
\centering
\includegraphics[width=0.85\textwidth]{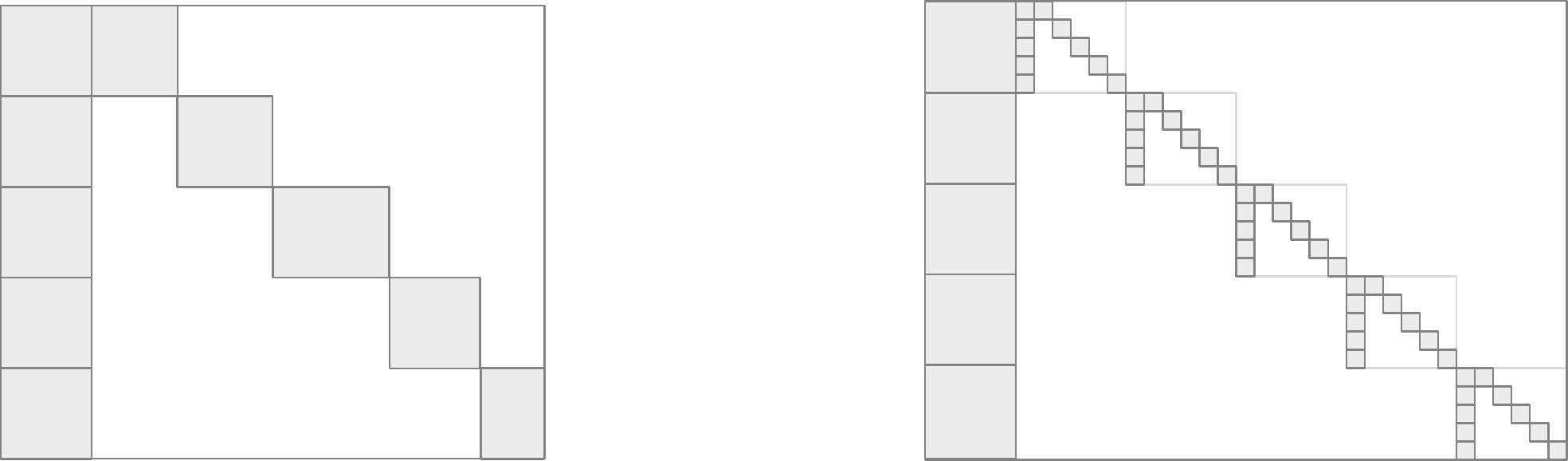}
 \caption{A schematic view of a two-stage matrix (left panel) and a multistage stochastic matrix (right panel). All non-zero entries are contained in the blocks depicted in grey.}
\end{figure}

% $$
% \begin{pmatrix} A_1 & B_1 & & & \\ A_2 & & B_2 & & \\ \vdots & & & \ddots & \\ A_t & & & & B_t\end{pmatrix}\qquad\qquad
% \begin{pmatrix} A_1 & B_1 & & & \\ A_2 & & B_2 & & \\ \vdots & & & \ddots & \\ A_t & & & & B_t\end{pmatrix}
% $$

Formally, we say that $A$ is {\em{$(r,s)$-stochastic}} (left panel above) if after deleting the first $r$ columns, the matrix can be decomposed into blocks with at most $s$ columns each. Thus, two-stage stochastic programs naturally model combinatorial problems with the following structure: there are few (at most~$r$) global variables, so that after setting them the problem can be decomposed into multiple much smaller subproblems --- each involving at most $s$ variables. Due to this, two-stage stochastic integer programming has found multiple applications; see the survey of Schultz et al.~\cite{SchultzSvdV96} for some examples.

Multistage stochastic integer programming is a generalization of the two-stage variant above obtained by allowing further recursive levels in the block structure (right panel above). This idea is probably best captured by the definition of the {\em{primal treedepth}} of a matrix. The primal treedepth of $A$, denoted $\tdP(A)$, is the least integer $d$ such that the columns of $A$ can be organized into a rooted forest of depth at most $d$ (called an {\em{elimination forest}}) with the following property: for every pair of columns that are not independent --- they have non-zero entries in the same row --- these columns have to be in the ancestor/descendant relation in the forest. The form presented in the right panel above can be then obtained by ordering the columns as in the top-down depth-first traversal of the elimination forest, and applying a permutation to the rows to form the blocks.

It turns out that primal treedepth is indeed a structural parameter that can be algorithmically useful in the design of integer programming solvers. By this, we mean the existence of fixed-parameter algorithms for the parameterization by $\tdP(A)$ and $\|A\|_\infty$. For this parameterization, fixed-parameterized tractability can be understood in two ways:
\begin{itemize}[nosep]
 \item {\em{Weak fpt algorithms}} have running times of the form $f(\tdP(A),\|A\|_\infty)\cdot |P|^{\Oh(1)}$, where $f$ is a computable function and $|P|$ is the total bitlength of the encoding of the input program.
 \item From {\em{strong fpt algorithms}} we require time complexity of the form $f(\tdP(A),\|A\|_\infty)\cdot n^{\Oh(1)}$, where $f$ is computable and $n$ is the number of rows of the input matrix. Such algorithms work in the model where input numbers occupy single memory cells on which unit-cost arithmetic operations are allowed. Note that thus, the running time is not allowed to depend on the bitlength of the input numbers.
\end{itemize}
A weak fixed-parameter algorithm for the considered parameterization follows implicitly from the work of Aschenbrenner and Hemmecke~\cite{AschenbrennerH07}. The first to explicitly observe the applicability of primal treedepth to the design of fpt algorithms for integer programming were Ganian and Ordyniak~\cite{GanianO16}, although their algorithm also treats $\|b\|_{\infty}$ as a parameter besides $\tdP(A)$ and $\|A\|_\infty$. A major development was brought by Kouteck\'y et al.~\cite{KouteckyLO18}, who gave the first strong fpt algorithm, with running time $f(\tdP(A),\|A\|_\infty)\cdot n^3\log^2 n$. %\todo{The ICALP paper of Koutecky does not have the $\log^2 n$ factor. Where is the catch?} 
We refer the reader to the joint manuscript of Eisenbrand et al.~\cite{arxiv-IP}, which comprehensively presents the recent developments in the theory of block-structured integer programming. Corollaries~93 and~96 there discuss the cases of two-stage and multistage stochastic integer programming.

The algorithm proposed by Kouteck\'y et al.~\cite{KouteckyLO18} relies on the idea of {\em{iterative augmentation}} using elements of the {\em{Graver basis}} (see also~\cite{EisenbrandHK18,HemmeckeOR13}). Intuitively, the Graver basis of a matrix $A$ comprises of ``single steps'' in the lattice of points $x$ satisfying the system of equations $Ax=0$; see \cref{sec:prelims} for a formal definition. The idea is that if all the elements of the Graver basis are somehow short, then this allows for a local search approach to integer programming. Somewhat simplifying the idea, we start with some candidate for a solution and iteratively improve it by searching for a better one within a ball of small radius. The assumption on the lengths of the Graver basis elements is used to bound the radius of local search necessary to guarantee the correctness. It turns out that in the case of multistage stochastic programs, the $\ell_\infty$-norms of the elements of the Graver basis of the constraint matrix $A$ can be bounded by $g(\tdP(A),\|A\|_\infty)$ for some computable function $g$. This makes the iterative augmentation technique applicable in this setting. 

Let us note that  Kouteck\'y et al.~\cite{KouteckyLO18} relied on bounds on the function $g$ above due to Aschenbrenner and Hemmecke~\cite{AschenbrennerH07}, which only guaranteed computability. However, better and explicit bounds on $g$ were later given by Klein~\cite{Klein20}; see also the joint manuscript of Eisenbrand et al.~\cite{arxiv-IP} for a streamlined presentation. Roughly speaking, the proof of Klein~\cite{Klein20} shows that $g(d,a)$ is at most $d$-fold exponential, and it is open whether this bound can be improved to an elementary function.
Very recently, Jansen et al.~\cite{JansenKL20} showed that the doubly-exponential dependence of the running time on $s$ is in fact necessary for algorithms solving two-stage stochastic integer programs, assuming the Exponential Time Hypothesis.

While robust and elegant, iterative augmentation requires further arguments to accelerate the convergence to an optimal solution in order to guarantee a good running time. As presented in~\cite{arxiv-IP}, to overcome this issue one can either involve the bitlength of the input numbers in measuring the running time, thus effectively resorting to weak fpt algorithms, or reduce this bitlength using technical arguments. For instance, integer program~\eqref{eq:intro} can be solved in time $f(\tdP(A),\|A\|_{\infty})\cdot n^{1+o(1)}\cdot \log^d\|c\|_{\infty}$. However, to the best of our knowledge, there is no known strong fpt algorithm that would achieve a subquadratic running time dependence on $n$, even in the setting of two-stage stochastic integer programming.

In a very recent work, Cslovjecsek et al.~\cite{CslovjecsekEHRW20} gave nearly linear-time strong fpt algorithms for the related setting of {\em{$N$-fold}} and {\em{tree-fold}} integer programming. Here, we respectively consider integer programs~\eqref{eq:intro} in which the {\em{dual}} of the constraint matrix is $(r,s)$-stochastic for some small $r,s$, or has low primal treedepth (we then say that the matrix itself has low {\em{dual treedepth}}). They gave an algorithm that runs in time $f(d,\|A\|_\infty)\cdot m\log^{\Oh(2^d)} m$, where $d$ is the dual treedepth of $A$, $m$ is the number of columns of $A$, and $f$ is a doubly-exponential function. The algorithm can be also parallelized: it can be implemented in the PRAM model so that the running time is $f(d,\|A\|_\infty)\cdot\log^{\Oh(2^d)} m$ on $m$ processors. 

The approach of Cslovjecsek et al.~\cite{CslovjecsekEHRW20} is quite different from iterative augmentation. The key component is a proximity result for integer programs with bounded dual treedepth: they show that if $P$ is an integer program with constraint matrix $A$, then for every optimal solution $x^\frc$ to a suitable\footnote{In~\cite{CslovjecsekEHRW20} this is not the standard relaxation of $P$; see the work for details.} linear relaxation of $P$ there exists an optimal (integral) solution $x^\itg$ to $P$ such that $\|x^\itg-x^\frc\|_1$ is bounded by a function of $\|A\|_\infty$ and the dual treedepth of $A$. It follows that if a solution $x^\frc$ is available, then an optimal integral solution $x^\itg$ can be found in linear fpt time using dynamic programming, where the bound on $\|x^\itg-x^\frc\|$ is used to limit the number of relevant states. This approach requires devising an auxiliary algorithm for solving linear relaxations with bounded dual treedepth in strong fpt time. This is achieved through recursive Laplace dualization using ideas from Norton et al.~\cite{NortonPT92}.

\paragraph*{Our contribution.} In this work, we show that the general proximity framework proposed by Cslovjecsek et al.~\cite{CslovjecsekEHRW20} can be also applied to multistage stochastic integer programming (that is, for programs with bounded primal treedepth). We obtain an algorithm for integer programming~\eqref{eq:intro} that runs in time $f(d,\|A\|_\infty)\cdot n\log^{\Oh(2^{d})} n$ in the strong sense, where $f$ is a computable function, $d=\tdP(A)$, and $n$ is the number of rows of $A$ (\cref{thm:main-multi}). Similarly to~\cite{CslovjecsekEHRW20}, the algorithm can be also parallelized so that, in the PRAM model, it runs in time $f(d,\|A\|_\infty)\cdot \log^{\Oh(2^{d})} n$ on $n$ processors (here, we assume that a suitable elimination forest is provided on input). While the obtained function $f$ is non-elementary in general (it is $d$-fold exponential), we also give a fine-tuned variant of our result that provides more explicit bounds for $(r,s)$-stochastic programs: the running time on $n$ processors is bounded by $2^{(2\|A\|_\infty)^{\Oh(r(r+s))}}\cdot \log^{\Oh(rs)} n$ (\cref{thm:main-two}).

In our proof, we rely on the following two ingredients:
\begin{enumerate}[label=(\alph*),ref=(\alph*),leftmargin=*]
 \item\label{i:proximity} {\bf{Proximity:}} We prove that given a program $P$ in the form~\eqref{eq:intro}, for every optimal solution $x^\frc$ to the linear relaxation of $P$ there exists an optimal (integral) solution $x^\itg$ of $P$ such that $\|x^\itg-x^\frc\|_\infty$ is bounded by a computable function of $\tdP(A)$ and $\|A\|_\infty$. (\cref{lem:proximity-multi})
 \item\label{i:relaxation} {\bf{Relaxation:}} We prove that linear relaxations of programs in the form~\eqref{eq:intro} can be solved in time $\log^{\Oh(2^d)} n$ on~$n$ processors in the PRAM model, where $d\coloneqq\tdP(A)$ and a suitable elimination forest is given. (\cref{lem:relaxation-multi})
\end{enumerate}
Ingredients~\ref{i:proximity} and~\ref{i:relaxation} can be then combined into a standard treedepth-based branching algorithm, which recursively guesses the value of the first (root) variable and splits the problem into independent subproblems of smaller depth. The value of the first variable is guessed from $\{\,\xi\in \Z_{\geq 0}\ |\ |\xi-x^\frc_1|\leq \rho\,\}$, where $x^\frc_1$ is the first entry of an optimal solution to the linear relaxation (computed using~\ref{i:relaxation}) and $\rho$ is the proximity bound provided by~\ref{i:proximity}. The key point is that $\rho$ is bounded by a function of $\tdP(A)$ and $\|A\|_\infty$ only, which allows us to choose the value of the first variable from a set of bounded size.

Ingredient~\ref{i:proximity} presents the main conceptual contribution of this work, and we believe that it uncovers a fundamental property of multistage stochastic integer programs. The proof relies on upper bounds on the $\ell_\infty$-norms of Graver basis elements for multistage stochastic matrices, and in fact our proximity result can be considered a significant generalization of these bounds.
More precisely, the crucial ingredient is a strengthening of a structural result of Klein~\cite{Klein20}, which allows us to  bound the $\ell_\infty$-norm of the projection of Graver-basis elements to the space of stochastic variables (i.e. variables in the first block of columns).

We remark that besides strengthening the formulation, we actually present a new proof of the structural result of Klein, which also yields improved bounds. The impact of the improvement can be illustrated for the case  of integer programs with $(r,s)$-stochastic matrices. The currently most efficient strong fpt algorithm for solving such integer programs, presented in~\cite{arxiv-IP}, is based on the approach of Klein~\cite{Klein20} and runs in time $2^{(2\|A\|_\infty)^{\Oh(r^2 s + rs^2)}} \cdot n^{\Oh(1)}$. On the other hand, our algorithm for this problem runs in time $2^{(2\|A\|_\infty)^{\Oh(r(r+s))}}\cdot n\log^{\Oh(rs)} n$, hence we obtain an improvement not only in the polynomial factor in $n$ but also in the parametric dependence on $r$ and~$s$. Importantly, for $r=1$ this parametric dependence is asymptotically tight: as Jansen et al.~\cite{JansenKL20} have shown, it would contradict the Exponential Time Hypothesis if there existed a $2^{2^{o(s)}}\cdot n^{\Oh(1)}$-time algorithm for solving integer programs whose matrices are $(1,s)$-stochastic and have all coefficients bounded by a constant in absolute values.

Let us stress that our proximity bound~\ref{i:proximity} requires a different proof using completely different tools than the one obtained for tree-fold integer programs by Cslovjecsek et al.~\cite{CslovjecsekEHRW20}. Note also that contrary to~\cite{CslovjecsekEHRW20}, our proximity result concerns the standard linear relaxation.

For ingredient~\ref{i:relaxation}, a direct application of linear programming duality reduces finding optimum values of linear programs with bounded primal treedepth to the case of bounded dual treedepth, which can be treated using the results of Cslovjecsek et al.~\cite{CslovjecsekEHRW20}. However, this only applies to finding the optimum {\em{value}} for a program, and not recovering an example optimal solution, which is what we need for branching. In a nutshell, to overcome this we use complementary slackness conditions to show that an optimal solution to the original problem can be recovered using calls to a solver for the dual. While conceptually natural, this is surprisingly non-trivial on the technical level.

\paragraph*{Organization of the paper.} After giving preliminaries in \cref{sec:prelims}, we outline our algorithms in \cref{sec:algorithm}, deferring the implementation of the necessary ingredients to the later sections. In \cref{sec:stronger-klein-bound} we present the new, stronger proof of the structural result of Klein~\cite{Klein20}, while in \cref{sec:proximity} we use it to establish the proximity results, that is, ingredient~\ref{i:proximity}.  Solving the linear relaxation, that is, ingredient~\ref{i:relaxation}, is discussed in \cref{sec:lp}.

%%% Local Variables:
%%% mode: latex
%%% TeX-master: "main"
%%% End:

\section{Preliminaries}\label{sec:prelims}

\paragraph*{Model of computation.} We assume a real RAM model of computation, where each memory cell stores a real number (of arbitrary bitlength and precision) and arithmetic operations (including rounding) are assumed to be of unit cost. 

As we will most often work with {\em{sparse matrices}}, that is, matrices where the number of non-zero entries is significantly smaller than the product of the dimensions, we assume that matrices are given as a list consisting of their non-zero entries. Each such entry is given by specifying the row and the column index together with the value of the entry, stored in a single memory cell. Note that we do not take into account the length of the bit encodings of the entries, hence all polynomial factors in our algorithms are in fact strongly polynomial.

%For a matrix $A$, by $\size{A}$ we denote the number of non-zero entries in $A$, which is then the size of the encoding of $A$. Note that, for instance, if $A$ has $n$ rows, each containing at most $\Delta$ non-zero entries, then $\size{A}\leq \Delta n$; this will be the usual case in our algorithms.

\paragraph*{PRAMs.} For the model of parallel computation, we use the CRCW (concurrent read, concurrent write) model of PRAM, where upon concurrent write attempts to the same memory cell, the processor with the smallest index is the one that succeeds.

\paragraph*{(Integer) linear programming.}
We consider (integer) linear programs of the following form:
\begin{gather}
 \min c^\trans x\nonumber \\
 Ax = b \tag{\ilpform}\label{eq:ilp-eq-form}\\
 x\geq 0\nonumber
\end{gather}
Here:
\begin{itemize}[nosep]
 \item $x$ is a vector of $m$ variables;
 \item $A$ is an integer matrix with $m$ columns and $n$ rows; and
 \item $b$ and $c$ are integer vectors of length $m$ and $n$, respectively.
\end{itemize}
The matrix $A$ is called the {\em{constraint matrix}} and $c$ is the {\em{optimization goal}}. Formally, a {\em{linear program}} in the form~\eqref{eq:ilp-eq-form} is a $4$-tuple $P=(x,A,b,c)$ as above. 

For a linear program $P=(x,A,b,c)$ in the form~\eqref{eq:ilp-eq-form}, we denote by $\sol^\R(P)$ and $\sol^\Z(P)$ the sets of fractional and integral solutions to $P$, respectively. That is, $\sol^\R(P)$ the polytope consisting of $x\in \R^n_{\geq 0}$ satisfying $Ax=b$, while $\sol^\Z(P)$ comprises all integer vectors in $\sol^\R(P)$. Further, we define $\opt^\R(P)$ and $\opt^\Z(P)$ as the optimum values of fractional and integral solutions to $P$, respectively. That is, 
$$\opt^\R(P)=\inf\{\,c^\trans x\ \colon\ x\in \sol^\R(P)\,\}\qquad\textrm{and}\qquad\opt^\Z(P)=\inf \{\,c^\trans x\ \colon\ x\in \sol^\Z(P)\,\}.$$
A vector $x\in \sol^\R(P)$ is an {\em{optimal fractional solution}} to $P$ if $c^\trans x=\opt^\R(P)$. \emph{Optimal integral solutions} are defined analogously.

This notation is extended to linear programs in other forms in the obvious way.

\paragraph*{Stochastic matrices.}
A matrix $M$ is {\em{block-decomposable}} if it can be presented as $M=\begin{pmatrix} M' & \\ & M''\end{pmatrix}$, where $M',M''$ are blocks, each containing at least one column or at least one row. Note that we allow corner cases where, for instance, $M''$ has no columns, which means that $M$ has as many all-zero rows at the bottom as $M''$ has rows. The {\em{block decomposition}} of $M$ is the unique presentation of $M$ as $$M=\begin{pmatrix} M_1 & & \\ & M_2 & \\ & & \ddots & \\ & & & M_t\end{pmatrix},$$ where blocks $M_1,\ldots,M_t$ are not block-decomposable.

For nonnegative integers $r$ and $s$, a matrix $A$ is \emph{$(r,s)$-stochastic} if the following condition holds: if $A'$ is $A$ with the first $r$ columns removed, then each block in the block decomposition of $A'$ has at most $s$ columns. Equivalently, an $(r,s)$-stochastic matrix can be written as
\begin{equation}\tag{\stochform}\label{eq:stochastic}
A=\begin{pmatrix} A_1 & B_1 & & & \\ A_2 & & B_2 & & \\ \vdots & & & \ddots & \\ A_t & & & & B_t\end{pmatrix},\end{equation}
where the blocks $A_1,\ldots,A_t$ have $r$ columns and each block $B_i$ has at most $s$ columns. 
As usual, in~\eqref{eq:stochastic} and throughout the paper, empty spaces denote blocks filled with zeros. In general, a presentation of matrix $A$ as in~\eqref{eq:stochastic} shall be called a {\em{stochastic decomposition}} of $A$.

The {\em{primal treedepth}} of a matrix $A$, denoted $\tdP(A)$, can be defined in multiple equivalent ways. For us, it will be convenient to rely on a recursive approach. First, we recursively define the {\em{depth}} of $A$, denoted $\depth(A)$:
\begin{itemize}
\item if $A$ has no columns, then its depth is $0$;
\item if $A$ is block-decomposable, then its depth is equal to the maximum among the depths of the blocks in its block decomposition; and
\item if $A$ has at least one column and is not block-decomposable, then the depth of $A$ is one larger than the depth of the matrix obtained from $A$ by removing its first column.
\end{itemize}
Note that the last condition is equivalent to admitting a stochastic decomposition~\eqref{eq:stochastic} where blocks $A_1,\ldots,A_t$ have one column, while blocks $B_1,\ldots,B_t$ have strictly smaller depth.
We note that, by a straightforward induction, in a matrix of depth $d$ every row contains at most $d$ non-zero entries. 

The {\em{primal treedepth}} of a matrix $A$ is the least $d$ with the following property: the rows and columns of $A$ can be permuted so that after applying the permutation, $A$ has depth $d$.

\medskip

For the sake of future applications, we now observe that the block partition can be efficiently computed in the PRAM model.

\begin{lemma}\label{lem:block-partition}
 Suppose we are given a matrix $M$ with $n$ rows, in which every row contains at most $\Delta$ non-zero entries. Then in the PRAM model, one can using $n$ processors and in time $\Oh(\Delta\log (\Delta n))$ compute the blocks $M_1,\ldots,M_t$ of the block partition of $M$.
\end{lemma}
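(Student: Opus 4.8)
The plan is to reduce the problem to computing connected components of an auxiliary graph $H$ on the vertex set $\{r_1,\ldots,r_n\}\cup\{c_1,\ldots,c_m\}$ of rows and columns of $M$, where we put an edge between $r_i$ and $c_j$ whenever the entry $M_{i,j}$ is non-zero. The key observation is that two columns lie in the same block of the block partition exactly when the corresponding column-vertices lie in the same connected component of $H$ (and analogously for rows), because the block partition is precisely the finest partition of rows and columns that separates independent rows/columns. Thus, once the connected components of $H$ are known, the block $M_k$ is obtained by restricting $M$ to the rows and columns whose vertices fall in the $k$-th component; the index $t$ and the assignment of each row/column to its block can then be read off. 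Since $M$ is given as a list of its non-zero entries and every row has at most $\Delta$ non-zeros, the total number of non-zero entries, hence the number of edges of $H$, is at most $\Delta n$; however, I should be careful here, since the number of columns $m$ is not a priori bounded by $n$ — I will handle this by only ever touching columns that actually appear in some non-zero entry (columns that are entirely zero form singleton blocks, or rather contribute only to all-zero rows/columns handled by the corner cases in the definition of block-decomposability), so effectively $H$ has $\Oh(n+\Delta n)=\Oh(\Delta n)$ vertices and $\Oh(\Delta n)$ edges.

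The main step is therefore to compute connected components of a graph with $N\coloneqq\Oh(\Delta n)$ vertices and $\Oh(\Delta n)$ edges in time $\Oh(\Delta\log(\Delta n))$ using $n$ processors. For this I would invoke a standard parallel connected-components routine: pointer-jumping / hook-and-contract in the style of Shiloach--Vishkin or Hirschberg--Chandra--Sarwate. Such algorithms run in $\Oh(\log N)$ rounds, each round doing $\Oh(N)$ work, on the CRCW PRAM. With $n$ processors we can schedule the $\Oh(\Delta n)$ units of work of one round in $\Oh(\Delta)$ parallel steps (Brent's scheduling), and there are $\Oh(\log N)=\Oh(\log(\Delta n))$ rounds, giving the claimed $\Oh(\Delta\log(\Delta n))$ bound. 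Concurrent writes are resolved by the minimum-index rule of our CRCW model, which is exactly what the hooking step of Shiloach--Vishkin needs (it elects, e.g., the smallest-indexed neighbor as the new parent). Reading the input list of non-zero entries to build the adjacency information, sorting/bucketing the entries by row and by column so that each processor can process a contiguous chunk, and finally outputting, for every row and column, the identifier of its component (compressed to $1,\ldots,t$ by a parallel prefix-sum over the sorted list of distinct component roots) are all doable within the same $\Oh(\Delta\log(\Delta n))$ budget using $n$ processors, since each is either a constant number of parallel sorts of an $\Oh(\Delta n)$-element array (sorting: $\Oh(\log(\Delta n))$ time with $\Delta n$ processors, hence $\Oh(\Delta\log(\Delta n))$ time with $n$ processors) or a prefix computation of the same size.

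The main obstacle I anticipate is bookkeeping rather than conceptual: making sure the reduction to connected components is set up so that the resulting blocks $M_1,\ldots,M_t$ are output in the precise normalized form required by the definition of the block decomposition — in particular correctly accounting for all-zero rows and all-zero columns, which by the stated conventions must be distributed among blocks (the all-zero rows go to the last block whose column set is determined by the non-trivial structure, and so on). Concretely, a zero row is an isolated vertex of $H$ and would naively become its own singleton block, which contradicts the unique normal form; I will need a short argument (or a small post-processing pass) arguing that merging the trivial components into the appropriate non-trivial block, or ordering the components consistently with a fixed total order on rows/columns, recovers exactly the canonical block decomposition. A secondary technicality is confirming that the generic parallel connected-components primitive indeed fits our specific CRCW tie-breaking rule and processor count; this is standard but should be stated carefully, citing the appropriate source, rather than re-derived.
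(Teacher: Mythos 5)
You take a genuinely different route from the paper. The paper's proof is short and elementary: it sorts the list $L$ of non-zero entries of $M$ lexicographically by (row index, column index) with a parallel mergesort in $\Oh(\Delta\log(\Delta n))$ time, then obtains the block boundaries by scanning $L$ for consecutive entries at which both coordinates strictly increase (plus a constant-time pass inserting all-zero blocks between consecutive sublists where needed). No auxiliary graph is ever built and no connectivity primitive is invoked; the approach works directly with the given row/column order.

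There is, however, a genuine gap in your argument. The stated key observation --- that two columns belong to the same block of the block decomposition if and only if the corresponding vertices of the bipartite incidence graph $H$ lie in the same connected component, because the block decomposition is ``the finest partition of rows and columns that separates independent rows/columns'' --- is false under the definition used in this paper. Here a matrix is block-decomposable only if it can \emph{literally} be written as $\begin{pmatrix}M' & \\ & M''\end{pmatrix}$, i.e., via a \emph{contiguous} two-way split of the rows and of the columns in their given order; no permutation is allowed (contrast with primal treedepth, which is defined as a minimum over permutations and is treated separately). Consequently, the connected components of $H$ only \emph{refine} the block decomposition: every component sits inside a single block, but a matrix that is not block-decomposable can still have several components. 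A minimal example is
\[
M=\begin{pmatrix}0 & 1\\ 1 & 0\end{pmatrix},
\]
which is not block-decomposable (each contiguous split leaves a non-zero entry off the diagonal), yet $H$ has two connected components, $\{r_1,c_2\}$ and $\{r_2,c_1\}$; your routine would report $t=2$, whereas the correct block decomposition has $t=1$. This is a different and more fundamental issue than the all-zero-row/column bookkeeping you flag at the end: it occurs whenever components interleave in the given row and column orders, and fixing it would require an additional pass that merges components whose row (or column) index intervals overlap --- a step absent from your proposal. The paper's sort-and-scan approach avoids the problem by construction, since it only ever considers splits that respect the given ordering.
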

\begin{proof}
 Recall that $M$ is given as a list $L$ consisting of all non-zero entries, which has length at most $\Delta n$. We think of each entry as a pair consisting of the row index and the column index. First, we sort $L$ in the lexicographic order on those pairs. This takes time $\Oh(\Delta\log (n\Delta))$ using a standard PRAM implementation of mergesort. Then, we partition the list $L$ into sublists $L_1,\ldots,L_s$ by splitting it between every pair of consecutive entries $(i,j)$ and $(i',j')$ satisfying $i<i'$ and $j<j'$. It is easy to see that thus, the lists $L_1,\ldots,L_s$ exactly contain the non-zero entries of all the non-zero blocks of the block partition of~$M$. Also, splitting $L$ into $L_1,\ldots,L_s$ can be done in time $\Oh(\Delta)$, as each of $n$ processors gets a batch of at most $\Delta$ consecutive entries to check for the necessity of a split. Finally, in constant time we can inspect every pair of consecutive lists $L_i,L_{i+1}$ to check whether an all-zero block should be added between them.
\end{proof}

Observe that using \cref{lem:block-partition}, we can check in time $\Oh((r+s)\log ((r+s)n))$ using $n$ processors whether the input matrix is $(r,s)$-stochastic. Similarly, verifying whether the input matrix has depth at most $d$ can be done in time $\Oh(d^2\log (dn))$ (here, \cref{lem:block-partition} needs to be applied recursively on every~block).

\medskip

Note that in the definition of primal treedepth, we allow that the matrix takes the specified form only after applying a permutation of the columns. In many applications such a permutation can be easily inferred from the construction of the program, but we now show that it can be also computed in linear fpt time without any further assumptions. This also applies to the setting of $(r,s)$-stochastic matrices. In both cases, the following definition will be useful: the {\em{primal graph}} of a matrix $A$, denoted $\GP(A)$, is the graph whose vertex set consists of the columns of $A$, where two columns are considered adjacent if they contain non-zero entries in the same row.

\begin{lemma}
 Given a matrix $A$ with $n$ rows and integer $d$, one can in time $2^{\Oh(d^2)}\cdot n$ either conclude that $\tdP(A)>d$, or compute a permutation of the rows and columns of $A$ for which $A$ has depth at most $d$.
\end{lemma}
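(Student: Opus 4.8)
The plan is to reduce the task to computing the treedepth of the primal graph $\GP(A)$ and then to invoke a known parameterized algorithm for treedepth; the key to the running time is to arrange that this graph has size $\Oh(d^2 n)$, so that a treedepth algorithm running in time $2^{\Oh(d^2)}\cdot(\text{size})$ already yields the claimed bound.

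First I would dispose of the case $d=0$ (then $\tdP(A)\le 0$ holds iff $A$ has no columns) and perform a cheap sanity check: reading the first $dn+1$ non-zero entries of $A$, if some row is found to have more than $d$ non-zero entries I may immediately report $\tdP(A)>d$, since the number of non-zero entries in a row is invariant under permutations of the rows and columns, whereas a matrix of depth at most $d$ has at most $d$ non-zero entries per row (\cref{sec:prelims}); this also guarantees that after the check $A$ has at most $dn$ non-zero entries, hence at most $dn$ non-zero columns. I would then call a column \emph{trivial} if it is isolated in $\GP(A)$, that is, all-zero or such that every row in which it is non-zero has no other non-zero entry — the trivial columns are identified in time $\Oh(dn)$ — and let $H$ be the graph $\GP(A)$ with all trivial columns deleted and its remaining vertices relabelled $1,\dots,m'$ for some $m'\le dn$. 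Then $H$ has at most $dn$ vertices and, since each of the $n$ rows induces a clique on at most $d$ of them, at most $\binom{d}{2}\,n$ edges, so an adjacency-list representation of $H$ can be built in time $\Oh(d^2 n)$.

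Next I would observe that $\tdP(A)\le d$ if and only if the treedepth of $H$ is at most $d$: a trivial column can always be placed as a singleton component of depth $1\le d$ in an elimination forest, so, using the correspondence between elimination forests of $A$ of depth at most $d$ and treedepth-$d$ decompositions of $\GP(A)$ recalled in \cref{sec:intro} together with monotonicity of treedepth under taking subgraphs, the two statements are equivalent. At this point I would invoke the known fact that the treedepth of an $N$-vertex graph, together with a witnessing optimal elimination forest, can be computed in time $2^{\Oh(d^2)}\cdot N$ (for instance by the algorithm of Reidl, Rossmanith, S\'anchez Villaamil and Sikdar). Running this on $H$, with $N\le dn$, either certifies that $H$ — and hence $A$ — has treedepth larger than $d$, or returns an elimination forest $F_0$ of $H$ of depth at most $d$.

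In the latter case I would reconstruct the permutations from $F_0$. Appending every trivial column to $F_0$ as its own one-vertex tree yields an elimination forest $F$ of $A$ of depth at most $d$; then, exactly as outlined in \cref{sec:intro}, ordering the columns of $A$ by a depth-first preorder of $F$ and permuting the rows to expose the block structure gives a form of depth at most $d$. Concretely, every non-all-zero row has all of its non-zero entries in columns that are pairwise comparable in $F$, hence lying on a single root-to-leaf path and in a single tree of $F$, so one groups the rows by the tree they belong to (all-zero rows last) and recurses along the subtrees below each root; a routine induction on $d$ from the recursive definition of depth then confirms that the permuted matrix has depth at most $d$, and given $F$ this reconstruction costs $\Oh(dn)$ time, for an overall running time of $2^{\Oh(d^2)}\cdot n$. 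I expect the only real obstacle to lie in this efficiency bookkeeping — verifying that discarding the trivial columns leaves a graph of size $\Oh(d^2 n)$, that its construction and the final conversion run in time linear in $n$, and recalling precisely the standard elimination-forest/low-depth-form correspondence; the remaining ingredients are either elementary or a black-box treedepth algorithm.
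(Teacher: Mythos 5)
Your proof is correct and follows essentially the same approach as the paper: identify $\tdP(A)$ with the treedepth of the primal graph $\GP(A)$ and apply the $2^{\Oh(d^2)}\cdot N$-time treedepth algorithm of Reidl et al., then read off the permutation from a pre-order traversal of the returned elimination forest. You are more meticulous than the paper about the low-level bookkeeping (the early degree check, the removal of isolated columns, and the explicit $\Oh(d^2 n)$ bound on the size of $\GP(A)$), but these details are implicit in the paper's appeal to ``the number of columns of $A$ can be bounded by $dn$,'' so the two arguments are substantively the same.
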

\begin{proof}
 As observed in previous works (see e.g.~\cite{KouteckyLO18,arxiv-IP}), the primal treedepth of $A$ coincides with the (graph-theoretic) treedepth of its primal graph $\GP(A)$. In~\cite{ReidlRVS14}, Reidl et al. gave an $2^{\Oh(d^2)}\cdot N$-time algorithm to test whether a given $N$-vertex graph has treedepth at most $d$; note that in our case, $N$ is the number of columns of $A$ which can be bounded by $dn$, so the running time is $2^{\Oh(d^2)}\cdot n$. If the algorithm reports a positive outcome, it also returns an {\em{elimination forest}} of depth at most $d$, which witnesses the value of the treedepth. Hence, we can apply the algorithm of Reidl et al. to $\GP(A)$. If the algorithm returns an elimination forest, then from any pre-order traversal of this forest it is easy to obtain a suitable permutation of the rows and columns of $A$ witnessing that $\depth(A)\leq d$.
\end{proof}

\begin{lemma}
 Given a matrix $A$ with $n$ rows and integers $r,s$, one can in time $2^{\Oh(r\log s)}\cdot n$ either find a permutation of the rows and columns of $A$ in which $A$ becomes $(r,s)$-stochastic, or conclude that no such permutation exists.
\end{lemma}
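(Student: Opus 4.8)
The plan is to reduce the task to a vertex-deletion problem on the primal graph $\GP(A)$ and solve that problem by bounded-depth branching. First I would prove the following characterisation. Disregard the trivial corner case $s=0$ (there $A$ is permutable to $(r,s)$-stochastic form precisely when it has at most $r$ columns in total, since every block $B_i$ must then be empty) and assume $A$ has at least $r$ columns (otherwise $A$ is already $(r,s)$-stochastic, as removing its first $r$ columns leaves no columns). Then $A$ can be permuted into the form~\eqref{eq:stochastic} if and only if there is a set $W$ of at most $r$ columns of $A$ such that every connected component of $\GP(A)-W$ has at most $s$ vertices. The forward implication is immediate: take $W$ to be the first $r$ columns and observe that every edge of $\GP(A)-W$ then joins two columns inside a single block $B_i$, so each component of $\GP(A)-W$ has at most $|B_i|\le s$ columns. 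For the converse, given such a $W$ (padded by arbitrary further columns to have exactly $r$ of them), order the columns as: $W$ first, then the columns of the successive connected components of $\GP(A)-W$, each component forming one block, with all-zero columns placed into singleton blocks at the end; then permute the rows so that each row is grouped with the block containing its non-global non-zero entries — these all lie in one component of $\GP(A)-W$, since otherwise two of them would be joined by an edge of $\GP(A)-W$ — placing all-zero rows arbitrarily. A direct check shows the resulting matrix has the shape~\eqref{eq:stochastic}.

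Next I would bound the size of $\GP(A)$. If some row of $A$ has more than $r+s$ non-zero entries, then its support is a clique of size larger than $r+s$ in $\GP(A)$; deleting any $r$ vertices leaves a clique of size larger than $s$, so we immediately report that no valid permutation exists. Otherwise $A$ has at most $(r+s)n$ non-zero entries, so after discarding all-zero columns (each an isolated vertex, safely placed in its own singleton block as above) the graph $G\coloneqq\GP(A)$ has at most $(r+s)n$ vertices and at most $\binom{r+s}{2}n$ edges, and it can be assembled as adjacency lists in time $\Oh((r+s)^2 n)$: bucket-sort the non-zeros by row to group each row's support, then add all within-row edges.

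Then I would solve the graph problem ``is there $W$ with $|W|\le r$ such that every component of $G-W$ has at most $s$ vertices?'' by the standard branching for bounding component sizes. Maintaining a budget $k$ starting at $r$: if every component of the current graph has at most $s$ vertices we succeed; if $k=0$ and some component is larger we fail on this branch; otherwise take a component $C$ with $|C|>s$, run a BFS inside it, and let $H$ be the first $s+1$ discovered vertices, which induce a connected subgraph. Every feasible $W$ must delete a vertex of $H$ — otherwise $H$ survives within one component of $G-W$, which then has at least $s+1>s$ vertices — so we branch over the $s+1$ choices of a vertex of $H$ to delete, decrementing $k$. The recursion has depth at most $r$ and branching factor $s+1$, hence at most $(s+1)^r$ leaves, and each node costs $\Oh(|V(G)|+|E(G)|)=\Oh((r+s)^2 n)$ to find components and build $H$; the total running time is $(s+1)^r\cdot\Oh((r+s)^2 n)=2^{\Oh(r\log s)}\cdot n$. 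A feasible $W$ found this way (padded to size exactly $r$) is turned back into the required row and column permutation in time $\Oh((r+s)^2 n)$ exactly as in the converse direction of the characterisation; if the branching finds no feasible $W$, we report that $A$ admits no $(r,s)$-stochastic permutation.

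The routine parts — the degree check, the bucket-sorted construction of $\GP(A)$, the BFS bookkeeping, and the final reconstruction — are straightforward. The two points demanding care are getting the graph-theoretic characterisation exactly right, in particular the treatment of all-zero rows and columns and of the distinction between ``at most $r$'' and ``exactly $r$'' global columns, and verifying that the $(r+s)^{\Oh(1)}$ overhead is absorbed by the claimed $2^{\Oh(r\log s)}$ factor; neither presents a real difficulty.
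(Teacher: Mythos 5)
Your proof is correct and follows the paper's overall strategy: both reduce the permutability question to the Component Order Connectivity problem on $\GP(A)$ (find $W$ with $|W|\le r$ so that every component of $\GP(A)-W$ has at most $s$ vertices) and then solve that graph problem. The genuine difference is in how that subroutine is handled: the paper cites the algorithm of Drange et al.~\cite{DrangeDH16} as a black box, whereas you re-derive a $2^{\Oh(r\log s)}\cdot n$ algorithm from scratch by the natural bounded-depth branching — locate an $(s+1)$-vertex connected subgraph $H$ inside an oversized component, branch over which vertex of $H$ to delete, with budget $r$ — and you justify both the branching rule (every feasible $W$ must hit $H$) and the $(s+1)^r\cdot\poly(r+s)\cdot n$ cost. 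Your route is self-contained and more elementary; the paper's citation buys brevity and any sharpening present in the cited work. You also spell out the equivalence between permutability and COC feasibility, including the handling of all-zero rows and columns and the distinction between exactly $r$ and at most $r$ global columns, which the paper states without argument but which do need the care you give them. One shared caveat, not a defect of your argument: the bound $2^{\Oh(r\log s)}$ is literally degenerate at $s=1$, where COC is Vertex Cover and the honest cost is $2^{\Oh(r)}$; this affects the paper's statement equally and is the usual convention of reading $\log s$ as $\log\max(2,s)$.
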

\begin{proof}
 In terms of the primal graph $\GP(A)$, the problem of verifying whether such a permutation exists boils down to the following: verify whether in $\GP(A)$ there exists a set $X$ of $r$ vertices so that every connected component of $\GP(A)-X$ contains at most $s$ vertices. This problem has been studied under the name {\sc{Component Order Connectivity}} by Drange et al.~\cite{DrangeDH16}, who gave an algorithm with running time $2^{\Oh(r\log s)}\cdot N$ on $N$-vertex graphs. Again, in our case $N$ is the number of columns of $A$ which can be assumed to be at most $(r+s)n$, so the running time is $2^{\Oh(r\log s)}\cdot n$. Hence, we may apply the algorithm of Drange et al. to $\GP(A)$. In case of a positive outcome, the algorithm provides a suitable set~$X$, so we can permute the rows and columns of $A$ so that the columns corresponding to $X$ are to the left and the rows and columns corresponding to connected components of $\GP(A)-X$ form blocks.
\end{proof}

Note that the above lemmas yield sequential fpt algorithms. Investigating analogous results in the setting of parallel parameterized algorithms is an interesting question, which however reaches beyond the scope of this work. Therefore, in our algorithms we will assume that the input matrix is already suitably organized, i.e. it has bounded depth or is $(r,s)$-stochastic.

\paragraph*{Graver bases.} The {\em{conformal (partial) order}} is defined as follows: for vectors $x,y\in \R^n$, we write $x\cleq y$ if for each $i\in \{1,\ldots,n\}$ we have $|x_i|\leq |y_i|$ and $x_iy_i\geq 0$, where $x_i$ and $y_i$ are the $i$th entries of $x$ and $y$, respectively. 

A vector or matrix is {\em{integer}} if all its entries are integers. For an integer matrix $A$, by $\ker^\Z (A) $ we denote the set of all integer vectors from $\ker (A)$. The {\em{Graver basis}} of $A$, denoted $\Graver(A)$, consists of all $\cleq$-minimal vectors of $\ker^\Z (A)$. It follows from Dickson's Lemma that $\Graver(A)$ is always finite, however we are interested in more precise bounds on the lengths of vectors in $\Graver(A)$. For this, for $p\in [1,\infty]$ we define the {\em{$\ell_p$ Graver complexity}} of $A$ as
$$g_p(A)\coloneqq \max_{v\in \Graver(A)} \|v\|_p.$$
We will use the following known bounds on the Graver complexity of matrices.

\begin{theorem}[Lemma~2 of~\cite{EisenbrandHK18}]\label{thm:graver-rows}
 For every integer matrix $A$ with $n$ rows,
 $$g_\infty(A) \le (2n \|A\|_\infty+1)^n.$$ 
\end{theorem}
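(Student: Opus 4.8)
The plan is to reduce the statement to a single claim about one Graver basis element and then apply the Steinitz lemma together with a pigeonhole argument. Concretely, set $M\coloneqq(2n\|A\|_\infty+1)^n$ and observe that $M$ is exactly the number of integer points in the box $B\coloneqq\{z\in\Z^n:\|z\|_\infty\le n\|A\|_\infty\}$ (each of the $n$ coordinates ranges over $2n\|A\|_\infty+1$ values). It suffices to show that any $v\in\ker^\Z(A)$ with $\|v\|_\infty>M$ admits a \emph{conformal decomposition} $v=v'+v''$ with $v',v''\in\ker^\Z(A)\setminus\{0\}$ and $v',v''\cleq v$: such a $v$ is not $\cleq$-minimal, hence is not in $\Graver(A)$, which yields $g_\infty(A)\le M$. (The degenerate cases $v=0$, $\Graver(A)=\emptyset$, and $n=0$ are immediate.)

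First I would expand $v$ into signed unit vectors. For each coordinate $j$, take $|v_j|$ copies of $\operatorname{sign}(v_j)\cdot e_j$, and let $s_1,\dots,s_N$ enumerate this multiset in an arbitrary order, so $\sum_{i=1}^N s_i=v$ with $N=\|v\|_1\ge\|v\|_\infty>M$. The key point, which I would make explicit, is that \emph{any} sub-multiset of $\{s_1,\dots,s_N\}$ sums to a vector $\cleq v$, because within a fixed coordinate all the $s_i$ carry the same sign, so no cancellation occurs and reordering is harmless. Now put $x_i\coloneqq As_i\in\R^n$; each $x_i$ is $\pm$ a column of $A$, hence $\|x_i\|_\infty\le\|A\|_\infty$, and $\sum_i x_i=Av=0$. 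Applying the Steinitz lemma (in the Grinberg--Sevastyanov form: for $y_1,\dots,y_N\in\R^n$ with $\|y_i\|_\infty\le 1$ and $\sum_i y_i=0$ there is a permutation whose prefix sums all have $\ell_\infty$-norm at most $n$), after rescaling by $\|A\|_\infty$ I obtain a permutation $\pi$ such that the prefix sums $P_k\coloneqq\sum_{i=1}^k x_{\pi(i)}$ satisfy $P_k\in B$ for all $k$. Since $P_1,\dots,P_N$ are $N>M=|B|$ elements of $B$, the pigeonhole principle gives indices $1\le a<b\le N$ with $P_a=P_b$. Setting $v'\coloneqq\sum_{i=a+1}^{b}s_{\pi(i)}$ we get $Av'=P_b-P_a=0$, so $v'\in\ker^\Z(A)$; moreover $v'$ is a sum of $b-a\ge 1$ of the conformal unit vectors, so $v'\ne 0$ and $v'\cleq v$, while $b-a\le N-1$ forces $v'\ne v$, whence $v''\coloneqq v-v'=\sum_{i\notin\{a+1,\dots,b\}}s_{\pi(i)}$ is also a nonzero element of $\ker^\Z(A)$ with $v''\cleq v$. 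This is the required conformal decomposition.

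The only genuinely external ingredient, and hence the one place to be careful, is the Steinitz lemma \emph{with the sharp constant $n$} rather than the weaker $2n$: it is precisely this constant that fixes the radius $n\|A\|_\infty$ of the box $B$ and therefore produces the exact bound $(2n\|A\|_\infty+1)^n$. Everything else is bookkeeping --- verifying that conformality to $v$ survives arbitrary reorderings of the unit-vector decomposition, and keeping the pigeonhole inequalities strict (using $P_1,\dots,P_N$ rather than $P_0,\dots,P_N$) so that the extracted middle segment $v'$ is simultaneously nonzero and a proper conformal part of $v$.
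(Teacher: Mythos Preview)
Your proof is correct and is precisely the argument from the cited reference~\cite{EisenbrandHK18}: decompose $v$ into conformal signed unit vectors, push them through $A$, apply the Steinitz lemma with the Grinberg--Sevastyanov constant $n$ to bound all prefix sums inside the box $B$, and use pigeonhole to extract a nontrivial conformal kernel sub-sum. Note that the present paper does not give its own proof of this statement --- it is quoted as a black box from~\cite{EisenbrandHK18} --- so there is nothing to compare against here beyond observing that you have reproduced the original proof faithfully, including the care about the sharp Steinitz constant and the bookkeeping that keeps $v'$ and $v''$ both nonzero.
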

Note that the Graver basis of a matrix $A$ is completely described by $\ker(A)$, so without changing $\Graver(A)$ we may restrict the rows of $A$ to a maximal linearly independent subset. Then the number of columns $m$ is not smaller than the number of rows $n$. Hence we can derive the following bound that is independent of the number of rows in $A$.

\begin{corollary}\label{cor:graver-columns}
 For every integer matrix $A$ with $m$ columns,
$$g_\infty(A) \le (2m \|A\|_\infty+1)^m.$$
\end{corollary}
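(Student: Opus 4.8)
The plan is to reduce to \cref{thm:graver-rows} by first throwing away redundant rows. The key observation is that $\Graver(A)$ is determined entirely by $\ker(A)$: the conformal order and the notion of $\cleq$-minimality only refer to the underlying lattice $\ker^\Z(A)$. Hence, if I replace the rows of $A$ by a maximal linearly independent subset of them, obtaining a submatrix $A'$, then $\ker(A')=\ker(A)$, so $\Graver(A')=\Graver(A)$ and in particular $g_\infty(A')=g_\infty(A)$.

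Next I would record two elementary facts about $A'$. First, the number of rows of $A'$ equals $\operatorname{rank}(A)$, which is at most $m$, the number of columns (this is unchanged, since I only deleted rows). Second, $\|A'\|_\infty\le\|A\|_\infty$, as $A'$ is a submatrix of $A$. Now I apply \cref{thm:graver-rows} to $A'$, which has $n'\coloneqq\operatorname{rank}(A)\le m$ rows, to get
\[
g_\infty(A)=g_\infty(A')\le (2n'\|A'\|_\infty+1)^{n'}\le (2n'\|A\|_\infty+1)^{n'}.
\]
Finally I invoke monotonicity: the function $t\mapsto (2t\|A\|_\infty+1)^t$ is non-decreasing for $t\ge 0$ (differentiating $t\log(2t\|A\|_\infty+1)$ gives a sum of nonnegative terms), so since $n'\le m$ we conclude $g_\infty(A)\le (2m\|A\|_\infty+1)^m$.

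There is no real obstacle here; the statement is a routine consequence of \cref{thm:graver-rows}. The only points that need a moment's care are the degenerate cases --- when $A$ has no columns ($\Graver(A)=\emptyset$, and the bound reads $1^0=1\ge 0$) and when $\|A\|_\infty=0$ (then $\ker^\Z(A)=\Z^m$, $g_\infty(A)=1$, and the bound reads $1^m=1$) --- together with the monotonicity claim used in the last line, which should be stated explicitly rather than left implicit.
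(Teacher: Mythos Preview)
Your proposal is correct and follows essentially the same approach as the paper: both restrict the rows of $A$ to a maximal linearly independent subset (so that the number of rows is at most $m$ while $\ker(A)$ and hence $\Graver(A)$ are unchanged) and then apply \cref{thm:graver-rows}. Your write-up is a bit more explicit about the monotonicity step and degenerate cases, but the argument is the same.
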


We will also use the more general bounds for matrices with bounded primal treedepth.

\begin{theorem}[Lemma~26 of~\cite{arxiv-IP}]\label{thm:graver-td}
 There is a computable function $f\colon \N\times \N\to \N$ such that for every integer matrix $A$,
 $$g_{\infty}(A)\leq f(\tdP(A),\|A\|_\infty).$$
\end{theorem}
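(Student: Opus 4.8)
The plan is to prove the bound by induction on $d\coloneqq\tdP(A)$, which we may assume to equal $\depth(A)$ after permuting the rows and columns of $A$. The base case $d\le 1$ is immediate: then $A$ is a block-diagonal arrangement of single columns, so $\Graver(A)$ is empty or consists of a coordinate vector and its negative. For the inductive step I would first reduce to the case where $A$ is not block-decomposable: if $A$ has block decomposition with diagonal blocks $A_1,\dots,A_t$, then $\Graver(A)$ is exactly the union of the zero-paddings of $\Graver(A_1),\dots,\Graver(A_t)$, so $g_\infty(A)=\max_j g_\infty(A_j)$, and every $A_j$ is non-block-decomposable of depth at most $d$. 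So assume $A$ is non-block-decomposable of depth $d$ and write it in the stochastic form with a single global column, split as $a_1,\dots,a_t$ over the row-blocks, and with diagonal blocks $B_1,\dots,B_t$, each of depth at most $d-1$ and satisfying $\|B_i\|_\infty\le\|A\|_\infty$. By the induction hypothesis, $\gamma\coloneqq\max_i g_\infty(B_i)$ is bounded by a computable function of $d-1$ and $\|A\|_\infty$.

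Now fix $v=(v_0,v_1,\dots,v_t)\in\Graver(A)$, where $v_0\in\Z$ is the value of the single global variable and $v_i$ is the part of $v$ living in $B_i$; thus $B_iv_i=-a_iv_0$ for every $i$. We must bound $|v_0|$ and each $\|v_i\|_\infty$ by a computable function of $d$ and $\|A\|_\infty$. For the blocks, the key observation is that, unless $v$ is simply a zero-padding of a Graver element of some $B_i$ (in which case $\|v\|_\infty\le\gamma$ and we are done), each $v_i$ is necessarily $\cleq$-minimal in its fiber $\{\,y:B_iy=-a_iv_0\,\}$: otherwise, replacing the $i$-th block of $v$ by a strictly smaller solution of the same fiber yields a nonzero element of $\ker^\Z(A)$ that is $\cleq v$ and distinct from $v$, contradicting $v\in\Graver(A)$. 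Consequently, once $|v_0|$ is under control, $\|v_i\|_\infty$ can be bounded by means of a companion statement established by the same induction on depth: for every matrix $B$ of depth at most $d'$ and every integer vector $c$, each $\cleq$-minimal solution of $By=c$ has $\ell_\infty$-norm bounded by a computable function of $d'$, $\|B\|_\infty$ and $\|c\|_\infty$. (The bounded-depth hypothesis is genuinely needed here: without it, minimal fiber solutions of a system with small right-hand side can have unboundedly large norm --- e.g. for bidiagonal matrices --- which is precisely why this bound cannot be read off from $\gamma$ alone and must be threaded through the induction.) Applying the companion statement with $c=-a_iv_0$, for which $\|c\|_\infty\le\|A\|_\infty\,|v_0|$, bounds $\|v_i\|_\infty$; combined with the bound on $|v_0|$ this bounds $\|v\|_\infty$, and unrolling the resulting recursion of the form $f(d,a)=h\big(a,f(d-1,a)\big)$ down to the base case yields a computable --- indeed $d$-fold exponential --- bound, which we take as $f$.

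The main obstacle, and the technical heart of the argument, is the remaining step: bounding the global coordinate $|v_0|$ dimension-independently. This is exactly the content of the structural theorem of Klein~\cite{Klein20} (whose quantitative strengthening we in fact reprove in \cref{sec:stronger-klein-bound}), so in a self-contained writeup I would invoke that; the companion statement above requires the same argument for fibers rather than kernels. The idea is a proof by contradiction: if $|v_0|$ exceeds a threshold depending only on $d$, $\|A\|_\infty$ and $\gamma$, one constructs a value $v_0'$ with $0<|v_0'|<|v_0|$, of the same sign as $v_0$, together with solutions $v_i'\cleq v_i$ of $B_iy=-a_iv_0'$ for every $i$; then $v'\coloneqq(v_0',v_1',\dots,v_t')$ is a nonzero element of $\ker^\Z(A)$ with $v'\cleq v$ and $v'\ne v$, contradicting $\cleq$-minimality of $v$. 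To build $v_0'$ one decomposes, inside each block $i$, the vector $v_i$ conformally into $\cleq$-minimal pieces of the local system $B_i$; each such piece solves $B_iy=-a_i\delta$ for a fixed integral ``demand'' $\delta$, its norm is controlled using the bounded-depth hypothesis on $B_i$, and hence only boundedly many ``types'' of pieces (a number depending on $d$, $\|A\|_\infty$ and $\gamma$) can occur across all the blocks. A Dickson's-lemma / pigeonhole argument on the multiset of pieces then shows that once $|v_0|$ is large relative to the number of types, the demands can be rebalanced simultaneously in every block so as to realise a strictly smaller admissible global value $v_0'$, giving the desired contradiction. This pigeonhole-over-types step, together with the dimension-independent control of the piece norms, is the delicate part; the remainder is the bookkeeping of the induction described above. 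Finally, for $(r,s)$-stochastic matrices one runs the same scheme with $r$ global columns (so that $v_0\in\Z^r$) and with $\gamma$ bounded by $(2s\|A\|_\infty+1)^s$ via \cref{cor:graver-columns}, which produces the explicit quantitative bounds mentioned earlier.
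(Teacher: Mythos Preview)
This theorem is quoted from~\cite{arxiv-IP} and is not proved in the present paper, so there is no proof here to compare against directly. Your outline is the standard Klein-based approach used in~\cite{arxiv-IP} and is correct at the strategic level --- in particular, your observation that a companion statement about $\cleq$-minimal fiber solutions must be carried along in the induction is exactly right, and your identification of the Klein step as the technical heart is accurate.

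There is, however, one point where your sketch is too loose and hides a genuine difficulty. In the step bounding $|v_0|$, you decompose (implicitly) $(v_0,v_i)$ into ``pieces'' each solving $B_i y=-a_i\delta$ for some demand $\delta$, and then assert that ``only boundedly many types of pieces'' can occur. For Klein's lemma (\cref{thm:klein_lemma_klein}) to apply, you need the demands $\delta$ themselves to be bounded --- this is the parameter $\Delta$ there --- but your companion statement on $B_i$ only bounds $\|g\|_\infty$ \emph{in terms of} $|\delta|$, not $|\delta|$ itself. The natural choice of pieces is the conformal Graver decomposition of $(v_0,v_i)$ inside $\ker^\Z\big((a_i\ B_i)\big)$, whose first coordinates are bounded by $g_\infty\big((a_i\ B_i)\big)$; but $(a_i\ B_i)$ can still have depth $d$, so a one-parameter induction on $\depth(A)$ does not supply this bound. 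The clean fix --- which is how the argument is actually organised in~\cite{arxiv-IP}, and how the parallel induction for proximity is set up in the proof of \cref{cor:proximity-multi} here --- is to induct on pairs $(k,\ell)$ with the meaning ``removing the first $k$ columns leaves depth at most $\ell$''. One Klein step then passes from $(k,\ell)$ to $(k+1,\ell-1)$, because each $D_i=(A_i\ B_i)$ satisfies this with parameters $(k+1,\ell-1)$; the recursion terminates at $\ell=0$, where the matrix has at most $d$ columns and \cref{cor:graver-columns} finishes. With this repaired induction scheme your plan goes through and recovers the $d$-fold exponential bound, as well as the explicit $(r,s)$-stochastic bound you mention at the end.
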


We note that the proof of \cref{thm:graver-td} given by Eisenbrand et al.~\cite{arxiv-IP} shows that, roughly speaking, $g_{\infty}(A)$ is bounded by a $d$-fold exponential function of $\|A\|_\infty$, where $d$ is the primal treedepth of $A$.

\section{Algorithms}\label{sec:algorithm}

As discussed in \cref{sec:intro}, our algorithms for stochastic integer programming follow from a combination of two ingredients: proximity results for stochastic integer programs, and algorithms for solving their linear relaxations. These ingredients will be proved in the subsequent sections, while in this section we state them formally and argue how the results claimed in \cref{sec:intro} follow.

As for proximity, we show that in stochastic integer programs, for every optimal fractional solution there is always an optimal integral solution that is not far, in terms of the $\ell_\infty$-norm. Precisely, the following results will be proved in \cref{sec:proximity}.

\begin{lemma}\label{lem:proximity-multi}
 There exists a computable function $f\colon \N\times \N\to \N$ with the following property.
 Suppose $P=(x,A,b,c)$ is a linear program in the form~\eqref{eq:ilp-eq-form}. Then for every optimal fractional solution $x^\frc \in\sol^\R(P)$ there exists an optimal integral solution $x^\itg \in \sol^\Z(P)$ satisfying
 $$\|x^\itg-x^\frc\|_\infty\leq f(\depth(A),\|A\|_\infty).$$
\end{lemma}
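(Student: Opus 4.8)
The strategy is the classical proximity argument for integer programs, but carried out with Graver complexity bounds that account for the recursive block structure. Let $x^\frc$ be an optimal fractional solution and let $x^\itg$ be an optimal integral solution minimizing $\|x^\itg - x^\frc\|_1$ (ties broken arbitrarily). Consider the vector $y \coloneqq x^\itg - x^\frc$. Although $y$ is not integral, the idea is to decompose the ``cycle'' connecting the two solutions into conformal pieces, some of which are integral Graver elements and one of which is a small fractional remainder. Concretely, one standard route is: take $\lfloor x^\frc \rfloor$ and note that $z \coloneqq x^\itg - \lfloor x^\frc \rfloor$ is an integer vector, not necessarily in $\ker(A)$, but $Az = A x^\itg - A\lfloor x^\frc\rfloor = b - A\lfloor x^\frc\rfloor$. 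A cleaner variant, and the one I would pursue, works directly with $x^\itg - x^\frc \in \ker(A)$ (a real vector in the kernel): one wants to write it as a conformal sum $x^\itg - x^\frc = \sum_k \lambda_k g_k$ where each $g_k$ is an integral Graver element of $A$ conformal to $x^\itg - x^\frc$, and $\lambda_k > 0$ — this is the Positive Sum Property for real kernel vectors, which holds because $\ker(A)$ is a rational polyhedral cone's lineality-type object and Graver elements conformally generate it.

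\textbf{Main steps.} First, establish that $x^\itg - x^\frc = \sum_{k=1}^{N} \lambda_k g_k$ with $g_k \in \Graver(A)$, $g_k \cleq x^\itg - x^\frc$, and $\lambda_k > 0$; moreover, since $x^\itg - x^\frc$ is a difference of two points in $\sol^\R(P)$ (so $\geq -x^\frc$ on coordinates where $x^\frc > 0$ and $\geq 0$ elsewhere after conformality bookkeeping), each $g_k$ conformal to it can be added to or subtracted from these points while staying feasible. Second — the key swap — suppose some $\lambda_k \geq 1$. Then $x^\itg - g_k$ is still an integral point, it is still feasible (because $g_k \cleq x^\itg - x^\frc$ guarantees $x^\itg - g_k \geq 0$: on each coordinate, $g_k$ has the same sign as $x^\itg_i - x^\frc_i$ and magnitude at most $|x^\itg_i - x^\frc_i|$, so subtracting it moves $x^\itg_i$ toward $x^\frc_i \geq 0$ without overshooting past $0$), and by the standard sign argument on the objective, either $c^\trans g_k \leq 0$ (so $x^\itg - g_k$ is also optimal integral but strictly closer to $x^\frc$ in $\ell_1$, contradicting minimality) or $c^\trans g_k > 0$, in which case $x^\frc + g_k$ is a feasible fractional point with strictly smaller objective, contradicting optimality of $x^\frc$. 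Hence every $\lambda_k < 1$. Third, conclude: $\|x^\itg - x^\frc\|_\infty = \|\sum_k \lambda_k g_k\|_\infty \leq \sum_k \lambda_k \|g_k\|_\infty$; but more carefully, by conformality $\|x^\itg - x^\frc\|_\infty \leq \max_k \|g_k\|_\infty \cdot (\text{number of }k)$ is too weak — instead use that since all $g_k \cleq x^\itg-x^\frc$ are conformal, on each coordinate $i$, $|x^\itg_i - x^\frc_i| = \sum_k \lambda_k |g_{k,i}|$, and one needs a bound on $\sum_k \lambda_k$. This is where one invokes a bound on the number of conformal pieces $N \leq g_1(A)$ or rather the integral Carathéodory-type bound, together with $\lambda_k < 1$, giving $\|x^\itg - x^\frc\|_\infty \leq N \cdot g_\infty(A) \leq f(\depth(A), \|A\|_\infty)$ by \cref{thm:graver-td} (and a corresponding bound on $N$ in terms of the same parameters).

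\textbf{The main obstacle.} The delicate point is bounding the number $N$ of terms in the conformal decomposition, since a naive Positive Sum Property gives no control on $N$ at all. The right tool is the integer Carathéodory / conformal decomposition result: any integral kernel vector, and more generally any real kernel vector, decomposes into at most $2m-2$ conformal Graver pieces (Sebő's bound), or into at most $g_1(A)$ pieces when one is greedy — but for our $\ell_\infty$ conclusion we actually want something sharper. I expect the cleanest argument avoids bounding $N$ directly: pick the conformal decomposition greedily so that at each step we peel off an integral Graver element $g_k$ with $\lambda_k$ maximal; then the argument above forces each peeled $\lambda_k < 1$, and one shows the total ``mass'' $\sum_k \lambda_k \|g_k\|_1$ is controlled because each peeling strictly decreases $\|x^\itg - x^\frc\|_1$ by at least $\lambda_k$ (again by conformality) — no, this still needs care since $\lambda_k$ can be tiny. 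The honest resolution, and the one the paper almost certainly uses, is: we never need the full decomposition. It suffices to find a \emph{single} integral $g \in \Graver(A)$ with $g \cleq x^\itg - x^\frc$; such $g$ exists whenever $x^\itg \neq x^\frc$ by the conformal generation property. Then subtracting $g$ as above yields a contradiction with minimality of $\|x^\itg - x^\frc\|_1$ unless... wait, subtracting $g$ only decreases the $\ell_1$ distance by $\|g\|_1 \geq 1$, which is a genuine decrease, so this directly contradicts minimality — meaning $x^\itg = x^\frc$?! That is false in general. The resolution is that $x^\frc$ need not be integral, so $x^\frc + g$ may not be feasible even though it decreases the objective — $g$ conformal to $x^\itg - x^\frc$ means $g_i$ has the sign of $x^\itg_i - x^\frc_i$, which for $x^\itg_i - x^\frc_i \in (-1,0)$ is negative, so $x^\frc_i + g_i$ could be negative. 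So feasibility of $x^\frc + g$ fails precisely on the ``small remainder'' coordinates. The correct statement is therefore the one with the full decomposition and $\lambda_k < 1$ for all $k$, and we must bound $N$; I would cite the conformal Carathéodory bound ($N \le 2m - 2$, hence after restricting to linearly independent rows, $N$ bounded in terms of $\depth(A)$ and $\|A\|_\infty$ via \cref{thm:graver-td} applied to the block structure), yielding the claimed computable $f$. Making this last counting step go through in terms of $\depth(A)$ rather than the ambient dimension $m$ — which could be huge — is the real work, and is presumably where the authors' strengthening of Klein's structural result enters.
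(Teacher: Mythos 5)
The extremal choice of $x^\itg$ (minimizing $\|x^\itg - x^\frc\|_1$) and the attempt to conformally decompose $x^\itg - x^\frc$ into Graver elements of $A$ are reasonable first moves, and you correctly identify the point where the naive argument breaks. But the gap is bigger than a ``counting step'': there is no bound on the number $N$ of conformal pieces in terms of $\depth(A)$ and $\|A\|_\infty$ alone, and there cannot be. In the stochastic decomposition~\eqref{eq:stochastic} with $t$ blocks, one can have $\Omega(t)$ conformal Graver pieces, one localized to each block, each with $\ell_\infty$-norm $\geq 1$ and $\lambda_k$ arbitrarily close to $1$; the resulting bound $\|x^\itg - x^\frc\|_\infty \le (\sum_k \lambda_k)\cdot g_\infty(A)$ then scales with~$t$, i.e.\ with the ambient dimension $m$, which is exactly the dependence the lemma must avoid. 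Invoking \cref{thm:graver-td} controls $g_\infty(A)$, not $N$; and the Carathéodory/Seb\H{o}-type bound $N\le 2m-2$ is of no help since $m$ is unbounded. So your final bound ``$N \cdot g_\infty(A) \le f(\depth(A),\|A\|_\infty)$'' is unjustified, and in fact false for this decomposition.

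The paper avoids the single global conformal decomposition entirely. It introduces an objective-free, geometric notion of $\proximity(P)$ (\cref{lem:proximity-optimal} then translates back to optimal solutions), and proves a \emph{Composition Theorem} (\cref{lem:proximity-uberlemma}): $\proximity(P)$ is bounded recursively by $\max_i \proximity(P_i)$ and $\max_i g_\infty(D_i)$, where $P_i$ and $D_i$ are the block subprograms/submatrices of the stochastic decomposition. The key step in that theorem is not a counting of Graver pieces but a \emph{stitching} argument: one finds, for each block $i$, an integral $\wt{x}_i$ close to the restriction of $x^\frc$, decomposes $\wt{x}_i - \wt{x}^\itg_i$ into conformal Graver elements of the \emph{local} matrix $D_i$, projects onto the $k$ shared ``top'' coordinates, and then applies the strengthened Klein bound (\cref{thm:klein_lemma_klein}, with the $\epsilon$-slack version being essential) to extract nonempty submultisets of each local decomposition whose projections agree. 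These stitch together into a single non-zero $u \in \ker^\Z(A)$ with $u \cleq x^\frc - x^\itg$, contradicting minimality of $\|x^\itg - x^\frc\|_1$ unless $\|x^\itg - x^\frc\|_\infty$ is already small. Iterating the Composition Theorem down the stochastic decomposition then gives the claimed computable bound in terms of $\depth(A)$ and $\|A\|_\infty$. This is structurally different from your plan: the Klein-type lemma is not used to count terms in a global decomposition, but to build one global cancellation move out of many local ones.
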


\begin{lemma}\label{lem:proximity-two}
 Suppose $P=(x,A,b,c)$ is a linear program in the form~\eqref{eq:ilp-eq-form}, where $A$ is $(r,s)$-stochastic for some positive integers $r,s$. Then for every optimal fractional solution $x^\frc\in\sol^\R(P)$ there exists an optimal integral solution $x^\itg\in\sol^\Z(P)$ satisfying
 $$\|x^\itg-x^\frc\|_\infty\leq 2^{\Oh(r(r+s)\|A\|_\infty)^{r(r+s)}}.$$
\end{lemma}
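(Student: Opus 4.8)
The plan is to follow the standard proximity argument via Graver bases, but with quantitative control tailored to the $(r,s)$-stochastic structure. Let $x^\frc$ be an optimal fractional solution and let $x^\itg$ be an optimal integral solution minimizing $\|x^\itg - x^\frc\|_1$ (breaking ties, say, lexicographically). Consider the vector $y \coloneqq x^\itg - x^\frc \in \ker(A)$. The first step is the classical Graver decomposition of the fractional part: writing $y = \sum_{k} \lambda_k g_k$ where each $g_k \in \Graver(A)$, all $g_k \cleq y$, and $\lambda_k > 0$, one can arrange this so that $\sum_k \lceil\lambda_k\rceil \le \|y\|_1$ and, crucially, each $g_k$ can be split off while preserving feasibility and non-deterioration of the objective on both the integral and fractional side. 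Concretely, if some $g_k$ is integral, one checks that either $x^\itg - g_k$ is a feasible integral solution no worse than $x^\itg$ (contradicting minimality of $\|y\|_1$, since $g_k \cleq y$ forces $\|x^\itg - g_k - x^\frc\|_1 < \|y\|_1$), or $x^\frc + g_k$ is a feasible fractional solution that is at least as good, contradicting optimality of $x^\frc$ unless $c^\trans g_k = 0$; and if $c^\trans g_k = 0$ we may still pivot to reduce $\|y\|_1$. Hence in the optimal-and-closest pair, $y$ has no integral Graver vector conformally below it, which is only possible if $\|y\|_\infty$ is bounded — more precisely, the standard argument shows $\|y\|_\infty \le g_\infty(A)$ is false in general; instead one gets $\|y\|_\infty$ bounded by a function of $g_\infty(A)$ and a combinatorial count, so the real content is bounding $g_\infty(A)$ for $(r,s)$-stochastic $A$.

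The second and main step is therefore to bound $g_\infty(A)$ when $A$ is $(r,s)$-stochastic. Here I would invoke the strengthened Klein-type structural result announced in the introduction (proved in \cref{sec:stronger-klein-bound}), which gives that for $(r,s)$-stochastic $A$ the $\ell_\infty$-norm of any Graver element is bounded by $2^{(2\|A\|_\infty)^{\Oh(r(r+s))}}$; equivalently one can first bound the projection of a Graver element onto the $r$ global ("stochastic") coordinates by something like $(r\|A\|_\infty)^{\Oh(r)}$ using \cref{cor:graver-columns} applied to the matrix $\begin{pmatrix}A_1 & \cdots & A_t\end{pmatrix}$ of global columns together with the Graver structure, and then, once the global coordinates are fixed, each block $B_i$ contributes at most $s$ further coordinates whose values are controlled by \cref{cor:graver-columns} applied to $B_i$ (with $m \le s$), giving a bound of roughly $(2(r+s)\|A\|_\infty)^{r+s}$ per block; combining across the recursion-free two-stage structure yields the doubly-exponential-in-$r(r+s)$ bound. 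The technical heart is the first part: controlling the global part of a Graver element requires the Klein-style argument that long "patterns" in the block coordinates must repeat and can be averaged out, and it is precisely here that the improved exponent $r(r+s)$ over the previously known $r^2s + rs^2$ comes from.

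The third step is bookkeeping: feed the bound on $g_\infty(A)$ back into the proximity argument. In the decomposition $y = \sum_k \lambda_k g_k$ with $\sum_k\lceil \lambda_k\rceil \le \|y\|_1$, after removing all the integral vectors that can be pivoted away we are left with a combination of at most $2^{g_\infty(A)\cdot(\text{const})}$ distinct Graver vectors (by Steinitz / a pigeonhole on the bounded coordinate values of the $g_k$, each lying in $\{-g_\infty(A),\dots,g_\infty(A)\}^m$ but with bounded support structure), each with fractional coefficient, so $\|y\|_\infty \le \sum_k \lambda_k \|g_k\|_\infty$ with the number of terms and each $\lambda_k < 1$ after extracting integral parts, giving $\|y\|_\infty \le$ (number of surviving terms)$\cdot g_\infty(A)$, which is again of the form $2^{\Oh(r(r+s)\|A\|_\infty)^{r(r+s)}}$ after absorbing constants. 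The cleanest way to organize this, and what I would actually write, is to prove \cref{lem:proximity-multi} first in full generality via \cref{thm:graver-td} (getting a computable $f(\depth(A),\|A\|_\infty)$), and then re-run the identical argument tracking constants with the explicit $(r,s)$-stochastic Graver bound in place of \cref{thm:graver-td}.

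I expect the main obstacle to be the second step — obtaining the explicit $2^{(2\|A\|_\infty)^{\Oh(r(r+s))}}$ bound on $g_\infty(A)$ for $(r,s)$-stochastic matrices with the right exponent — since this is exactly the improved Klein-type structural theorem, and getting the exponent down to $r(r+s)$ (matching the ETH lower bound for $r=1$) rather than the cruder $r^2 s + rs^2$ requires the sharper combinatorial/averaging argument rather than a black-box application of known Graver bounds.
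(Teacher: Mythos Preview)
Your proposal has a genuine gap in the passage from a bound on $g_\infty(A)$ to a bound on $\|x^\itg - x^\frc\|_\infty$. You correctly argue that for the $\ell_1$-closest optimal integral solution no Graver element $g$ satisfies $g \cleq y \coloneqq x^\itg - x^\frc$, and you are right that Klein's structural result (and the paper's strengthening) bounds $g_\infty(A)$ for $(r,s)$-stochastic $A$ independently of the number of blocks. But after writing $y = \sum_k \lambda_k g_k$ conformally with each $\lambda_k < 1$, the estimate $\|y\|_\infty \le (\text{number of terms}) \cdot g_\infty(A)$ still carries the number of terms, which by Carath\'eodory is governed by $\dim\ker(A)$ and hence grows with the number $t$ of blocks. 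Your pigeonhole justification in step~3 (``at most $2^{g_\infty(A)\cdot\text{const}}$ distinct Graver vectors \ldots\ with bounded support structure'') is incorrect: a Graver element of an $(r,s)$-stochastic matrix with nonzero global part is typically nonzero on \emph{every} block, and already for $A=(1,\ldots,1)\in\Z^{1\times m}$ one has $g_\infty(A)=1$ while the number of Graver elements conformal to a fixed sign pattern is $\Theta(m)$. Standard Graver-based proximity arguments all carry such a dimension factor; removing it is exactly the new content here.

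The paper's route is structurally different and never bounds $g_\infty(A)$ for the whole matrix. The Composition Theorem (\cref{lem:proximity-uberlemma}) bounds $\proximity(P)$ in terms of $\rho\coloneqq\max_i\proximity(P_i)$ for the single-block subprograms $P_i$ (each on at most $r+s$ variables, so the Cook et al.\ bound of \cref{lem:proximity-one} applies) and $\gamma\coloneqq\max_i g_\infty(D_i)$ for $D_i=(A_i\ B_i)$ (each with at most $r+s$ columns, so \cref{cor:graver-columns} applies). In the proof one first uses $\rho$ to find, for each block $i$, an intermediate integral solution $\wt{x}_i$ close to the restriction of $x^\frc$; one then decomposes the integral vector $\wt{x}_i - \wt{x}^\itg_i$ into conformal Graver elements of the \emph{small} matrix $D_i$, projects these onto the $r$ global coordinates, and applies the strengthened Klein bound (\cref{thm:klein_lemma_klein}) in dimension $r$ with entry bound $\gamma$ and slack $\rho$ to extract nonempty submultisets whose projected sums agree across \emph{all} blocks. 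Gluing these yields a nonzero element of $\ker^\Z(A)$ conformal to $x^\frc - x^\itg$, contradicting $\ell_1$-minimality. So Klein's theorem is used to \emph{synchronise} the block-wise decompositions, not to bound $g_\infty(A)$, and the exponent $r(r+s)$ arises from $(r\gamma)^r$ with $\gamma\le(2(r+s)\|A\|_\infty+1)^{r+s}$.
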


Note that an $(r,s)$-stochastic matrix has depth at most $r+s$, so \cref{lem:proximity-two} could be seen as a special case of \cref{lem:proximity-multi}. However, \cref{lem:proximity-two} provides a better upper bound on the proximity of $(r,s)$-stochastic matrices.

As for solving linear relaxations, in \cref{sec:lp} we will show the following.

\begin{lemma}\label{lem:relaxation-multi}
 Suppose we are given a linear program $P=(x,A,b,c)$ in the form~\eqref{eq:ilp-eq-form}. Let $n$ be the number of rows of~$A$. Then, in the PRAM model, one can, using $n$ processors and in time $\log^{\Oh(2^{\depth(A)})} n$, compute an optimal fractional solution to $P$. 
\end{lemma}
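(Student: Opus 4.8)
The plan is to reduce the task of solving a primal-treedepth-bounded linear program to the dual-treedepth-bounded case handled in Cslovjecsek et al.~\cite{CslovjecsekEHRW20}, and then to upgrade from computing the optimum \emph{value} to recovering an optimal \emph{solution}. First I would pass to the dual linear program $P^\ast$ of $P=(x,A,b,c)$: since $A$ has $\depth(A)=d$, the transpose $A^\trans$ has \emph{dual} treedepth $d$, so $P^\ast$ is exactly an instance of the form solvable by the algorithm of~\cite{CslovjecsekEHRW20} in $\log^{\Oh(2^{d})} n$ parallel time on $n$ processors (note $A^\trans$ has $n$ columns, matching their bound in the number of columns). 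By strong LP duality this immediately yields $\opt^\R(P)$ and an optimal dual solution $y^\star$. The elimination forest for $A$ is given, and one can recover the corresponding structure for $A^\trans$ from it in the allotted time (using \cref{lem:block-partition} recursively), so no extra assumption is needed.

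The remaining and genuinely delicate step is to reconstruct an optimal primal solution $x^\frc$ from $y^\star$. Here I would invoke complementary slackness: $x^\frc$ is optimal if and only if it is feasible and $x^\frc_j = 0$ for every index $j$ with $(A^\trans y^\star)_j < c_j$ (strictly unsaturated dual constraint). So, having computed $y^\star$, I would identify the set $J$ of saturated columns and restrict attention to the subsystem $A_J x_J = b$, $x_J \geq 0$, where $A_J$ is the submatrix on columns in $J$; this is a pure feasibility problem, and any feasible point of it extended by zeros on $J^c$ is an optimal solution to $P$. Crucially, $A_J$ is obtained from $A$ by deleting columns, which can only decrease the depth, so $\depth(A_J) \le d$ and the block/elimination structure is inherited. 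To actually \emph{find} a point in this polytope I would again phase through duality — introduce an auxiliary objective (e.g. a Phase-I / feasibility LP, or minimize a generic linear functional) on this restricted system, solve its dual via~\cite{CslovjecsekEHRW20}, read off which of the remaining inequalities $x_j \ge 0$ must be tight, and recurse on the further-restricted subsystem whose depth has again not increased. Each round fixes at least one new coordinate to an explicit value or confirms it is free, and after $\Oh(\mathrm{poly})$ rounds — more carefully, after a number of rounds controlled by the combinatorial width of the elimination forest — the system determines $x^\frc$ uniquely on its support, at which point the value is obtained by back-substitution along the elimination forest.

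The main obstacle I anticipate is controlling both the \emph{number of rounds} of this recursive dualization and the \emph{parallel cost of each round} so that the product stays within $\log^{\Oh(2^d)} n$: naively fixing one coordinate per round costs a factor of $m \le dn$, which is far too much. The fix is to exploit the treedepth structure — within each ``leaf block'' the subsystem has boundedly many columns relative to its depth, so one can solve many independent blocks in parallel and settle a whole layer of the elimination forest at once, giving only $\Oh(d)$ (or $\Oh(2^d)$) rounds rather than $\Omega(n)$. One must also be careful that the feasibility/Phase-I LPs introduced at each level do not blow up $\|A\|_\infty$ or the depth; using only $0/1$ auxiliary coefficients and appending them consistently with the block structure keeps $\depth$ bounded and the matrix entries controlled. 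Once this bookkeeping is in place, composing the per-level invocations of the dual solver of~\cite{CslovjecsekEHRW20} yields the claimed bound; I expect the write-up to require a careful but not conceptually surprising induction on $d$, with the base case $d \le 1$ (a single block of bounded size, solvable directly) being immediate.
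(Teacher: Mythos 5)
Your overall skeleton matches the paper's: convert to inequality form, solve the dual with the $\log^{\Oh(2^d)}n$ algorithm of Cslovjecsek et al.\ (Corollary~18 there, restated as \cref{lem:dual-multi}), use complementary slackness to restrict to the optimal face, and then recurse along the treedepth structure with block-decomposition providing parallelism and column-stripping providing the depth decrease, giving $\Oh(d)$ recursion depth. The observation that deleting columns does not increase depth is also used in the paper.

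Where the proposal has a genuine gap is in the mechanism for extracting a concrete primal solution from the dual. Complementary slackness applied to a single dual optimum $y^\star$ only forces $x_j=0$ for the $j$ with strictly unsaturated dual constraint; it does not identify the full support of a vertex, and it gives no way at all to assign values to the coordinates that are \emph{not} forced to zero. The phase-I/feasibility idea you sketch (``read off which $x_j\ge 0$ must be tight, recurse'') would stall on an underdetermined subsystem where no further nonnegativity constraints are forced tight, and ``fixes at least one new coordinate to an explicit value'' is asserted without a mechanism. The paper's resolution, which your proposal circles around but does not land on, is to form the auxiliary program $\wh P$ that minimizes the single variable $x_1$ subject to both the original constraints $Ax\le b$, $x\ge 0$ \emph{and} the equalities coming from complementary slackness (tightening $a_{i,\circ}^\trans x=b_i$ for $y^\star_i<0$ and $x_j=0$ for unsaturated $j$). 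The crucial fact (Lemma~\ref{lem:slack-optimal}) is that \emph{every} feasible point of $\wh P$ already attains the optimum value $\lambda$ of $P$; therefore $\opt^\R(\wh P)$ is a legitimate value for the first coordinate of some optimum, and it can be computed by solving only the \emph{dual} of $\wh P$ (no primal solution of $\wh P$ needed), again via \cref{lem:dual-multi}. One then fixes $x_1=\xi:=\opt^\R(\wh P)$, strips the first column, and recurses on a matrix of strictly smaller depth; the block-decomposable case is handled in parallel. This explicit ``minimize $x_1$ over the optimal face, take the value by duality'' step is what your write-up is missing: without it, neither the termination argument (recursion depth $2d$ rather than $\Omega(m)$) nor the actual reconstruction of $x^\frc$ goes through.
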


\begin{lemma}\label{lem:relaxation-two}
 Suppose we are given an $(r,s)$-stochastic linear program $P=(x,A,b,c)$ in the form~\eqref{eq:ilp-eq-form}. Let $n$ be the number of rows of~$A$. Then, in the PRAM model, one can, using $n$ processors and in time $2^{\Oh(r^2+rs^2)}\cdot \log^{\Oh(rs)} n$, compute an optimal fractional solution to $P$.
\end{lemma}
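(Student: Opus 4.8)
The plan is to specialize the strategy behind \cref{lem:relaxation-multi} to the $(r,s)$-stochastic case and track the parametric dependence carefully, so that the $d$-fold exponential in $f(\depth(A),\|A\|_\infty)$ collapses to the stated $2^{\Oh(r^2+rs^2)}$. The starting point is that an $(r,s)$-stochastic matrix has depth at most $r+s$, but rather than invoking \cref{lem:relaxation-multi} as a black box we re-examine the proof, exploiting the fact that the elimination forest has only two levels of interesting structure: the $r$ ``root'' (stochastic) variables at the top, and below them $t$ independent blocks $B_1,\dots,B_t$, each with at most $s$ columns. First I would recall, following the approach of Cslovjecsek et al.~\cite{CslovjecsekEHRW20} and Norton et al.~\cite{NortonPT92} invoked in \cref{sec:lp}, that finding the optimum {\em value} of the primal LP \eqref{eq:ilp-eq-form} reduces, via LP duality, to solving an LP whose constraint matrix is $A^\trans$ — a matrix of bounded \emph{dual} treedepth — and that the recursive Laplace-dualization machinery of~\cite{CslovjecsekEHRW20} solves such an LP in time $\log^{\Oh(2^{d})} n$ on $n$ processors, where now $d=r+s$ is the relevant depth parameter. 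The point of the fine-tuned analysis is that in the $(r,s)$-stochastic case the recursion has effectively constant depth: one dualizes over the $r$ global variables once (contributing a $2^{\Oh(\poly(r))}$ factor from manipulating $\Oh(r)\times\Oh(r)$ subsystems) and then each of the $t$ blocks is an LP on at most $s$ variables with $\|A\|_\infty$-bounded entries, contributing a $2^{\Oh(\poly(s))}$-type factor; combining the two levels multiplicatively and accounting for the interaction between the $r$ global rows and the $s$-column blocks yields the claimed $2^{\Oh(r^2+rs^2)}$ prefactor with a $\log^{\Oh(rs)} n$ polylogarithmic factor coming from the $\Oh(rs)$-fold iterated sorting/prefix-sum operations (cf.\ \cref{lem:block-partition}).

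The genuinely technical step — and the one I expect to be the main obstacle — is the same one flagged in the introduction: the dual solver produces only the optimum \emph{value}, whereas for the branching algorithm we must actually recover an optimal \emph{primal} solution $x^\frc$. Here I would use complementary slackness. Having computed, via the dual solver, an optimal dual solution $y^\frc$ (or at least the optimal value and enough dual information), complementary slackness pins down which primal variables may be nonzero and which primal constraints are tight; restricted to the support allowed by these conditions, the primal LP decomposes along the block structure into the $r$ global coordinates together with $t$ independent subproblems of size $\Oh(s)$ each. Each small subproblem can then be solved directly — e.g.\ by a strongly-polynomial LP routine on $\Oh(s)$ variables and $\Oh(r+s)$ constraints with entries bounded in terms of $\|A\|_\infty$ — in time $2^{\Oh(\poly(s))}$ per block, all $t$ blocks in parallel on the $n$ processors. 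The subtlety is that complementary slackness alone may not immediately make the reduced system consistent and uniquely solvable across the $r$ shared rows linking the blocks; one has to argue, as in \cref{sec:lp}, that the reduced primal feasibility problem can be re-cast as yet another bounded-treedepth LP (now of even smaller size, since many variables are fixed to zero) and solved by a second, shallower invocation of the same dualization routine. Threading the value-recovery through two rounds of the solver while keeping the parametric blow-up at $2^{\Oh(r^2+rs^2)}$ and the polylog at $\log^{\Oh(rs)} n$ is where the careful bookkeeping lives; everything else is a specialization of \cref{lem:relaxation-multi}.
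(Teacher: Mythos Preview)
Your proposal follows the same overall blueprint as the paper --- dualize, solve the dual, then use complementary slackness to recover a primal optimum --- but you overcomplicate the source of the $2^{\Oh(r^2+rs^2)}$ factor and your primal-recovery sketch diverges from what the paper actually does.

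First, the parametric bound: in the paper the $2^{\Oh(r^2+rs^2)}\cdot\log^{\Oh(rs)}n$ running time is imported \emph{as a black box} from \cref{lem:dual-two} (Corollary~17 of~\cite{CslovjecsekEHRW20}). There is no re-analysis of the Laplace dualization recursion for the two-level structure; the paper simply observes that every auxiliary LP that appears in the proof of \cref{lem:relaxation-multi} (including the one in \cref{lem:solve-dual} and the programs $\wh P$ built from complementary slackness) has a constraint matrix whose transpose is still $(r,s)$-stochastic, so each call to \cref{lem:dual-multi} can be swapped for a call to \cref{lem:dual-two}. Your attempt to re-derive the $2^{\Oh(r^2+rs^2)}$ factor by hand is unnecessary, and the heuristic accounting you give (``$2^{\Oh(\poly(r))}$ once, then $2^{\Oh(\poly(s))}$ per block'') is not how that bound arises in~\cite{CslovjecsekEHRW20}.

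Second, primal recovery: the paper does \emph{not} try to decompose the complementary-slackness-reduced system into independent block subproblems, nor does it use ``two rounds'' of the solver. Your observation that the $r$ global variables remain shared across all blocks is exactly the obstruction, and the paper resolves it by the same one-variable-at-a-time recursion used for \cref{lem:relaxation-multi}: form the auxiliary program $\wh P$ (minimize $x_1$ subject to the original constraints together with the equalities forced by complementary slackness), solve its dual via \cref{lem:dual-two} to obtain $\xi=\opt^\R(\wh P)$, fix $x_1\coloneqq\xi$, and recurse on the program with the first column removed. The recursion has depth $2(r+s)$, and each non-decomposable level costs one call to \cref{lem:dual-two}; the $(r+s)$ multiplicative overhead is absorbed into the stated bound. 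Your proposed shortcut of solving the small per-block LPs directly after a single application of complementary slackness does not work as stated, because those subproblems are not independent until all $r$ global coordinates have been fixed.

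In short: the correct proof is a one-line specialization of \cref{lem:relaxation-multi} --- replace \cref{lem:dual-multi} by \cref{lem:dual-two} throughout and note the recursion depth is $2(r+s)$ --- and the refined parametric dependence comes entirely from the cited black box, not from any new analysis.
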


Again, \cref{lem:relaxation-two} differs from \cref{lem:relaxation-multi} by considering a more restricted class of matrices (i.e., $(r,s)$-stochastic), but providing better complexity bounds.

We now combine the tools presented above to show the following theorems. 
%Note that they differ from \cref{lem:relaxation-multi} and \cref{lem:relaxation-two} only in that we look for an optimal {\em{integral}} solution, instead of fractional.

\begin{theorem}\label{thm:main-multi}
 There is a computable function $f\colon \N\times \N\to \N$ such that the following holds.
 Suppose we are given a linear program $P=(x,A,b,c)$ in the form~\eqref{eq:ilp-eq-form}. Let $n$ be the number of rows of~$A$. Then, in the PRAM model, one can using $n$ processors and in time $f(\depth(A),\|A\|_\infty)\cdot \log^{\Oh(2^{\depth(A)})} n$ compute an optimal integral solution to $P$.  
\end{theorem}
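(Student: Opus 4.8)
The plan is to combine ingredients \ref{i:proximity} and \ref{i:relaxation} via a treedepth-based branching recursion that, at each recursive call, fixes the value of the root variable within a window of size controlled by the proximity bound of \cref{lem:proximity-multi}, and then splits into independent subprograms of strictly smaller depth, each of which is solved recursively in parallel. More precisely, given the input program $P=(x,A,b,c)$, first observe that since $A$ has depth $d\coloneqq\depth(A)$, by \cref{lem:block-partition} we can in polylogarithmic parallel time decompose $A$ into its block decomposition; each block is either independent of the others (and can be handled completely in parallel) or is of the stochastic form~\eqref{eq:stochastic} with one-column leading blocks $A_1,\dots,A_t$ and trailing blocks $B_1,\dots,B_t$ of depth $\leq d-1$. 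The recursion thus has depth at most $d$, and at level $k$ the subproblems have depth $\leq d-k$.

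The first step is to solve the linear relaxation of $P$ using \cref{lem:relaxation-multi}, obtaining an optimal fractional solution $x^\frc$ in time $\log^{\Oh(2^d)} n$ on $n$ processors. By \cref{lem:proximity-multi} there is an optimal integral solution $x^\itg$ with $\|x^\itg - x^\frc\|_\infty \leq \rho$ where $\rho = f(d,\|A\|_\infty)$ for the computable function $f$ of that lemma. Hence the first (root) variable $x_1$ of any such $x^\itg$ lies in the set $W = \{\,\xi\in\Z_{\geq 0} : |\xi - x^\frc_1| \leq \rho\,\}$, which has size at most $2\rho+1$. We branch over all $\xi\in W$: fixing $x_1=\xi$ amounts to moving the column $A_{(1)}$ of that variable to the right-hand side, i.e.\ replacing $b$ by $b-\xi A_{(1)}$ and deleting the first column, yielding a program $P_\xi$ whose constraint matrix is $A$ with its first column removed. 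This matrix block-decomposes into pieces of depth $\leq d-1$ (the blocks $B_1,\dots,B_t$ together with any rows no longer touched by a surviving variable), and since the blocks share no rows, $P_\xi$ splits into independent subprograms that we solve recursively and in parallel. We return the best objective value obtained over all $\xi\in W$ and over all blocks, together with a witnessing solution; correctness follows because the proximity guarantee ensures the true optimum's root value is caught in $W$, and induction on depth handles the subprograms.

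For the running time analysis: let $T(d,n)$ denote the parallel time to solve a depth-$d$ program with $n$ rows on $n$ processors. Computing the relaxation costs $\log^{\Oh(2^d)} n$; the branching factor $|W| = 2\rho+1$ depends only on $d$ and $\|A\|_\infty$, not on $n$; and recursing into subprograms costs $T(d-1, n')$ for subprograms with $n'\leq n$ rows, which we can run simultaneously since the subprograms partition the rows (so the total processor count stays $\leq n$, noting $|W|$ is a constant that can be absorbed by time-multiplexing or by a $|W|$-factor in the processor count which is itself bounded by a function of $d,\|A\|_\infty$). This gives a recurrence of the shape $T(d,n) \leq |W|\cdot\big(T(d-1,n) + \log^{\Oh(2^d)} n\big) + \Oh(\mathrm{poly}(d)\log n)$, which unfolds to $T(d,n) \leq f'(d,\|A\|_\infty)\cdot \log^{\Oh(2^d)} n$ for a computable $f'$, since the polylog exponents at each level are dominated by $2^d$ and the product of the branching factors over the $d$ levels is bounded by a function of $d$ and $\|A\|_\infty$. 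Aggregating the per-level arithmetic to account for maintaining the list representation of matrices and the right-hand side updates contributes only lower-order factors.

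The main obstacle I expect is bookkeeping the interface between the branching recursion and the parallel model: namely, ensuring that fixing $x_1=\xi$ and then re-running \cref{lem:block-partition} and \cref{lem:relaxation-multi} on the resulting subprograms can be done while respecting the $n$-processor budget and keeping the sparse list-of-entries representation consistent across recursive calls. In particular one must verify that updating $b$ to $b-\xi A_{(1)}$ touches only $\Oh(d)$ coordinates (since each column has at most $d$ nonzeros, $A$ having depth $d$), that the subprograms can be extracted in polylogarithmic parallel time, and that $\|A\|_\infty$ and the depth bound are inherited correctly by the submatrices. A secondary subtlety is that \cref{lem:relaxation-multi} as stated may report that $P$ is infeasible or unbounded; these cases must be threaded through the branching (an infeasible or unbounded subprogram at any branch simply contributes no finite candidate, and global infeasibility/unboundedness is detected at the top level), but this is routine once the base case $d=0$ (no columns: feasible iff $b=0$, with optimum $0$) is fixed.
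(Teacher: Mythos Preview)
Your proposal is correct and follows essentially the same approach as the paper's proof: check block-decomposability and parallelize across blocks, otherwise solve the relaxation via \cref{lem:relaxation-multi}, branch the root variable over the $(2\rho+1)$-size window from \cref{lem:proximity-multi}, and recurse sequentially over the branches on a matrix of strictly smaller depth, yielding the same recurrence $T(d,n)\leq (2\rho+1)\cdot T(d-1,n)+\log^{\Oh(2^d)}n$. One small correction in your bookkeeping remarks: it is the \emph{rows} of a depth-$d$ matrix that have at most $d$ nonzero entries, not the columns---the root column can be dense---but since you have $n$ processors the update $b\mapsto b-\xi A_{(1)}$ is still constant parallel time, so this slip does not affect the argument.
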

\begin{proof}
 We consider the following recursive algorithm.
 
 As an opening step, using the algorithm of \cref{lem:block-partition}, in time $\Oh(d\log (dn))$ we verify whether the constraint matrix $A$ is block-decomposable. Further proceedings of the algorithm depend on the outcome of this check.
 
 Suppose first that $A$ is block-decomposable, say $A_1,\ldots,A_t$ ($t\geq 2$) are the blocks in the block decomposition of $A$ (note that these blocks are computed by the algorithm of \cref{lem:block-partition}). Then, $P$ can be decomposed into independent integer linear programs $P_1,\ldots,P_t$ with constraint matrices $A_1,\ldots,A_t$ so that an optimal integral solution to $P$ can be obtained by concatenating optimal integral solutions to $P_1,\ldots,P_t$. Therefore,  it suffices to solve programs $P_1,\ldots,P_t$ recursively in parallel, by assigning $n_i$ processors to program $P_i$, where $n_i$ is the number of rows of $A_i$.
 
 Suppose now that $A$ is not block-decomposable. 
 Using the algorithm of \cref{lem:relaxation-multi}, we find an optimal fractional solution $x^\frc\in\sol^\R(P)$ in time $\log^{\Oh(2^{\depth(A)})} n$. By \cref{lem:proximity-multi}, there exists an optimal integral solution $x^\itg\in\sol^\Z(P)$ such that $\|x^\itg-x^\frc\|_\infty\leq \rho$, where $\rho$ is a constant that depends in a computable manner on $\depth(A)$ and $\|A\|_\infty$. In particular, if we denote the first variable of $x$ by $x_1$, then there exists an optimal integral solution $x^\itg\in\sol^\Z(P)$ satisfying $|x^\itg_1-x^\frc_1|\leq \rho$.
 
 Since $A$ is not block-decomposable, we can write $A=(a_1\ A')$, where $a_1$ is the first column of $A$ and $A'$ is a matrix with $\depth(A')<\depth(A)$. For every $\xi\in [x^\frc_1-\rho,x^\frc_1+\rho]\cap \Z_{\geq 0}$, we consider the linear program $P'(\xi)$ defined as
 \begin{gather*}
 \min (c')^\trans x' \\
 A'x' \leq b-\xi\cdot a_1 \\
 x'\geq 0
\end{gather*}
 where $c'$ and $x'$ are $c$ and $x$ with the first entry removed, respectively. From the observation of the previous paragraph it follows that
 $$\opt^\Z(P)=\min \{\,c_1\xi+\opt^\Z(P'(\xi))\ \colon\  \xi\in [x^\frc_1-\rho,x^\frc_1+\rho]\cap \Z_{\geq 0}\,\},$$
 where $c_1$ is the first entry of $c$.
 Therefore, we solve all the programs $P'(\xi)$ for $\xi\in [x^\frc_1-\rho,x^\frc_1+\rho]\cap \Z_{\geq 0}$ recursively one by one, thus obtaining respective optimal solutions $x'(\xi)$, and among the solutions $x=\begin{pmatrix}\xi\\ x'(\xi)\end{pmatrix}$ we may output the one that minimizes~$c^\trans x$.
 
 The correctness of the algorithm follows directly from \cref{lem:proximity-multi}. As for the running time, observe that when treating a linear program with a block-decomposable constraint matrix, we recursively and in parallel solve programs with constraint matrices that are not block-decomposable. On the other hand, when treating a linear program with a non-block-decomposable constraint matrix, we recursively solve at most $2\rho+1$ programs in a sequential manner, each with a strictly smaller depth. Hence, if by $T[n,\Delta,d]$ we denote the (parallel) running time for programs with $n$ rows, depth at most $d$, and all coefficients bounded in absolute value by $\Delta$, then $T[n,\Delta,d]$ satisfies the recursive inequality:
 $$T[n,\Delta,d]\leq \log^{\Oh(2^d)} n + (2\rho+1)\cdot T[n,\Delta,d-1].$$
 This recursion solves to
 $$T[n,\Delta,d]\leq (2\rho+1)^d\cdot \log^{\Oh(2^d)} n.$$
 Since $\rho$ is bounded by a computable function of $\depth(A)$ and $\|A\|_\infty$, the claimed running time bound follows.
\end{proof}

\begin{theorem}\label{thm:main-two}
 Suppose we are given an $(r,s)$-stochastic linear program $P=(x,A,b,c)$ in the form~\eqref{eq:ilp-eq-form}. Let $n$ be the number of rows of~$A$. Then, in the PRAM model, one can, using $n$ processors and in time $2^{(2\|A\|_\infty)^{\Oh(r(r+s))}}\cdot \log^{\Oh(rs)} n$, compute an optimal integral solution to~$P$.
\end{theorem}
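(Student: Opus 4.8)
The argument follows the branch-and-recurse template of the proof of \cref{thm:main-multi}, but exploits the $(r,s)$-stochastic structure to avoid descending all the way to depth zero: we fix the whole tuple of $r$ ``global'' variables in one shot, guided by an optimal fractional solution, and the residual program then already falls apart into blocks with at most $s$ columns. Concretely, using \cref{lem:relaxation-two} we first compute an optimal fractional solution $x^\frc\in\sol^\R(P)$. By \cref{lem:proximity-two} there is an optimal integral solution $x^\itg\in\sol^\Z(P)$ with $\|x^\itg-x^\frc\|_\infty\le\rho$, where $\rho\coloneqq 2^{\Oh(r(r+s)\|A\|_\infty)^{r(r+s)}}$; in particular each of the first $r$ entries of $x^\itg$ is an integer in $[x^\frc_i-\rho,\,x^\frc_i+\rho]$. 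Hence there are at most $(2\rho+1)^r$ candidate values for the tuple $\xi\in\Z^r_{\ge 0}$ of the first $r$ variables, and it suffices to try each of them, solve the resulting residual program optimally, and return the best overall solution.

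\textbf{Handling a fixed $\xi$.} Fix a candidate $\xi$ and substitute the first $r$ variables to $\xi$. By the very definition of an $(r,s)$-stochastic matrix, the residual constraint matrix --- namely $A$ with its first $r$ columns deleted --- is block-decomposable, and every block in its block decomposition has at most $s$ columns. Using \cref{lem:block-partition} we compute this block decomposition in time $\Oh((r+s)\log((r+s)n))$ on $n$ processors; the residual program then splits into independent integer programs $Q_1,\dots,Q_t$ (with right-hand sides obtained from $b$ by subtracting the contribution of $\xi$), one per block, each with at most $s$ columns, and an optimal integral solution of the residual program is the concatenation of optimal integral solutions of the $Q_i$'s. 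We solve the $Q_i$'s in parallel, distributing the $n$ processors among them in proportion to their numbers of rows.

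\textbf{Solving a program with at most $s$ columns.} Let $Q$ be such a program, with constraint matrix $B$ having at most $s$ columns and $n_i$ rows. Since $B$ has at most $s$ columns it is $(0,s)$-stochastic and has depth at most $s$, so \cref{lem:relaxation-two} (with $r=0$) returns an optimal fractional solution $y^\frc$ of $Q$ in time $\log^{\Oh(1)} n_i$ on $n_i$ processors, and \cref{lem:proximity-multi} guarantees an optimal integral solution $y^\itg$ of $Q$ with $\|y^\itg-y^\frc\|_\infty\le\rho_0$ for some $\rho_0$ that is a computable function of $s$ and $\|A\|_\infty$ only (alternatively, \cref{cor:graver-columns} bounds the Graver complexity of $B$ by $(2s\|A\|_\infty+1)^s$, whence such a $\rho_0$ via a standard proximity argument). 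It then suffices to enumerate the at most $(2\rho_0+1)^s$ integer points of the box $\prod_j [y^\frc_j-\rho_0,\,y^\frc_j+\rho_0]$, test each for feasibility and objective value, and output the best one (with the usual handling of infeasible or unbounded subproblems). This costs $h(s,\|A\|_\infty)\cdot\log^{\Oh(1)} n_i$ for a computable function $h$.

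\textbf{Running time and the main obstacle.} We solve the relaxation of $P$ once, in time $2^{\Oh(r^2+rs^2)}\log^{\Oh(rs)}n$ by \cref{lem:relaxation-two}. We then process the at most $(2\rho+1)^r$ candidate tuples one after another, each costing a block partition ($\Oh((r+s)\log((r+s)n))$ by \cref{lem:block-partition}) plus the parallel solution of all its blocks with at most $s$ columns ($h(s,\|A\|_\infty)\cdot\log^{\Oh(1)}n$). The total running time is therefore
$$2^{\Oh(r^2+rs^2)}\cdot\log^{\Oh(rs)}n \;+\; (2\rho+1)^r\cdot h(s,\|A\|_\infty)\cdot\log^{\Oh(rs)}n,$$
and it remains to check, by elementary arithmetic and using $\|A\|_\infty\ge 1$ to collapse the parametric factors into a single doubly-exponential term, that $2^{\Oh(r^2+rs^2)}+(2\rho+1)^r\cdot h(s,\|A\|_\infty)\le 2^{(2\|A\|_\infty)^{\Oh(r(r+s))}}$ --- precisely the shape the bound of \cref{lem:proximity-two} was tailored to yield. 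I expect the main obstacle to lie in making the base-case step fully rigorous in the PRAM model: extracting a fractional optimum and the proximity bound for each block with at most $s$ columns, enumerating the proximity box, and checking feasibility, all while respecting the $n$-processor budget and correctly handling degenerate blocks (all-zero rows, columns that are entirely zero); a lesser point to verify carefully is that after substituting the $r$ global variables the residual program decomposes into blocks with at most $s$ columns exactly as the definition of $(r,s)$-stochasticity promises, so that no further recursion is ever needed.
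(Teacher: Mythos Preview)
Your proof is correct, but it takes a genuinely different route from the paper's.

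The paper's proof is a one-liner: it simply reuses the one-variable-at-a-time recursion of \cref{thm:main-multi} verbatim, replacing \cref{lem:relaxation-multi} by \cref{lem:relaxation-two} and \cref{lem:proximity-multi} by \cref{lem:proximity-two}. Since an $(r,s)$-stochastic matrix has depth at most $r+s$, the same recurrence gives $(2\rho+1)^{r+s}\cdot 2^{\Oh(r^2+rs^2)}\cdot\log^{\Oh(rs)}n$, which is then absorbed into the claimed bound.

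Your argument instead exploits the two-level structure explicitly: you solve the relaxation of $P$ once, enumerate all $(2\rho+1)^r$ candidate tuples for the $r$ global variables simultaneously (valid because \cref{lem:proximity-two} bounds all coordinates of $x^\itg-x^\frc$, not just the first), decompose the residual into blocks with at most $s$ columns, and solve each block by a fresh relaxation plus brute-force over a much smaller proximity box of side $2\rho_0+1$ with $\rho_0$ depending only on $s$ and $\|A\|_\infty$. This is a perfectly sound alternative; in fact your outer factor $(2\rho+1)^r\cdot(2\rho_0+1)^s$ is never larger than the paper's $(2\rho+1)^{r+s}$ since $\rho_0\le\rho$, so your parametric factor is at least as good. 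What the paper's approach buys is brevity --- no separate base case has to be argued --- while your approach makes the role of the two-stage structure more transparent and avoids re-solving the relaxation at every one of the $r+s$ recursion levels.

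Two small points worth tightening in your write-up: for the proximity of each block $Q$ you can cite \cref{lem:proximity-one} directly (giving $\rho_0\le(s\|A\|_\infty)^{s+1}$), which is cleaner than going through \cref{lem:proximity-multi} or a Graver-based argument; and invoking \cref{lem:relaxation-two} with $r=0$ is formally fine but reads oddly --- it might be more natural to note that a block with $\le s$ columns has $\depth\le s$ and appeal to \cref{lem:relaxation-multi}, or simply observe that a linear program with a bounded number of variables can be solved in polylogarithmic parallel time by standard means.
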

\begin{proof}
 We apply the same algorithm as in the proof of \cref{thm:main-multi}, except that we replace the usage of \cref{lem:relaxation-multi} with \cref{lem:relaxation-two}, and we use the proximity bounds provided by \cref{lem:proximity-two} instead of \cref{lem:proximity-multi}. Since an $(r,s)$-stochastic matrix has depth at most $r+s$, the same time complexity analysis shows that the running time is bounded by
 $$T[n,\|A\|_\infty,r,s]\leq (2\rho+1)^{r+s}\cdot 2^{\Oh(r^2+rs^2)}\cdot \log^{\Oh(rs)} n,$$
 where $\rho\leq 2^{\Oh(r(r+s)\|A\|_\infty)^{r(r+s)}}$ is the bound on proximity provided by \cref{lem:proximity-two}. Thus, the running time is bounded by $2^{(2\|A\|_\infty)^{\Oh(r(r+s))}}\cdot \log^{\Oh(rs)} n$, as claimed.
\end{proof}

\newcommand{\Cr}{\mathscr{C}}
\newcommand{\Ir}{\mathscr{I}}

\section{A stronger Klein bound}
\label{sec:stronger-klein-bound}

% In this section, we focus on the parameter dependence of algorithms that solve multistage stochastic integer programming problems. The consequence of our result is perhaps best illustrated for the case  of $(r,s)$-stochastic integer programming problems. The current best algorithm for $(r,s)$-stochastic integer programming problems~\cite{arxiv-IP}  is an improvement of the one of Klein~\cite{Klein20}. Its running time is $2^{(2\|A\|_\infty)^{(r^2 s + rs^2})} n ⋅\log^3 n L$, where $L$ is the \emph{binary encoding length} of the largest number in the input.   In an exciting recent paper, Jansen et al.~\cite{JansenKL20} showed that under the \emph{Exponential Time Hypothesis (ETH)},  the doubly exponential running time is necessary. Specifically, they prove that any algorithm requires time at least $2^{2^{Ω(s)}}$. In their reduction, the constraint matrices have entries bounded by $2$ and $r = 1$.
 
% In the following, we present a generalization and strengthening of a structural result of Klein. This will yield an fpt-algorithm for  $(r,s)$-stochastic integer programming with a  parameter dependence of the form $2^{\Oh({r(r+s)\|A\|_\infty)}^{r(r+s)}}$. For the instances considered by Jansen et al., i.e. for $r=1$ and $\|A\|_\infty = 2$, our algorithm is  optimal, up to an $O(\log s)$ factor in the argument of the doubly-exponential function.
% The main contribution of this section is the following strengthening of a structural result of Klein~\cite{Klein20}. 

In this section, we recall a structural result of Klein~\cite{Klein20} and prove a stronger variant, which we will need for our proximity bounds in the next section. Formally, we will prove the following theorem.

\begin{theorem}[Stronger Klein bound]\label{thm:klein_lemma_klein}
	Let $T_1, \ldots, T_n \subseteq \Z^d$ be multisets of integer vectors of $\ell_\infty$-norm at most $\Delta$ such that their respective sums are almost the same in the following sense: there is some $b \in \Z^d$ and a positive integer $\epsilon$ such that
	$$ \Big\|\sum_{v \in T_i}v - b\Big\|_{\infty} < \epsilon\qquad\textrm{for all }i\in \{1,\ldots,n\}.$$
	There exists a function $f(d,\Delta) \in 2^{\Oh(d\Delta)^d}$ such that the following holds. Assuming $\|b\|_{\infty} > \epsilon\cdot f(d,\Delta)$, one can find nonempty submultisets $S_i \subseteq T_i$ for all $i\in \{1,\ldots,n\}$, and a vector ${b'}\in \Z^d$ satisfying $\|{b'}\|_{\infty} \leq f(d,\Delta)$, such that
	$$\sum_{v \in S_i}v = {b'} \qquad\textrm{for all } i \in \{1,\dots,n\}.$$
\end{theorem}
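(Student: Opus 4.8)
The plan is to prove this by induction on the dimension $d$, following the overall strategy of Klein but extracting a common sub-sum rather than merely a short near-common sum. Fix the coordinate, say the first one, in which $b$ has the largest absolute value; without loss of generality $b_1 > 0$ is large. For each $i$, since $\|\sum_{v\in T_i} v - b\|_\infty < \epsilon$, the first coordinates of the vectors in $T_i$ sum to something close to $b_1$, hence at least roughly $b_1 - \epsilon > 0$. I would first reduce to the case where every $v\in T_i$ has $v_1 \geq 0$: the ``negative part'' $\sum_{v\in T_i, v_1<0} v$ has bounded $\ell_1$-norm in the first coordinate contribution only in a weak sense, so instead I pair up positive-$v_1$ and negative-$v_1$ vectors greedily to cancel the first coordinate in chunks of bounded size, absorbing a bounded remainder. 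More cleanly: order the elements of $T_i$ and look at prefix sums of the first coordinate; since the total is $\approx b_1$ and each step changes it by at most $\Delta$, the prefix sums hit every ``level'' in $\{1,\dots,b_1-\epsilon\}$ up to an overshoot of at most $\Delta$. This gives, for each $i$, a large collection (size $\Omega(b_1/\Delta)$) of nested submultisets $S_i^{(1)}\subsetneq S_i^{(2)}\subsetneq\cdots$ whose first-coordinate sums are all distinct and lie in a controlled range.

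Now look at the differences $S_i^{(k+1)}\setminus S_i^{(k)}$: these are ``slabs'' of bounded size (at most $\Delta$ vectors each, by choosing the prefix cuts greedily so the first coordinate increases by exactly one or by a bounded amount), so each slab sum is a vector in $\Z^d$ of $\ell_\infty$-norm $\Oh(\Delta^2)$, hence drawn from a fixed finite alphabet $\Sigma$ of size $(\Oh(\Delta^2))^d$. So for each $i$ we have a long word over $\Sigma$ (the sequence of slab sums), of length $\Omega(b_1/\Delta)$. Apply a pigeonhole/Dilworth-type argument simultaneously across all $n$ indices: if $b_1$ is large enough as a function of $|\Sigma|$ and $d$, then within each word we can find a contiguous block whose sum equals any prescribed target in a bounded box, OR — and this is the key — we find that in each word the same symbol (or same short pattern) recurs, and we select within word $i$ a sub-block whose total sum is a fixed vector $b'$ independent of $i$. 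Concretely, I would look at prefix sums of the slab-words, which live in $\Z^d$ and grow steadily in the first coordinate; among $\Omega(b_1/\Delta)$ prefix sums, two that are congruent modulo a suitable lattice and close in the remaining coordinates differ by a vector $b'$ of bounded norm, and one arranges the argument so that the *same* $b'$ works for all $i$ by first handling coordinate $1$ (forcing the gap to have first coordinate exactly some value $\beta$ determined globally) and then inducting on the remaining $d-1$ coordinates with $T_i$ replaced by the slab-words restricted to an interval pinned down in coordinate $1$.

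The main obstacle — and where the $2^{\Oh(d\Delta)^d}$ (rather than, say, $d$-fold exponential with worse constants) bound has to be fought for — is making the induction produce a *single* vector $b'$ valid for all $i$ simultaneously, rather than one $b'_i$ per index. Klein's original argument is content with a short common near-sum; here we need an exact common sum, so the recursion must thread a uniform target through all $n$ words at once. I expect to handle this by, at each level of the induction, quantizing the first free coordinate: partition the range of achievable sub-block first-coordinates into $\Oh(|\Sigma|)$ windows, pigeonhole to find one window $W$ hit by all $n$ words, restrict attention to sub-blocks landing in $W$, and recurse in dimension $d-1$ with the ``$b$'' being the common vector whose first coordinate is now fixed and whose tail must be made common. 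Tracking how $\epsilon$, $\Delta$, and the alphabet size blow up through $d$ levels of this gives a tower of height $d$ with each level contributing a polynomial-in-$(d\Delta)$ exponent, i.e. $f(d,\Delta)=2^{\Oh(d\Delta)^d}$; verifying that the blow-up does not exceed this bound is the routine but delicate bookkeeping I would defer to the actual proof.
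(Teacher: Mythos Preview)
Your approach diverges fundamentally from the paper's, and it contains a genuine gap at the step you yourself flag as the main obstacle.

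\textbf{What the paper does instead.} The paper's proof is \emph{not} inductive on the dimension $d$ and does not use prefix sums, Steinitz-type rearrangement, or a slab/alphabet encoding. It proceeds as follows. Reformulate the problem so that each $T_i$ is encoded as a multiplicity vector $m_i\in\Z_{\ge0}^{D}$ with $Bm_i\approx b$, where the columns of $B$ are all vectors in $\{-\Delta,\dots,\Delta\}^d$. After padding so that $B(m_i+r_i)=b$ exactly, each $z_i=m_i+r_i$ lies in the polyhedron $Q=\{x\ge0:Bx=b\}$. By Minkowski--Weyl and Carath\'eodory, each $z_i$ dominates $u_{j(i)}/(d{+}1)$ for some vertex $u_{j(i)}$ of $Q$; each vertex $u_j$ has a basis matrix $B_j\in\Z^{d\times d}$ with $B_j\widetilde u_j=b$. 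The key new ingredient is a short lemma (Lemma~\ref{lem:1}): the least common multiple $M$ of the determinants $\det B_j$ is at most $3^{(d\Delta)^d}$, and for every integer $v$ in the common cone $\bigcap_j\cone(B_j)$ one has $Mv\in\bigcap_j\intcone(B_j)$. From this one extracts a single short generator $v_1$ of the common cone lying in every $\intcone(B_j)$, with a coefficient exceeding $2\epsilon$ in a conic decomposition of $b/(d{+}1)$. Setting $b'=v_1$ and taking $u_j'$ to be the preimage $B_j^{-1}v_1$ (padded with zeros) produces the same $b'$ for every $j$, hence for every $i$; the $2\epsilon$ slack absorbs the padding $r_i$. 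The bound $2^{\Oh(d\Delta)^d}$ falls out directly from the LCM bound, with no recursion.

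\textbf{The gap in your proposal.} Your inductive scheme needs, at each level, to pass from ``all $n$ words are long over a bounded alphabet'' to ``there is a single target vector, common to all $n$, realised as a sub-block sum in each word.'' Your sketch for this is: pigeonhole the first-coordinate of sub-block sums into $\Oh(|\Sigma|)$ windows, find a window $W$ hit by all $n$ words, then recurse in dimension $d{-}1$. Even granting that each word is long enough to visit every window, the recursion hypothesis requires that the $(d{-}1)$-dimensional tails of the sub-blocks landing in $W$ are \emph{approximately equal across $i$}; nothing in your construction enforces this. Two words can land in the same first-coordinate window via sub-blocks whose remaining $d{-}1$ coordinates are arbitrarily far apart, so the recursive instance need not satisfy the ``almost the same sum'' premise. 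This is exactly the point where Klein invokes the Steinitz lemma (to keep \emph{all} partial sums in a bounded tube, which your prefix cuts do not do), and it is also why his bound is doubly exponential in $d^2$ rather than $2^{\Oh(d\Delta)^d}$: the Steinitz constant enters the alphabet size at every level. Your claim that the blow-up stays within $2^{\Oh(d\Delta)^d}$ is therefore doubly unsupported --- first because the recursive step is not established, and second because even the Steinitz-based fix yields a worse exponent. The paper sidesteps all of this with the single-shot LCM argument.
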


Before we proceed to the proof, let us discuss the aspects in which \cref{thm:klein_lemma_klein} strengthens the original formulation of Klein~\cite[Lemma~2]{Klein20}. First, the formulation of Klein required all the vectors to be nonnegative. Second, we allow the sums of respective multisets to differ slightly, which is governed by the slack parameter $\epsilon$. In the original setting of Klein all sums need to be exactly equal, in our setting this corresponds to $\epsilon$ to be equal to $1$. Finally, the argument of Klein yields a bound on the function $f(d,\Delta)$ that is doubly exponential in $d^2$, our proof improves this dependence to doubly exponential in $d$.
The second aspect will be essential in the proof of the proximity bound, while the last is primarily used for improving the parametric factor in the running time of our algorithms.

\bigskip

% \begin{remark}
%   \label{rem:2}  
%   Klein's original upper bound on $\|\tilde{b}\|_{\infty}$ is of the order $2^{2^{\Omega(d^2)}}$ . Our new bound is of the order  $2^{2^{\Oh^*(d)}}$, where the $\Oh^*$-notation is ignoring $\log$-factors in $\Delta$ and $d$.
% \end{remark}

The remainder of this section is devoted to the proof of \cref{thm:klein_lemma_klein}. 
We need a few definitions at this point. The \emph{cone} spanned by
 vectors $u_1,\dots,u_k \in \Q^d$ is the set of all nonnegative
linear combinations of $u_1,\dots,u_k$. We write
\begin{equation*}
  \label{eq:8}
  \cone(u_1,\dots,u_k) \coloneqq \left\{ \sum_{i=1}^k \lambda_i v_i \colon \lambda_i \ge 0, \, i=1,\dots,k\right\}. 
\end{equation*}
The \emph{integer cone} spanned by these vectors is the set of all
nonnegative \emph{integer} linear combinations of $u_1,\dots,u_k$. We
write
\begin{equation*}
  \label{eq:9}
    \intcone(u_1,\dots,u_k) \coloneqq \left\{ \sum_{i=1}^k \lambda_i v_i \colon \lambda_i \in \Z_{\geq 0}, \, i=1,\dots,k\right\}. 
  \end{equation*}
  We will also use the corresponding matrix notation:
  \begin{equation*}
    \label{eq:10}
    \cone(A) = \{ A x \colon x\in \R_{\geq 0}^k\} \quad \text{ and } \quad \intcone(A) = \{ A x \colon x\in \Z_{\geq 0}^k\},
  \end{equation*}
  for a matrix $A \in \Q^{d\times k}$, respectively. The following lemma is the crucial observation behind our proof of the improved Klein bound.

\begin{lemma}\label{lcm_trick}
  \label{lem:1}
  Let $A_1,\ldots,A_\ell \in \Z^{d \times d}$ be invertible integer matrices such that  $\|A_i\|_\infty \leq\Delta$ for each $i\in \{1,\ldots,\ell\}$ and 
  \begin{equation}
    \label{eq:1}    
\Cr \coloneqq     \bigcap_{i = 1}^\ell \cone(A_i) \neq \emptyset.
   \end{equation}
   Then the following assertions hold:
   \begin{enumerate}[label=(L\arabic*),ref=(L\arabic*),leftmargin=*]
   \item Let $M$ be the least common multiple of the determinants of  matrices $A_i$. Then   $M\leq 3^{(d\Delta)^d}$. \label{i:lcm}
   \item 
   For each integer vector $v \in  \Cr  \cap \Z^d $ one has $M v \in \Ir$, where
\begin{displaymath}
 \Ir =  \bigcap_{i=1}^\ell \intcone(A_i). 
\end{displaymath}
 \label{i:intcone}   
 \item There exist integer vectors  $v_1,\dots,v_t\in \bigcap_{i = 1}^\ell \Ir $ such that  $\|v_j\|_\infty \leq 2^{\Oh(d\Delta)^d}$ for each $j\in \{1,\ldots,t\}$  and 
   \begin{displaymath}
     \cone(v_1,\dots,v_t) = \Cr.
   \end{displaymath}
   \label{i:cone}
  \end{enumerate}
\end{lemma}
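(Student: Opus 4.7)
The plan is to establish the three assertions in sequence, with \ref{i:lcm} and \ref{i:intcone} following from elementary linear algebra and \ref{i:cone} assembled from them via a standard extreme-ray argument. For \ref{i:lcm}, I would bound each determinant directly by Hadamard's inequality: $|\det A_i| \leq d^{d/2}\Delta^d \leq (d\Delta)^d$. Since $M$ is the least common multiple of positive integers each of magnitude at most $(d\Delta)^d$, it divides $\operatorname{lcm}\bigl(1,\ldots,\lfloor(d\Delta)^d\rfloor\bigr)$, and the Chebyshev-type estimate $\operatorname{lcm}(1,\ldots,N) \leq 3^N$ then yields $M \leq 3^{(d\Delta)^d}$.

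For \ref{i:intcone}, fix $v \in \Cr\cap \Z^d$. For each $i$ there is $x_i \in \Q^d_{\geq 0}$ with $A_ix_i = v$, and invertibility gives $x_i = A_i^{-1}v = \frac{1}{\det A_i}\operatorname{adj}(A_i)\,v$. Because $\operatorname{adj}(A_i)$ has integer entries and $v$ is integer, $(\det A_i)\cdot x_i$ is integer, hence so is $Mx_i$ since $\det A_i$ divides $M$. Nonnegativity is preserved, so $Mv = A_i(Mx_i) \in \intcone(A_i)$ for every $i$, i.e.\ $Mv \in \Ir$.

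For \ref{i:cone}, I would pass to the $H$-representation of $\Cr$. Writing $\cone(A_i) = \{v : \operatorname{sgn}(\det A_i)\cdot \operatorname{adj}(A_i)\,v \geq 0\}$, we see that $\Cr$ is cut out by at most $\ell d$ linear inequalities whose coefficients are $(d-1)\times(d-1)$ minors of the $A_i$, each bounded in absolute value by $(d\Delta)^{d-1}$ via Hadamard. Assuming $\Cr$ is pointed, every extreme ray admits a primitive integer direction vector $r_j$ obtained by solving, via Cramer's rule, a $(d-1)\times(d-1)$ subsystem of these inequalities treated as equalities; another application of Hadamard yields $\|r_j\|_\infty \leq (d\Delta)^{\Oh(d^2)}$. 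Setting $v_j \coloneqq M r_j$ places each generator in $\Ir$ by \ref{i:intcone}, with $\|v_j\|_\infty \leq 3^{(d\Delta)^d}\cdot (d\Delta)^{\Oh(d^2)} = 2^{\Oh((d\Delta)^d)}$, and since each $v_j$ is a positive multiple of $r_j$, we obtain $\cone(v_1,\ldots,v_t) = \cone(r_1,\ldots,r_t) = \Cr$.

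The one subtlety I expect to handle with care is when $\Cr$ has a nontrivial lineality space, since then extreme rays alone do not generate $\Cr$. I would complement them with an integer basis $\{w_1,\ldots,w_k\}$ of the lineality space together with the corresponding negatives. Each $w_j$ can be extracted by the same Cramer/Hadamard bookkeeping, so its entries are bounded by $(d\Delta)^{\Oh(d^2)}$; and because $\pm w_j \in \Cr$, part \ref{i:intcone} places both $\pm M w_j$ in $\Ir$. This is the only place in the argument that branches, and it does not affect the final doubly exponential bound on $\|v_j\|_\infty$.
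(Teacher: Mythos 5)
Your proof is correct and matches the paper's approach point for point: Hadamard plus the Chebyshev-type $\lcm$ bound for \ref{i:lcm}, Cramer's rule (equivalently, the integrality of the adjugate) for \ref{i:intcone}, and an $H$-representation with extreme rays scaled by $M$ for \ref{i:cone}. The lineality-space branch you add at the end is actually vacuous: since each $A_i$ is invertible, $\cone(A_i)$ is a pointed simplicial cone, so $\Cr \subseteq \cone(A_1)$ is always pointed and the extreme-ray argument suffices on its own.
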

\begin{proof}
By the Hadamard bound, $|\det(A_i)|\leq (d\Delta)^d$ for each $i\in \{1,\ldots,\ell\}$. Hence, the least common multiple of the determinants of matrices $A_i$ is bounded by the least common multiple of all the integers in the range $\{1,\ldots,(d\Delta)^d\}$, which in turn is bounded by $3^{(d\Delta)^d}$~\cite[Theorem~1]{lcm_hanson_1972}. This implies \ref{i:lcm}.

We now move to assertion~\ref{i:intcone}. Consider any $i\in \{1,\ldots,\ell\}$. Since $v \in\cone(A_i)\cap \Z^d$, there exists $x_{i} \in \mathbb{Q}^d_{\geq 0}$ with $v = A_i x_{i}$. In fact, one has $x_{i} = {A_i}^{-1} v$. Then Cramer's rule implies that
  \begin{displaymath}
    x_{i} = y_{i} / |\det(A_i)|\qquad \text{ for some } y_{i} \in \Z_{\geq 0}^d. 
  \end{displaymath}
%  In fact, $y_i =\widetilde{A_i} v$, where $\widetilde{A_i} \in \Z^{d \times d}$ is the \emph{adjoint matrix} of $A_i$. 
  This shows that if we denote $$M \coloneqq \lcm\left(\{|\det(A_i)|\colon i\in \{1,\ldots,\ell\}\}\right),$$ then 
  \begin{equation*}
    \label{eq:2}    
    M \cdot v \in \intcone(A_i)\qquad\textrm{for each } i\in \{1,\ldots,\ell\}.
  \end{equation*}
  This establishes \ref{i:intcone}.
  
  Finally, by standard arguments~\footnote{This is a simple consequence of the Farkas-Minkowski-Weyl theorem and its proof in \cite[Corollary 7.1a]{schrijver1998theory}, we briefly sketch the argument. The cones $\cone(A_i)$ are finitely generated and as such admit a \emph{inequality description} of the form $\{c_j^\trans x \geq 0\}_{j \in I}$. Note that $c_j \in \Z^d$ and we may assume that $\|c_j\|_{\infty} \leq (d \Delta)^{d/2}$ due to the Hadamard bound. Cone $\Cr$ can then be obtained by conjunction of all the inequalities describing the cones $\cone(A_i)$. It follows that $\Cr$ is polyhedral, and thus is finitely generated. A generator $w$ of $\Cr$ satisfies $d-1$ of the inequalities of the form $c_i^\trans x \geq 0$ with equality, i.e., $w$ is orthogonal to $d-1$ vectors $c_k$. Thus, $w$ is in the kernel of some matrix with $d-1$ rows consisting of vectors $c_k^\trans \in \Z^{1\times d}$. We may take $w$ to be integral and, by the Hadamard bound and using that $\|c_k\|_{\infty} \leq (d\Delta)^{d/2}$, with $\ell_{\infty}$-norm bounded by
  	\begin{eqnarray*}
  		\|w\|_\infty& \leq & \left(d (d\Delta)^{d/2} \right)^{d/2} \leq (d\Delta)^{d^2}. 
  \end{eqnarray*}}, 
there exist  integer vectors $w_1,\dots,w_t\in \Cr\cap \Z^d$ with $\ell_\infty$-norm bounded by $(d\Delta)^{d^2}$ that generate the cone $\Cr$.
	
%  {\scriptsize \noindent   Although the arguments that show this are quite standard, see, e.g.~\cite{schrijver1998theory}, we sketch them here. An \emph{inequality description} of the cone  $\Cr$ can be obtained by conjunction of the inequalities describing the cones $\cone(A_i)$. Let $c^Tx ≥ 0$ be a facet of  $\cone(A_i)$ with $c∈ℤ^d$ and $\gcd(c) = 1$. This $c$ is the generator of the integer solutions in the kernel of a $(d-1)  \times d$ sub-matrix of $A_i$, namely the sub-matrix with rows of $A_i$ that span the corresponding facet. Cramer's rule implies $\|d\|_∞≤ (d Δ)^{d/2}$. An integral generator $w$ of $\Cr$ satisfies $d-1$ linear independent facets $c_k^T x ≥0$ of $\Cr$ with equality, i.e., is orthogonal to the vectors $c_k$. Analogous to the first transformation step, we conclude
%    \begin{eqnarray*}
%      \|w\|_∞ & ≤ & \left(d (dΔ)^{d/2} \right)^{d/2} \\
%              & ≤ & (d\Delta)^{d^2}. 
%    \end{eqnarray*}
%    }
%    %
%  \noindent     
  By \ref{i:lcm} and \ref{i:intcone}, we see that vectors
  \begin{displaymath}
    v_1\coloneqq M w_1,\qquad v_2\coloneqq M w_2, \qquad \ldots \qquad v_t\coloneqq M w_t 
  \end{displaymath}
  satisfy \ref{i:cone}. 
 \end{proof}

 % \begin{corollary}
 %   \label{co:1}
 %   Let $A_i \in \Z^{d \times d}$, $i=1,\dots,\ell$ be invertible integer matrices with $\|A_i\|_\infty \leq\Delta$ for each $i$ such that 
 %  \begin{equation}
 %    \label{eq:1}    
 %     \bigcap_{i = 1}^\ell \cone(A_i) \neq \emptyset, 
 %   \end{equation}
 %   then there exist integer vectors $v_1,\dots,v_\mu\in  \bigcap_{i = 1}^\ell \intcone(A_i) $ of $\ell_\infty$-norm bounded by $\Oh(3^{(d\Delta)^d})$ such that
 %   \begin{equation}
 %     \label{eq:12}
 %      \bigcap_{i = 1}^\ell \cone(A_i)  = \cone(v_1,\dots,v_\mu). 
 %   \end{equation}
 % \end{corollary}
 
 With \cref{lcm_trick} established, we may proceed to the main part of the proof.
 
 \begin{proof}[Proof of \cref{thm:klein_lemma_klein}]
   The proof is along the lines of Klein~\cite{Klein20}. Instead of using the Steinitz lemma, we apply
   Lemma~\ref{lem:1} and we use the Minkowski-Weyl theorem to deal with nonnegative entries.
   In matrix notation, \cref{thm:klein_lemma_klein} can be restated as follows.
   \begin{quote} \it 
   Let $D\coloneqq (2\Delta+1)^d$ and let
   $B\in \Z^{d \times D}$ be a matrix whose columns are all
   possible integer vectors of $\ell_\infty$-norm at most  $\Delta$.  Let
   $m_1,\ldots,m_n \in  \Z_{\geq 0}^{D}$ be nonnegative
   integer vectors such that the vectors $B m_i$ are almost equal, that is, one has 
   \begin{displaymath}
    \|Bm_i - b\|_{\infty} < \epsilon \qquad  \text{for each } \,i \in \{1,\dots,n\},  
  \end{displaymath}
  for some integer vector $b \in \Z^d$ and a positive integer $\epsilon$. Then supposing
  $\|b\|_{\infty} >  \epsilon \cdot 2^{\Oh(d\Delta)^d}$, there
  exist nonzero vectors $m'_1,\ldots,m'_n \in \Z_{\geq0}^{D}$ and ${b'} \in \Z^d$ such that for each $i\in \{1,\ldots,n\}$,
  \begin{enumerate}[label=(K\arabic*),ref=(K\arabic*),leftmargin=*]
  \item\label{i:smaller} $0 \leq m'_i \leq m_i$, 
  \item\label{i:equal}$B{m}_i' = {b'}$, and 
  \item\label{i:bound}
   $\|b'\|_{\infty} \leq 2^{\Oh(d\Delta)^d}$.
  \end{enumerate}
\end{quote}
%
%Let us assume that $\|b\|_{\infty} \geq 5^{(d\Delta)^d}$. 
	We focus on proving this formulation. To this end, for each $i \in \{1,\dots,n\}$ we choose some vector $r_i \in \Z_{\geq 0}^D$ such that
	$$B (m_i + r_i) = b.$$
	We can assume that $\|r_i\|_{\infty} \leq \epsilon$, for instance by putting nonzero values in $r_i$ only at entries corresponding to the columns $\{\pm e_1,\ldots,\pm e_d\}$ of $B$, where $\{e_1,\ldots,e_d\}$ is the standard base of $\Z^d$.

        \newcommand{\tu}{\tilde{u}}

	Set $z_i \coloneqq m_i + r_i$. Then the vectors $z_i$ belong to the following (unbounded) polyhedron:
	\begin{alignat*}{2}
		Q \coloneqq \{x \in \R^{D} \mid Bx = b \text{ and } x \geq 0\}.
	\end{alignat*}
	By the Minkowski-Weyl theorem~\cite[Theorem 8.5]{schrijver1998theory}, we
        can write
        \begin{equation}\label{eq:Q}
Q = \conv(\{u_1, \cdots, u_{\ell}\}) + \cone(\{c_1, \cdots,
        c_p\})\qquad \textrm{for some }\,u_1,\ldots,u_\ell, c_1,\ldots,c_p\in \Z^{D}.
        \end{equation} 
        Note that  $u_j\geq 0$ and $Bu_j = b$ for all $j\in \{1,\dots,\ell\}$, and $c_k\geq 0$ and $Bc_k = 0$ for all $k\in \{1,\dots,p\}$. 
        Observe also that $u_j$ are vertex solutions to the linear program defining $Q$. Hence, each $u_j$ has at most $d$ nonzero entries, and there is an invertible submatrix $B_j \in \Z^{d\times d}$ (the basis of $u_j$) consisting of $d$ columns of $B$ such that all nonzero entries of $u_j$ are at the coordinates corresponding to the columns of $B_j$. In particular $\tu_j = B_j^{-1}b$, where $\tu_j$ is $u_j$ projected onto the coordinates corresponding to the columns of~$B_j$. 
        
        Fix $i\in \{1,\ldots,n\}$.
        By~\eqref{eq:Q},
        we can write
        \begin{equation}
          \label{eq:3}
          z_i = \sum_{j=1}^{\ell}\lambda_j u_j + \sum_{k=1}^{p}\mu_k c_k, \qquad \text{ where } \sum_{j=1}^{\ell} \lambda_j = 1 \text{ and }\lambda_j, \mu_k \in \R_{\geq 0}.
        \end{equation}
	From Carathéodory's theorem, see \cite[Corollary 7.1i]{schrijver1998theory}, it follows that we can choose the coefficients $\lambda_1,\ldots,\lambda_\ell$ so that there exists an index $j \in \{1,\dots,\ell\}$ with $\lambda_{j} \geq 1/(d+1)$. Denote this index by $j(i)$. Since all involved vectors  and scalars in~\eqref{eq:3} are nonnegative, we conclude that
        \begin{equation}
          \label{eq:wydra}
          0 \leq u_{j(i)}/(d+1) \leq z_i.
        \end{equation}
        We will now argue that we can find a vector $c\in \Z^d$ and, for each $j\in \{1,\dots,\ell\}$, a nonzero vector ${u}'_j \in \Z_{\geq 0}^D$ such that 
	\begin{alignat}{1}\label{eq:desired_result2}
		%0 \cleq {u}'_{j} \cleq u_{j}/(d+1) \quad \text{ and }\quad 
		B{u}'_{j} = c \qquad \textrm{for all }j \in \{1,\dots,\ell\},
	\end{alignat}
	and
	\begin{alignat}{1}\label{eq:condition_approx}
		{u}'_{j(i)}\leq m_i\qquad \textrm{for all }i \in \{1,\dots,n\}.
	\end{alignat}
	and
	\begin{alignat}{1}\label{eq:final-bound}
		\|c\|_\infty \leq 2^{\Oh(d\Delta)^d}.
	\end{alignat}
	We remark that at the end of the proof we will set ${m}'_i \coloneqq {u}'_{j(i)}$ for all $i \in \{1,\dots,n\}$ and $b' \coloneqq c$. Observe that then, assertions \ref{i:smaller}, \ref{i:equal}, and \ref{i:bound} will immediately follow from \eqref{eq:desired_result2}, \eqref{eq:condition_approx}, and~\eqref{eq:final-bound}, respectively.
        
	%We first show \eqref{eq:desired_result2}.
	Since $Bu_j=B_j\tu_j = b$, it follows that $\Cr\coloneqq \bigcap_{j = 1}^{\ell} \cone(B_j) \neq \emptyset $. By \cref{lem:1}, assertion~\ref{i:cone}, there exist non zero integer vectors 
        $v_1,\dots,v_t \in  \bigcap_{j=1}^\ell \intcone(B_j)$
        of $\ell_\infty$-norm bounded by $2^{\Oh(d\Delta)^d}$
        such that
        \begin{displaymath}
          \cone(v_1,\dots,v_t) = \Cr.  %\bigcap_{j = 1}^{\ell} \cone(B_j). 
        \end{displaymath}       
	Since $b/(d+1) \in \Cr$, by Carathéodory's theorem, we can pick at most $d$ vectors of $\{v_1, \dots, v_t\}$, say $v_1, \ldots, v_d$, such that 
	\begin{alignat*}{1}
          b/(d+1) = \sum_{k=1}^d \alpha_k v_k \qquad \textrm{for some }\ \alpha_k\geq 0, \, k\in  \{1,\dots,d\}. 
	\end{alignat*}
	Now we use the assumption on $\|b\|_{\infty}$. Specifically, assume that
        \begin{displaymath}
          \|b\|_{\infty} > (d+1) d \cdot 2\epsilon \cdot \max_i \|v_i\|_\infty = \epsilon \cdot 2^{{\Oh(d\Delta)}^d}. 
        \end{displaymath}
        % from which we conclude  $\|b/(d+1)\|_{\infty} \geq 2 \epsilon (d\Delta)^{d^2} 4^{(d\Delta)^d}$.
        Observe that there exists an index $k\in \{1,\ldots,d\}$ such that $\alpha_k > 2 \epsilon$. Without loss of generality suppose that $k = 1$. Then we can write
	\begin{alignat*}{1}
		b/(d+1) = 2 \epsilon  v_1 + \underbrace{\left((\alpha_1-2\epsilon)v_1 + \sum_{k=2}^{d}\alpha_k v_k\right)}_{\coloneqq q}.
	\end{alignat*}
	Since $v_1 \in \intcone(B_j)$, for each $j\in \{1,\ldots,\ell\}$ there exists a vector $y_j \in \Z_{\geq 0}^d$ such that $B_jy_j = v_1$. Also, since $q$ is in the cone~$\Cr$,  there exist vectors $x_j \in \R_{\geq 0}^d$ such that $B_jx_j = q$. % (since $q \in \cone(\tilde{B}) \subseteq \cone(B_j)$).
        Thus, for each $j\in \{1,\ldots,\ell\}$ we have
	\begin{alignat*}{1}
		B_j\tu_j/(d+1) = b/(d+1)=2\epsilon v_1+q= B_j(2\epsilon y_j + x_j)
	\end{alignat*}
	Since matrices $B_j$ are invertible, this implies that
	\begin{alignat}{1}\label{eq:bobr}
		\tu_j/(d+1) = 2\epsilon y_j + x_j.
	\end{alignat}
	We set $c=v_1$ and for $j\in \{1,\ldots,\ell\}$, we define ${u}_j'\in \Z_{\geq 0}^D$ as follows:
	\begin{itemize}
	 \item every entry of $u_j'$ that corresponds to a column of $B_j$ is set to the corresponding entry of $y_j$; and
	 \item every other entry of $u_j'$ is set to $0$. 
	\end{itemize}
        Note that thus, vectors $u_j'$ are nonzero,  because $v_1=B_jy_j$ and $v_1$ is nonzero. Also $Bu_j'=B_jy_j=v_1$, so \eqref{eq:desired_result2} also holds. Since we have set $c = v_1$, we have that $\|c\|_{\infty} \leq 2^{\Oh(d\Delta)^d}$, which is~\eqref{eq:final-bound}.
        
        It remains to prove \eqref{eq:condition_approx}. Note that by~\eqref{eq:wydra} and~\eqref{eq:bobr}, for all $i \in \{1,\dots,n\}$ we have
	\begin{alignat*}{1}
		0\leq 2\epsilon y_{j(i)}\leq 2\epsilon y_{j(i)} + x_{j(i)}= \tilde{u}_{j(i)}/(d+1)\qquad\textrm{and}\qquad  u_{j(i)}/(d+1)\leq z_i=m_i+r_i.
	\end{alignat*}
	Since $\|r_i\|_{\infty} \leq \epsilon$ and $y_{j(i)}$ is nonzero, from the above inequalities it follows that $u_{j(i)}' \leq m_i$. This establishes~\eqref{eq:condition_approx} and concludes the proof. 
\end{proof}

%\bibliographystyle{alpha}
%\bibliography{ref,bibliography}

%%% Local Variables:
%%% mode: latex
%%% TeX-master: "main.tex"
%%% End:

\section{Proximity}\label{sec:proximity}
The goal of this section is to prove \cref{lem:proximity-multi} and \cref{lem:proximity-two}. Specifically, we bound the distance between an optimal fractional solution and an optimal integral solution in the case where the constraint matrix has bounded primal treedepth or is $(r,s)$-stochastic.
To facilitate the discussion of proximity, let us introduce the following definition.

\begin{definition}
 Let $P=(x,A,b,c)$ be a linear program in the form~\eqref{eq:ilp-eq-form}. The {\em{proximity}} of~$P$, denoted $\proximity(P)$, is the infimum of reals $\rho\geq 0$ satisfying the following: for every fractional solution $x^\frc\in \sol^\R(P)$ and integral solution $x^\itgo\in\sol^\Z(P)$, there is an integral solution $x^\itg\in \sol^\Z(P)$ such that 
 \[\|x^\itg-x^\frc\|_\infty\leq \rho
 \qquad \text{ and }\qquad x^\itg -x^\frc \cleq x^\itgo - x^\frc.\]
\end{definition}

The condition $x^\itg -x^\frc \cleq x^\itgo - x^\frc$ is equivalent to saying that $x^\itg$ is contained in the axis parallel box spanned by $x^\frc$ and $x^\itgo$, see \cref{i:rectangle}. Intuitively, $x^\itg$ is an integral solution that is close to $x^\frc$ in the $\ell_\infty$-distance while being placed ``in the same direction'' as $x^\itgo$.
\newcommand{\frco}{\bullet}
\begin{figure}[ht]
	\centering
	\begin{tikzpicture}
	\draw[help lines, color=gray!50, dashed] (-0.3,-0.3) grid (6.9,3.4);
	\draw[->,thick] (-0.3,0)--(7,0) node[right]{};
	\draw[->,thick] (0,-0.3)--(0,3.5) node[above]{};
	\node[fill, draw, circle, inner sep=1pt, black, label={0:$x^{\frc}$}] at (5.2,0.5) {};
	\node[label={0:$x \geq 0$}] at (6, 1.3){};
	\node[fill, draw, circle, inner sep=1pt, black, label={180:$x^{\itgo}$}] at (1,3) {};
	\node[fill, draw, circle, inner sep=1pt, black, label={0:$x^{\itg}$}] at (4,2) {};
	\draw[black] (1,3) rectangle (5.2, 0.5);
	\draw[->, dashed, gray] (4,2) -- (1,3);
	\node[fill, draw, circle, inner sep=1pt, black, label={90:$x^{\frco}$}] at (2.2,1.5) {};
	\draw[->, dashed, gray] (5.2, 0.5) -- (2.2,1.5);
	\end{tikzpicture}
	\caption{$x^{\itg} - x^{\frc} \cleq x^\itgo - x^\frc$, $x^\itg$ is in the rectangle spanned by $x^\itgo$ and $x^\frc$.}	
	\label{i:rectangle}
\end{figure}

Previously, the notion of proximity was mostly defined as the maximum distance of any optimal fractional solutions to its closest optimal integral solutions, see for instance~\cite{CookGST86, EisenbrandW19}. Our notion of proximity does not depend on the optimization goal, it is a geometric quantity associated only with the polytope $\sol^\R(P)$.
%Note that the proximity of a linear program $P$ is a quantity associated only with the polytope $\sol^\R(P)$, as it does not depend on the optimization goal.
However, this new notion can also be used to bound the distance of optimal fractional solutions to optimal integral solutions, as the next lemma explains.

\begin{lemma}\label{lem:proximity-optimal}
 Suppose $P=(x,A,b,c)$ is a linear program in the form~\eqref{eq:ilp-eq-form}. Then for every optimal fractional solution $x^\frc$ to $P$ there exists an optimal integral solution $x^\itg$ to $P$ satisfying
 $$\|x^\itg-x^\frc\|_\infty\leq \proximity(P).$$
\end{lemma}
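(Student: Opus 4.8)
The plan is to read the statement off from the definition of $\proximity(P)$, the only real content being to argue that the integral solution produced by that definition is not merely close to $x^\frc$ but in fact optimal. First I would dispose of degenerate cases: if $\sol^\Z(P)=\emptyset$ there is nothing to prove, so assume $\sol^\Z(P)\neq\emptyset$. Since an optimal fractional solution $x^\frc$ exists, $c^\trans x\geq\opt^\R(P)$ for all $x\in\sol^\R(P)\supseteq\sol^\Z(P)$; as $c$ is integral, $\{c^\trans x\colon x\in\sol^\Z(P)\}$ is a nonempty set of integers bounded from below, so $\opt^\Z(P)$ is attained, and I would fix an optimal integral solution $x^\itgo$.

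The heart of the argument is the claim that \emph{every} $w\in\sol^\Z(P)$ with $w-x^\frc\cleq x^\itgo-x^\frc$ is itself an optimal integral solution. To prove this I would look at the ``reflected'' point $z\coloneqq x^\frc+x^\itgo-w$. One has $Az=b$ immediately, and coordinatewise the relation $w-x^\frc\cleq x^\itgo-x^\frc$ forces each $w_i$ to lie between $x^\frc_i$ and $x^\itgo_i$; as both are nonnegative this gives $w_i\leq\max(x^\frc_i,x^\itgo_i)\leq x^\frc_i+x^\itgo_i$, i.e.\ $z\geq 0$, so $z\in\sol^\R(P)$. Optimality of $x^\frc$ then yields $c^\trans z\geq c^\trans x^\frc$, which rearranges to $c^\trans w\leq c^\trans x^\itgo=\opt^\Z(P)$; the reverse inequality is immediate from $w\in\sol^\Z(P)$, so $c^\trans w=\opt^\Z(P)$.

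To finish, I would set $W\coloneqq\{w\in\sol^\Z(P)\colon w-x^\frc\cleq x^\itgo-x^\frc\}$. The coordinatewise bound above shows that $W$ is a finite set of integer points contained in a bounded box, and $x^\itgo\in W$, so $W\neq\emptyset$; let $x^\itg\in W$ minimize $\|w-x^\frc\|_\infty$ over $w\in W$. By the claim $x^\itg$ is an optimal integral solution, so it only remains to bound $\|x^\itg-x^\frc\|_\infty$. This is trivial if $\proximity(P)=+\infty$; otherwise, for every real $\rho>\proximity(P)$ --- which is then an admissible bound in the definition of $\proximity(P)$ --- the defining property applied to the pair $(x^\frc,x^\itgo)$ yields some $w'\in\sol^\Z(P)$ with $w'-x^\frc\cleq x^\itgo-x^\frc$ and $\|w'-x^\frc\|_\infty\leq\rho$; then $w'\in W$, hence $\|x^\itg-x^\frc\|_\infty\leq\|w'-x^\frc\|_\infty\leq\rho$, and letting $\rho\downarrow\proximity(P)$ completes the proof.

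The only step I expect to be non-routine is the claim: the definition of $\proximity(P)$ returns merely \emph{some} integral point of the conformal box spanned by $x^\frc$ and $x^\itgo$, and one must exploit optimality of \emph{both} endpoints --- via the exchange $z=x^\frc+x^\itgo-w$, the mechanism underlying conformal decompositions --- in order to promote it to an optimal point. Everything else, including the passage through the infimum in the definition (routine, because the family of admissible bounds is upward closed and $W$ is finite), is a formality.
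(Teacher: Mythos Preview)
Your proof is correct and follows essentially the same approach as the paper: both hinge on the exchange argument $z\coloneqq x^\frc+x^\itgo-w$ (the paper calls it $x^\frco$), showing $z\in\sol^\R(P)$ and using optimality of $x^\frc$ and $x^\itgo$ to conclude that $w$ is optimal. Your version is more careful than the paper's in handling the degenerate case $\sol^\Z(P)=\emptyset$, the existence of an optimal integral solution, and the passage through the infimum in the definition of $\proximity(P)$ --- the paper simply applies the definition with $\rho=\proximity(P)$ as if the infimum were attained --- but the substance is identical.
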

\begin{proof}
 Consider any optimal integral solution $x^\itgo$ to $P$. By the definition of proximity, there is an integral solution $x^\itg$ to $P$ such that $\|x^\itg - x^\frc\|_\infty\le\proximity(P)$ and $x^\itg -x^\frc \cleq x^\itgo - x^\frc$. % want to show that $x^\itg$ is also optimal
 From the optimality of $x^\itgo$ we get that
 \[c^\trans (x^\itgo-x^\itg)\ge 0.\]
 Let $x^\frco\coloneqq x^\frc + x^\itgo-x^\itg$. This is a fractional solution to $P$, because $Ax^\frco=A(x^\frc + x^\itgo-x^\itg)=b$
 and a straightforward coordinate-wise verification shows that
 \[x^\frc\geq 0,\ x^\itgo\geq 0,\ x^\itg-x^\frc\cleq x^\itgo-x^\frc\quad\text{implies that}\quad x^\frco\geq 0.\]
 %For a negative entry of $x^\itgo-x^\itg$, the condition %$x^\itg -x^\itgo \cleq x^\frc - x^\itgo$ gives that the %corresponding entry of $x^\itgo - x^\frc$ is smaller. %Combining this with
 %$0\le x^\itgo = x^\frc + x^\itgo - x^\frc$
 %we can conclude that $x^\frc + x^\itgo-x^\itg \ge 0$.
 The optimality of $x^\frc$ then gives that
 $$c^\trans (x^\itgo -x^\itg) = c^\trans((x^\frc + x^\itgo - x^\itg) - x^\frc)=c^\trans(x^\frco - x^\frc) \le 0.$$
 This implies that $c^\trans x^\itgo = c^\trans x^\itg$, hence $x^\itg$ is also optimal.
\end{proof}%shorten?

For our main, technical result we will need some additional notation. Suppose that $A$ is a matrix admitting the stochastic decomposition~\eqref{eq:stochastic}. % The matrix was non-degenerate
Let $x_0,x_1,\ldots,x_t$ be the partition of the vector of variables $x$ so that $x_0$ corresponds to the columns of matrices $A_1,\ldots,A_t$, while $x_i$ corresponds to the columns of $B_i$, for each $i\in \{1,\ldots,t\}$. Partition $c$ into $c_0,c_1,\ldots,c_t$ in the same fashion, and partition $b$ into $b_1,\ldots,b_t$ so that $b_i$ corresponds to the rows of $A_i$ and $B_i$.
In this representation, the program $P$ takes the form:
\begin{align*}
 \min \sum_{i=0}^t c_i^\trans x_i &\\
 A_ix_0+B_ix_i = b_i & \qquad \textrm{for all }i\in \{1,\ldots,t\},\\
 x_i\geq 0 & \qquad \textrm{for all }i\in \{0,1,\ldots,t\}.
\end{align*}
For each $i\in \{1,\ldots,t\}$, let $D_i\coloneqq (A_i\ B_i)$ and 
consider the linear program 
\begin{equation*}
 P_i=\left(\begin{pmatrix}x_0\\ x_i\end{pmatrix},D_i,b_i,0\right),
\end{equation*}
that is, the linear program
\begin{gather*}
 \min 0 \\% c_i^\trans x_i \\
 A_ix_0+B_ix_i = b_i,\\
 x_0\geq 0,\quad x_i\geq 0.
\end{gather*}
Note that 
\[\begin{pmatrix}
      x_0 \\ x_1 \\ \vdots\\ x_t
     \end{pmatrix}\in \sol^\R(P)\quad\quad \textrm{if and only if}\quad\quad \begin{pmatrix}x_0\\ x_i\end{pmatrix}\in \sol^\R(P_i)\quad\textrm{for all }i\in \{1,\ldots,t\}.\]
We are now ready to state the main technical result of this section. Intuitively, it provides a single inductive step in the proof of \cref{lem:proximity-multi} and reduces \cref{lem:proximity-two} to the case of matrices with a bounded number of columns.

\begin{theorem}[Composition Theorem]\label{lem:proximity-uberlemma}
 Suppose $P=(x,A,b,c)$ is a linear program in the form~\eqref{eq:ilp-eq-form}, where $A$ admits a stochastic decomposition~\eqref{eq:stochastic}. Adopt the notation presented above and let $k$ be the number of columns of each of the matrices $A_1,\ldots,A_t$. Further, let
 $$
 \gamma\coloneqq \max_{1\leq i\leq t}\ g_\infty(D_i)\qquad \textrm{and}\qquad
 \rho\coloneqq \max_{1\leq i\leq t}
\ \proximity(P_i).$$
 Then
 $$\proximity(P)\leq 3k\gamma\rho\cdot f(k,\gamma)$$
 where $f(k,\gamma)$ is the bound provided by \cref{thm:klein_lemma_klein}.
\end{theorem}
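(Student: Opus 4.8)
The goal is to bound $\proximity(P)$, so fix a fractional solution $x^\frc\in\sol^\R(P)$ and an integral solution $x^\itgo\in\sol^\Z(P)$; I must produce an integral $x^\itg\in\sol^\Z(P)$ with $\|x^\itg-x^\frc\|_\infty\le 3k\gamma\rho\cdot f(k,\gamma)$ and $x^\itg-x^\frc\cleq x^\itgo-x^\frc$. Write everything blockwise as $x^\frc=(x_0^\frc,x_1^\frc,\dots,x_t^\frc)$ and similarly for $x^\itgo$. The plan is to first handle each block $i$ \emph{separately} using $\proximity(P_i)\le\rho$: for each $i$, apply the definition of proximity to $P_i$ with the fractional solution $(x_0^\frc,x_i^\frc)$ and the integral solution $(x_0^\itgo,x_i^\itgo)$, obtaining an integral $(\wh x_0^{(i)},\wh x_i^{(i)})\in\sol^\Z(P_i)$ within $\ell_\infty$-distance $\rho$ of $(x_0^\frc,x_i^\frc)$ and conformally between it and $(x_0^\itgo,x_i^\itgo)$. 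The trouble is that the blocks furnish \emph{different} candidate values $\wh x_0^{(1)},\dots,\wh x_0^{(t)}$ for the shared variable $x_0$, so they do not glue into a global solution. The heart of the argument is to reconcile them.

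To reconcile, observe that each discrepancy $\wh x_0^{(i)}-\lceil x_0^\frc\rceil$ (or a similar rounding of $x_0^\frc$ to a fixed integer vector $x_0^\star$ with $\|x_0^\star-x_0^\frc\|_\infty<1$) is an integer vector of $\ell_\infty$-norm at most $\rho+1$, lying in the lattice generated by movements that preserve feasibility in block $i$. More precisely, for each $i$ the vector $(\wh x_0^{(i)}-x_0^\star,\ \cdot\,)$ can be expressed, via \cref{cor:graver-columns} applied to $D_i$, as a conformal sum of at most $k(\rho+1)$ Graver basis elements of $D_i$, each of $\ell_\infty$-norm $\le\gamma$; projecting onto the $x_0$-coordinates, $\wh x_0^{(i)}-x_0^\star$ becomes a sum of at most $k(\rho+1)$ vectors each of $\ell_\infty$-norm $\le\gamma$ (entries summing in a controlled way). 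This is exactly the setup for \cref{thm:klein_lemma_klein}: I take $d\leftarrow k$, $\Delta\leftarrow\gamma$, the slack $\epsilon$ governed by $\|r_i\|_\infty\le k(\rho+1)\gamma$-type bounds — more carefully, I should set up the multisets $T_i$ of Graver vectors so that each $T_i$ sums (in the $x_0$-projection) to $\wh x_0^{(i)}-x_0^\star$, which differs from a common target $b:=x_0^\itgo-x_0^\star$ by at most $\rho$ in each coordinate, giving $\epsilon = \rho+1$. Then if $\|x_0^\itgo-x_0^\star\|_\infty$ is large, Klein's theorem extracts nonempty submultisets $S_i\subseteq T_i$ whose projected sums all equal a common vector $b'$ with $\|b'\|_\infty\le f(k,\gamma)$; undoing one such common chunk from every block simultaneously keeps every block feasible and reduces $\|x_0-x_0^\star\|_\infty$. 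Iterating (or a single clean application after noting the reduction bounds) drives $\|x_0^\itg-x_0^\star\|_\infty$ below $f(k,\gamma)$ while the number of Graver summands removed from each block is at most $k(\rho+1)$, so the $x_i$-coordinates move by at most $k(\rho+1)\gamma$ from $\wh x_i^{(i)}$, hence by at most $\rho+k(\rho+1)\gamma$ from $x_i^\frc$; combining with the $f(k,\gamma)$ bound on the $x_0$-part and absorbing constants yields the claimed $3k\gamma\rho\cdot f(k,\gamma)$.

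The main obstacle is the bookkeeping that keeps \emph{conformality} ($x^\itg-x^\frc\cleq x^\itgo-x^\frc$) intact throughout: both the per-block proximity step and the Klein-extraction-and-subtraction step must move coordinates only "towards $x^\itgo$ and no further," which is why choosing Graver decompositions that are conformal to $\wh x^{(i)}-x^\star$ (Graver basis elements always admit conformal decompositions of lattice vectors) and arguing that subtracting the $S_i$-chunk stays within the box is delicate. A secondary subtlety is matching the slack parameter $\epsilon$ in \cref{thm:klein_lemma_klein} to the actual discrepancies and verifying the hypothesis $\|b\|_\infty>\epsilon\cdot f(k,\gamma)$ precisely when it is needed (when it fails, $x_0$ is already close to $x_0^\star$ and we are done directly). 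I would structure the write-up as: (1) set up $x_0^\star$ and per-block proximity; (2) Graver-decompose each block discrepancy and extract the $x_0$-projections as Klein multisets; (3) invoke \cref{thm:klein_lemma_klein} and perform the simultaneous reduction, tracking conformality; (4) assemble the final bound.
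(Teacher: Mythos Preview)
Your plan has a genuine error in the setup for \cref{thm:klein_lemma_klein}. You propose multisets $T_i$ whose $x_0$-projections sum to $\wh x_0^{(i)}-x_0^\star$, and claim these sums are within $\rho$ of the target $b\coloneqq x_0^\itgo-x_0^\star$. But $\wh x_0^{(i)}$ lies within $\rho$ of $x_0^\frc$ and $x_0^\star$ is a rounding of $x_0^\frc$, so each sum $\wh x_0^{(i)}-x_0^\star$ has $\ell_\infty$-norm at most $\rho+1$; it is close to \emph{zero}, not to $x_0^\itgo-x_0^\star$, which is precisely the large quantity you are trying to control. Klein's hypothesis $\|\sum_{v\in T_i}v-b\|_\infty<\epsilon$ fails unless $x_0^\itgo$ is already close to $x_0^\frc$, which is the trivial case. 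Relatedly, the vector $(\wh x_0^{(i)}-x_0^\star,\ \cdot\ )$ that you Graver-decompose does not live in $\ker^\Z(D_i)$: $x_0^\star$ is an arbitrary rounding of $x_0^\frc$ and there is no companion $x_i^\star$ making $(x_0^\star,x_i^\star)$ a solution of $P_i$, so the difference is not a kernel element and Graver decomposition is not available.

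The paper avoids both problems by anchoring everything at a genuine integral solution rather than at $x_0^\star$. It takes $x^\itg$ to be an integral solution in the box $x^\itg-x^\frc\cleq x^\itgo-x^\frc$ that \emph{minimizes} $\|x^\itg-x^\frc\|_1$ (existence is witnessed by $x^\itgo$). Then, for each block, one Graver-decomposes $\wt x_i-\wt x^\itg_i$, where $\wt x_i$ is the per-block proximity witness and $\wt x^\itg_i$ is the restriction of $x^\itg$; both are integral solutions of $P_i$, so the difference is in $\ker^\Z(D_i)$. The projected sums equal $\pi(\wt x_i)-x_0^\itg$, which all lie within $\rho$ of the common target $x_0^\frc-x_0^\itg$: now Klein applies with $\epsilon=\rho$. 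If $\|x_0^\frc-x_0^\itg\|_\infty>\rho\cdot f(k,\gamma)$, Klein yields nonempty submultisets with equal projected sum; summing each gives a nonzero $u\in\ker^\Z(A)$ with $u\cleq x^\frc-x^\itg$, so $x^\itg+u$ is a strictly better candidate, contradicting minimality. Otherwise, in each block the Graver elements have nonzero projections (else a single such element already gives the desired $u$), so their number is at most $2k\rho f(k,\gamma)$ and hence $\|\wt x_i-\wt x^\itg_i\|_\infty\le 2k\gamma\rho f(k,\gamma)$, yielding the bound directly. The minimality argument replaces your iteration and handles conformality for free.
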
%\rho \cdot (k \gamma)^{\Oh(k^2\gamma^{k^2})}
Note that by substituting $f(k,\gamma)$ with the bound provided by \cref{thm:klein_lemma_klein}, we obtain that  $$\proximity(P)\le \rho\cdot 2^{\Oh(k\gamma)^{k}}.$$
Before we prove \cref{lem:proximity-uberlemma}, let us observe the following two consequences of it. As a base case, we give a bound on the proximity of a standard integer program defined by an integer matrix with $m$ columns. This is then first used to bound the proximity of an integer program defined by an $(r,s)$-stochastic matrix. The second consequence is a bound on the proximity of a integer program depending on the primal treedepth of the matrix defining it.

\begin{lemma}\label{lem:proximity-one}
	Let $P=(x,A,b,c)$ be a linear program in the form \eqref{eq:ilp-eq-form} where $A$ has $m$ columns. Then
	$$\proximity(P) \le (m \|A\|_\infty)^{m+1}.$$
\end{lemma}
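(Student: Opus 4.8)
The plan is to establish the statement in the purely geometric form given by the definition of $\proximity$: I will show that for every fractional solution $x^\frc\in\sol^\R(P)$ and every integral solution $x^\itgo\in\sol^\Z(P)$ there is an integral solution $x^\itg\in\sol^\Z(P)$ with $x^\itg-x^\frc\cleq x^\itgo-x^\frc$ and $\|x^\itg-x^\frc\|_\infty\leq(m\|A\|_\infty)^{m+1}$. (If $\sol^\Z(P)=\emptyset$ there is nothing to prove, and a nonzero integer matrix has $\|A\|_\infty\geq 1$, which I assume.) This is the classical proximity argument by rounding a conic decomposition. Fix $x^\frc,x^\itgo$, set $v\coloneqq x^\itgo-x^\frc$, and note $v\in\ker(A)$ since $Ax^\itgo=Ax^\frc=b$. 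Let $\Cc$ be the cone of all $z\in\ker(A)$ such that $z_iv_i\geq 0$ for every coordinate $i$ and $z_i=0$ whenever $v_i=0$; then $v\in\Cc$, and $\Cc$ is a pointed rational polyhedral cone, pointedness being immediate since all its members share the sign pattern of $v$ (with the zero coordinates of $v$ forced to zero).

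First I would bound the extreme rays of $\Cc$. Taking primitive integer generators $r_1,\dots,r_N\in\ker^\Z(A)$, they generate $\Cc=\cone(r_1,\dots,r_N)$. Each $r_j$ spans the one-dimensional solution space of the subsystem formed by the equalities $Az=0$ and the active sign constraints $z_i=0$ that are tight on the ray; the direction is therefore selected by a nonsingular square subsystem of order at most $m$ whose coefficient matrix has each row equal to a row of $\pm A$ (of Euclidean norm at most $\sqrt m\,\|A\|_\infty$) or to a standard basis vector, so Hadamard's inequality gives $\|r_j\|_\infty\leq(\sqrt m\,\|A\|_\infty)^m\leq(m\|A\|_\infty)^m$. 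I note in passing that each $r_j$ is in fact a Graver basis element of $A$, but the bound $g_\infty(A)\leq(2m\|A\|_\infty+1)^m$ of \cref{cor:graver-columns} turns out to be slightly too weak for the claimed estimate, so I would use the determinant bound above.

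Next, by Carathéodory's theorem for cones I may write $v=\sum_{j=1}^q\lambda_jr_j$ with $q\leq m$ (after relabelling) and $\lambda_j\geq 0$. Put $u\coloneqq\sum_{j=1}^q\lfloor\lambda_j\rfloor r_j\in\ker^\Z(A)$ and $x^\itg\coloneqq x^\itgo-u=x^\frc+(v-u)$. The verification then splits into two routine parts. For conformality: for each $i$ all of $r_{1,i},\dots,r_{q,i}$ carry the sign of $v_i$, so $|u_i|=\sum_j\lfloor\lambda_j\rfloor|r_{j,i}|\leq\sum_j\lambda_j|r_{j,i}|=|v_i|$ and $u_iv_i\geq 0$; thus $u$ — and hence $v-u$ — is conformally dominated by $v$, which gives $x^\itg-x^\frc\cleq x^\itgo-x^\frc$, while a coordinatewise case check on the sign of $v_i$ shows $x^\itg\geq 0$, so $x^\itg\in\sol^\Z(P)$ because $u$ is integral with $Au=0$. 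For the norm: $v-u=\sum_{j=1}^q(\lambda_j-\lfloor\lambda_j\rfloor)r_j$ has all coefficients in $[0,1)$, hence $\|x^\itg-x^\frc\|_\infty=\|v-u\|_\infty<q\cdot\max_j\|r_j\|_\infty\leq m\,(m\|A\|_\infty)^m\leq(m\|A\|_\infty)^{m+1}$, the last step using $m\leq m\|A\|_\infty$.

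I expect the only genuine work to be in the second paragraph: identifying precisely which subsystem of equalities and coordinate constraints produces a given extreme ray of $\Cc$ (equivalently, that every extreme ray is supported on a column set on which $A$ has corank one) and converting this into the Hadamard bound. Everything after that — the choice $u=\sum\lfloor\lambda_j\rfloor r_j$, the conformality check, the nonnegativity of $x^\itg$, and the final estimate — is straightforward, and the analysis leaves considerable slack against the stated bound $(m\|A\|_\infty)^{m+1}$, so no delicate constant-chasing should be needed.
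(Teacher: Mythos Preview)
Your proof is correct and in fact arrives at the identical numerical bound $m\,(m\|A\|_\infty)^m\leq(m\|A\|_\infty)^{m+1}$, but the route is different from the paper's. The paper does not carry out the conic decomposition and rounding argument directly; instead it reformulates the problem as a linear program with the box constraints $x-x^\frc\cleq x^\itgo-x^\frc$ (coordinatewise upper and lower bounds), observes that the constraint matrix still has $m$ columns with entries bounded by $\|A\|_\infty$, and then quotes Theorem~1 of Cook, Gerards, Schrijver and Tardos~\cite{CookGST86} as a black box to obtain an integral solution at $\ell_\infty$-distance at most $m$ times the maximum subdeterminant, which Hadamard bounds by $(m\|A\|_\infty)^m$.

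The two approaches are mathematically the same argument viewed at different levels of abstraction: your proof essentially re-derives the Cook et al.\ proximity theorem in the special case at hand (their proof is precisely the ``write $v$ as a conic combination of extreme rays, round the coefficients down'' technique you use). Your version is self-contained and makes the role of the sign-compatible cone and its extreme rays explicit, which meshes nicely with the Graver-basis language used elsewhere in the paper; the paper's version is shorter but relies on an external citation. One small remark: your phrasing ``nonsingular square subsystem of order at most $m$'' for the extreme-ray bound is slightly imprecise---what you actually use is that an extreme ray lies in the one-dimensional kernel of an $(m-1)\times m$ matrix whose rows are drawn from $A$ and from unit vectors, so that its primitive generator has entries bounded by the $(m-1)\times(m-1)$ minors---but the Hadamard estimate you state is correct.
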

\begin{proof}
We apply a classical theorem of Cook et al. \cite{CookGST86} to our notion of proximity. Let $x^\frc$ be a fractional solution and $x^\itgo$ an integral solution to \eqref{eq:ilp-eq-form}. Consider the following (integer) linear program:
\begin{gather*}
	\min c^\trans x \\
	Ax = b\\
	x-x^\frc \cleq x^\itgo - x^\frc
\end{gather*}
The constraint $x-x^\frc \cleq x^\itgo - x^\frc$ can be expressed as a conjunction of constraints of the form $x_i^\frc \leq x_i \leq x_i^\itgo$ or $x_i^\itgo \leq x_i \leq x^\frc$ for $i \in \{1, \ldots, m\}$, depending on whether $x_i^\frc \leq x_i^\itgo$ or $x_i^\itgo \leq x_i^\frc$. Thus, the constraint matrix has $m$ columns and its coefficients are bounded by $\|A\|_{\infty}$. By the Hadamard bound it follows that its largest sub-determinant is bounded by $(m\|A\|_{\infty})^m$. By \cite[Theorem 1]{CookGST86} we conclude that there is an integral solution $x^\itg$ such that $\|x^\itg - x^\frc\|_{\infty} \leq m(m\|A\|_{\infty})^m$ and $x^\itg - x^\frc \cleq x^\itgo- x^\frc$.
\end{proof}

\begin{corollary}\label{cor:proximity-two}
	Let $P=(x,A,b,c)$ be a linear program in the form \eqref{eq:ilp-eq-form}, where $A$ is $(r,s)$-stochastic.~Then
	$$\proximity(P)\le
	2^{\Oh(r(r+s)\|A\|_\infty)^{r(r+s)}}.$$
\end{corollary}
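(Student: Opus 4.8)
The plan is to invoke the Composition Theorem (\cref{lem:proximity-uberlemma}) directly with the stochastic decomposition witnessing that $A$ is $(r,s)$-stochastic, and then to control the two quantities $\gamma$ and $\rho$ occurring there: $\gamma$ via the Graver-complexity estimate of \cref{cor:graver-columns}, and $\rho$ via the base-case proximity bound of \cref{lem:proximity-one}.

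Concretely, I would begin by writing $A$ in the form~\eqref{eq:stochastic}, so that each block $A_1,\dots,A_t$ has exactly $r$ columns and each block $B_i$ has at most $s$ columns. Thus in the notation of \cref{lem:proximity-uberlemma} we have $k=r$, and for every $i$ the matrix $D_i=(A_i\ B_i)$ has at most $r+s$ columns and satisfies $\|D_i\|_\infty\le\|A\|_\infty$. Since $D_i$ has at most $r+s$ columns, \cref{cor:graver-columns} gives
$$\gamma=\max_{1\le i\le t} g_\infty(D_i)\le (2(r+s)\|A\|_\infty+1)^{r+s}.$$
Likewise, each $P_i$ is a linear program in the form~\eqref{eq:ilp-eq-form} whose constraint matrix $D_i$ has at most $r+s$ columns, so \cref{lem:proximity-one} yields
$$\rho=\max_{1\le i\le t}\proximity(P_i)\le \bigl((r+s)\|A\|_\infty\bigr)^{r+s+1}.$$

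It then remains to substitute these two bounds into the estimate $\proximity(P)\le \rho\cdot 2^{\Oh(k\gamma)^{k}}$ supplied by \cref{lem:proximity-uberlemma} (through \cref{thm:klein_lemma_klein}) and to simplify. The factor $\rho$ is only singly exponential in $\Oh\!\bigl((r+s)\log((r+s)\|A\|_\infty)\bigr)$, hence is absorbed by the second factor. For the second factor, with $k=r$ and the bound on $\gamma$ we get $(k\gamma)^{k}\le\bigl(r\,(2(r+s)\|A\|_\infty+1)^{r+s}\bigr)^{r}$, whose base-$2$ logarithm is $\Oh\!\bigl(r(r+s)\log((r+s)\|A\|_\infty)\bigr)$; this is at most $\bigl(r(r+s)\|A\|_\infty\bigr)^{r(r+s)}$ up to a multiplicative constant inside the base, which delivers exactly $\proximity(P)\le 2^{\Oh(r(r+s)\|A\|_\infty)^{r(r+s)}}$.

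I do not anticipate a genuine obstacle here: once the Composition Theorem and \cref{lem:proximity-one} are available, the argument is a routine plug-in. The only points that need a little care are (i) reading off from the $(r,s)$-stochastic shape that each $D_i$ has at most $r+s$ columns, so that both \cref{cor:graver-columns} and \cref{lem:proximity-one} may be applied with parameter $r+s$; and (ii) the bookkeeping of exponents in the final simplification, where one should be slightly generous and let the $\Oh(\cdot)$ in the exponent absorb the polynomial overhead (in the same spirit in which this estimate is later fed into \cref{thm:main-two}).
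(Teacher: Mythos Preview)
Your proposal is correct and matches the paper's own proof essentially line for line: the paper also applies \cref{lem:proximity-uberlemma} with $k=r$, bounds $\gamma$ via \cref{cor:graver-columns} and $\rho$ via \cref{lem:proximity-one} using that each $D_i$ has at most $r+s$ columns, and then combines these into the stated estimate. Your explicit bookkeeping of the final simplification is more detailed than what the paper writes, but the argument is the same.
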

\begin{proof}
	By assumption, matrix $A$ admits a decomposition of the form~\eqref{eq:stochastic}, where each block $A_i$ has $r$ columns and each block $B_i$ has at most $s$ columns. 
	Adopting the notation introduced before \cref{lem:proximity-uberlemma}, we see that each matrix $D_i=(A_i\ B_i)$ has at most $r+s$ columns. Applying \cref{lem:proximity-one} to $P_i$, we get 
	\[\proximity(P_i)\le ((r+s) \|A\|_\infty)^{r+s+1}.\]
	Further, by \cref{cor:graver-columns} we have
	\[g_\infty(D_i)\leq (2(r+s)\|A\|_\infty+1)^{r+s}.\] We now combine these two bounds using \cref{lem:proximity-uberlemma} to get the claimed bound on $\proximity(A)$.
\end{proof}

\begin{corollary}\label{cor:proximity-multi}
	There is a computable function $h\colon \N\times \N\to \N$ such that for every linear program $P=(x,A,b,c)$ in the form \eqref{eq:ilp-eq-form}, we have
	$$\proximity(P)\leq h(\tdP(A),\|A\|_\infty).$$
\end{corollary}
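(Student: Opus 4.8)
The plan is to prove a more general statement by induction and then deduce \cref{cor:proximity-multi} by a short reduction. Concretely, I would show: \emph{there is a computable function $q\colon\N^3\to\N$ such that every linear program $P$ in the form~\eqref{eq:ilp-eq-form} whose constraint matrix can be written as $(G\mid B)$, with $G$ having at most $m\ge 1$ columns, $B$ having depth at most $e$, and all entries bounded in absolute value by $\Delta$, satisfies $\proximity(P)\le q(e,m,\Delta)$.} To deduce \cref{cor:proximity-multi} from this, first permute the rows and columns of $A$ so that $\depth(A)=\tdP(A)=:d$; this is harmless because $\proximity(P)$ depends only on the polytope $\sol^\R(P)$ and the $\ell_\infty$-norm is invariant under permuting coordinates. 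Then decompose $A$ into its blocks, use that the proximity of a program with a block-diagonal constraint matrix is the maximum of the proximities over the blocks, and observe that each non-block-decomposable block of depth at most $d$, after peeling off its first column, takes the form $(G\mid B)$ with $|G|=1$ and $\depth(B)\le d-1$. Applying the general statement with $m=1$, $e=d-1$ then lets us set $h(d,\|A\|_\infty)\coloneqq q(d-1,1,\|A\|_\infty)$ (and $h(0,\cdot)\coloneqq 0$).

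For the induction itself, I would induct on $e$. The base case $e=0$ means $B$ has no columns, so the constraint matrix is $G$ up to all-zero rows, and \cref{lem:proximity-one} gives $\proximity(P)\le (m\Delta)^{m+1}$. For the inductive step with $e\ge 1$ and constraint matrix $M=(G\mid B)$, I would first pass to the blocks of $M$ to reduce to $M$ non-block-decomposable, and then distinguish two subcases. If $B$ is itself non-block-decomposable (or $G$ has no columns at all), peel off the first column of $B$: this rewrites $M$ as $(G'\mid B')$ with $|G'|\le m+1$ and $\depth(B')\le e-1$, and the induction hypothesis bounds $\proximity(P)$ by $q(e-1,m+1,\Delta)$. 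Otherwise $B$ decomposes into blocks $B_1,\dots,B_t$ with $t\ge 2$ and $G$ has $k\ge 1$ columns; then $M$, with $G$ as the global block and $B_1,\dots,B_t$ as the remaining blocks, is a matrix of the form~\eqref{eq:stochastic}, so I apply the Composition Theorem (\cref{lem:proximity-uberlemma}). Its parameter $\gamma=\max_i g_\infty\bigl((G^{(i)}\mid B_i)\bigr)$ is bounded by a computable function of $e+m$ and $\Delta$ via \cref{thm:graver-td}, since $(G^{(i)}\mid B_i)$ arises from $B_i$ (of primal treedepth at most $\depth(B)\le e$) by adjoining at most $m$ columns, hence has primal treedepth at most $e+m$; and its parameter $\rho=\max_i\proximity\bigl((G^{(i)}\mid B_i)\bigr)$ is bounded by $q(e-1,m+1,\Delta)$, because each $B_i$ is non-block-decomposable of depth at most $e$, so peeling off its first column again rewrites $(G^{(i)}\mid B_i)$ as $(G''\mid B_i')$ with $|G''|\le m+1$ and $\depth(B_i')\le e-1$. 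Substituting into \cref{lem:proximity-uberlemma} (whose bound involves the function $f$ from \cref{thm:klein_lemma_klein}) gives the recursion $q(e,m,\Delta)\le 3m\cdot\gamma\cdot f(m,\gamma)\cdot q(e-1,m+1,\Delta)$ with $\gamma$ computable in $e+m,\Delta$; since $e+m$ is invariant along the recursion, $\gamma$ stays bounded, and unrolling over the $\le d$ levels down to \cref{lem:proximity-one} produces an explicitly computable $q$.

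I expect the main obstacle to be getting the decomposition bookkeeping right. The naive idea of inducting directly on the depth of the constraint matrix fails: a single application of the Composition Theorem to a depth-$d$ matrix spawns subprograms whose matrices are again of depth $d$ (one global column on top of a depth-$(d-1)$ tail), so no progress is made, and in fact if one tried to apply \cref{lem:proximity-uberlemma} when the tail is a single block the resulting inequality would be circular. The remedy is to track, alongside the depth $e$ of the ``tail'', the number $m$ of columns already peeled into the global block: at each level $m$ increases by one while $e$ decreases by one, which both makes the recursion terminate and — crucially — keeps $e+m$ fixed, so that the primal treedepth of the auxiliary matrices feeding into \cref{thm:graver-td} stays bounded. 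Beyond this, one must be a little careful with the degenerate configurations (empty global block, all-zero rows, blocks of depth $0$), but each of these reduces immediately to \cref{lem:proximity-one} or to the induction hypothesis.
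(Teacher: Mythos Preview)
Your proposal is correct and follows essentially the same approach as the paper: both prove a more general statement by induction on a pair of parameters (number of ``global'' columns already peeled, depth of the remaining tail) that trade off by one at each step, with the paper's $(k,\ell)$ corresponding to your $(m,e)$, and both invoke \cref{lem:proximity-one} for the base case, \cref{thm:graver-td} to bound $\gamma$, and \cref{lem:proximity-uberlemma} for the recursion. The only cosmetic difference is that the paper avoids your case split by applying the Composition Theorem uniformly (it works even when $t=1$, i.e., when $B$ is non-block-decomposable), whereas you handle that subcase by a direct reparametrization; both are valid.
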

\begin{proof}
	We use induction on a more general problem. 
	Suppose $P=(x,A,b,c)$ is a linear program in the form~\eqref{eq:ilp-eq-form}, where $A$ has the following property: removing the first $k$ columns turns $A$ into a matrix of depth at most $\ell$. We would like to prove that
	\[\proximity(A)\le \wh{h}(k,\ell,\|A\|_\infty)\]
	for some computable function $\wh{h}$.
	The corollary then follows by considering the case $k=0$, that is, setting $h(d,\|A\|_\infty)=\wh{h}(0,d,\|A\|_\infty)$.
	
	To prove the general statement we proceed by induction on $\ell$, starting with $\ell=0$. Then $A$ is a matrix with $k$ columns, and, as discussed in \cref{lem:proximity-one}, we can fix a function
	\[\wh{h}(k,0,\|A\|_\infty)\in 
	\Oh(k \|A\|_\infty)^{k+1}.\]
	
	Let us proceed to the induction step for $\ell>0$.
	Since removing the first $k$ columns turns $A$ into a matrix of depth at most $\ell$, it follows that $A$ has a stochastic decomposition~\eqref{eq:stochastic}, where the matrices $A_i$ have $k$ columns each and the matrices $B_i$ have depth at most $\ell$. 
	We may further assume that matrices $B_i$ are not block-decomposable, hence each matrix $B_i$ becomes a matrix of depth at most $\ell-1$ after removing its first column.
	This implies that each matrix $D_i=(A_i\ B_i)$ has the following property: removing the first $k+1$ columns turns it into a matrix of depth at most $\ell-1$. 
	\cref{thm:graver-td} implies that
	\[g_\infty(D_i)\le f(\depth(D_i),\|D_i\|_\infty)\le f(k+\ell,\|A\|_\infty)\]
	for a computable function $f$, while the induction assumption gives
	\[\proximity(P_i)\le \wh{h}(k+1,\ell-1,\|A\|_\infty),\]
	where programs $P_i$ are defined as in the paragraph before \cref{lem:proximity-uberlemma}.
	We may now combine these two bounds using \cref{lem:proximity-uberlemma} to get a bound on $\wh{h}(k,\ell,\|A\|_\infty)$, expressed in terms of $f(k+\ell,\|A\|_\infty)$ and $\wh{h}(k+1,\ell-1,\|A\|_\infty)$.
\end{proof}

Now, \cref{lem:proximity-multi} and \cref{lem:proximity-two} follow by combining \cref{lem:proximity-optimal} with \cref{cor:proximity-multi} and \cref{cor:proximity-two}, respectively.

\subsection{Proof of \cref{lem:proximity-uberlemma}}
\label{sec:proximity-proof}

As we mentioned,
the proof of \cref{lem:proximity-uberlemma} relies heavily on \cref{thm:klein_lemma_klein}: the strengthening of the structural lemma of Klein~\cite{Klein20} that was discussed in \cref{sec:stronger-klein-bound}.

\begin{proof}[Proof of \cref{lem:proximity-uberlemma}]
	Consider any $x^\frc\in \sol^\R(P)$ and $x^\itgo\in \sol^\Z(P)$. Let
	$x^\itg\in \sol^\Z(P)$ be an integral solution such that $x^\itg - x^\frc \cleq x^\itgo - x^\frc$ and subject to the condition that $\|x^\itg - x^\frc\|_1$ is minimized.
	Our goal is to show that then $\|x^\itg - x^\frc\|_\infty \le 3 k \gamma \rho
	\cdot f(k,\gamma)$, where $f(\cdot,\cdot)$ is the function given by \cref{thm:klein_lemma_klein}.
	
	Observe that if there existed a non-zero vector $u\in \ker^\Z(A)$ such that $u\cleq x^\frc -x^\itg$, then we would have that $x^\itg + u \in \sol^\Z(P)$, $(x^\itg+u) - x^\frc \cleq x^\itg - x^\frc \cleq x^\itgo - x^\frc$, and the $\ell_1$ distance from $x^\frc$ to $x^\itg + u$ would be strictly smaller than to $x^\itg$. This would contradict the choice of $x^\itg$.
	Therefore, it is sufficient to show the following: if $\|x^\itg - x^\frc\|_\infty$ is larger than $3 k \gamma \rho\cdot f(k,\gamma)$, then there exists a non-zero vector $u \in \ker^\Z(A)$ such that $u \cleq x^\frc - x^\itg$.
	
	Consider any $i\in \{1,\ldots,t\}$ and
	denote the restrictions of $x^\frc$ and $x^\itg$ to the variables of $P_i$ as follows:
	\[\wt{x}^\frc_i \coloneqq 
	\begin{pmatrix} x^\frc_0\\ x^\frc_i\end{pmatrix}\in \sol^\R(P_i)\qquad
	\text{and}\qquad \wt{x}^\itg_i \coloneqq 
	\begin{pmatrix} x^\itg_0\\ x^\itg_i\end{pmatrix}\in \sol^\Z(P_i).\]
	By the definition of proximity, there is an integral solution
	\[
	\wt{x}_i\in \sol^\Z(P_i)\] such that 
	\[ \|\wt{x}_i - \wt{x}^\frc_i\|_\infty \le \proximity(P_i) \le \rho\qquad \text{ and }\qquad \wt{x}_i - \wt{x}^\frc_i \cleq \wt{x}^\itg_i - \wt{x}^\frc_i.\]
	Since $\wt{x}_i$ and $\wt{x}^\itg_i$ are both integral solutions to $P_i$, we have $\wt{x}_i - \wt{x}^\itg_i \in \ker^\Z(A_i\ B_i)$ and we can decompose this vector into a multiset $G_i$ of Graver elements. That is, $G_i$ is a multiset consisting of sign compatible (i.e., belong to the same orthant) elements of $\Graver(D_i)$ with
	\[\wt{x}_i - \wt{x}^\itg_i=\sum_{g\in G_i}g.\]
	%The variables $\wt{x}_1,\dots,\wt{x}_t$ overlap in the first $k$ entries and hence we have to consider the vectors $\wt{x}_1,\dots,\wt{x}_t$ simultaneously.
	Note that the first $k$ entries of vectors $\wt{x}_1,\dots,\wt{x}_t$ correspond to the same $k$ variables of $P$, but they may differ for different $i\in \{1,\ldots,t\}$.
	
	For a vector $w$, let $\pi(w)$ be the projection onto the first $k$ entries of~$w$. Let $\pi(G_i)$ be the multiset that includes a copy of $\pi(g)$ for each $g\in G_i$.
	By the definition of $\wt{x}^\frc_i$ and $\wt{x}^\itg_i$, we have $\pi(\wt{x}^\frc_i)=\pi(\wt{x}^\frc_j)$ and $\pi(\wt{x}^\itg_i)=\pi(\wt{x}^\itg_j)$ for all $i,j\in \{1,\dots,t\}$.
	From this we get
	\begin{align*}
	\Big\|\sum_{x\in \pi(G_i)}x\ - \pi(\wt{x}^\frc_1-\wt{x}^\itg_1)  \Big\|_\infty
	& = \|\pi(\wt{x}_i) - \pi(\wt{x}^\itg_i) - \pi(\wt{x}^\frc_1) + \pi(\wt{x}^\itg_1)\|_\infty\\
	& = \|\pi(\wt{x}_i) - \pi(\wt{x}^\frc_i)\|_\infty\\
	& = \|\wt{x}_i - \wt{x}^\frc_i\|_\infty\\
	& \le \rho,
	\end{align*}
	for each $i\in \{1,\dots,t\}$.
	Thus, \cref{thm:klein_lemma_klein} is applicable for $d=k$, $\Delta = \gamma$, and $\epsilon = \rho$. Note here that for each $i\in \{1,\ldots,t\}$ and $g\in G_i$, we have $\|g\|_\infty\leq \gamma$.
	In the following we distinguish two cases.

	%Since all the vectors $\pi(\wt{x}_i - \wt{x}^\itg_i)$ are sign compatible and they are decomposed into multisets $\pi(G_i)$ in a sign compatible way, we can flip the signs so that 
	
	Suppose first that
	\[\|\pi(\wt{x}^\frc_1 - \wt{x}^\itg_1)\|_\infty > \rho\cdot  f(k, \gamma).\]
	By \cref{thm:klein_lemma_klein}, there exist nonempty submultisets $S_1\subseteq \pi(G_1),\dots,S_t\subseteq \pi(G_t)$ such that
	\[\sum_{x\in S_i}x = \sum_{x\in S_j}x\qquad\text{for all } i,j\in \{1,\dots,t\}.\]
	Define a vector $u$ in the following way. For all $i\in\{1,\dots,t\}$, let $\wh{G}_i \subseteq G_i$ be submultisets with $\pi(\wh{G}_i)=S_i$ and set
	\[\wt{u}_i \coloneqq \sum_{g \in \wh{G}_i}g \in \ker^\Z(D_i).\]
	Observe that vectors $\pi(\wt{u}_i)$ are equal for all $i\in\{1,\dots,t\}$. This allows us to define $u$ as the vector obtained by combining all the $\wt{u}_i$, so that projecting $u$ to the variables of $P_i$ yields $\wt{u}_i$, for each $i\in \{1,\ldots,t\}$. Note that since multisets $\wh{G}_i$ are nonempty, $u$ is a non-zero vector. Also $u\in\ker^\Z(A)$, since $\wt{u}_i\in \ker^\Z(D_i)$ for all $i\in \{1,\ldots,t\}$. Further, we have $u \cleq x^\frc -x^\itg$, because for all $i\in \{1,\dots,t\}$,
	\[\wt{u}_i = \sum_{g \in \wh{G}_i}g \cleq \wt{x}_i - \wt{x}^\itg_i \cleq \wt{x}^\frc_i - \wt{x}^\itg_i.\]
	Thus, $u$ satisfies all the requested properties.
	
	We move to the second case: suppose that 
	\[\|\pi(\wt{x}^\frc_1 - \wt{x}^\itg_1)\|_\infty \le \rho\cdot f(k, \gamma).\]
	Since we have $\|\pi(\wt{x}_i - \wt{x}^\itg_i) - \pi(\wt{x}^\frc_1 - \wt{x}^\itg_1)\|_\infty \le \rho$ for all $i\in\{1,\dots,t\}$, we have 
	\[\|\pi(\wt{x}_i - \wt{x}^\itg_i)\|_\infty\leq \rho\cdot f(k,\gamma)+\rho
	\leq 2\rho\cdot f(k,\gamma)\qquad\text{for all }i\in\{1,\dots,t\}.\]
	
	Suppose for a moment that for some $i\in\{1,\ldots,t\}$, there exists an element $g\in G_i$ with $\pi(g)=0$. Then by putting zeros on all the other coordinates, we can extend $g$ to a vector $u\in\ker^\Z(A)$ which satisfies $u\cleq x^\frc - x^\itg$. As $g$ is non-zero, so is $u$, hence $u$ satisfies all the requested properties. Hence, from now on we may assume that no multiset $G_i$ contains an element $g$ with $\pi(g)=0$.
	
	Thus, we have that for all $i\in\{1,\dots,t\}$, the multiset $\pi(G_i)$ consists of non-zero, sign compatible, integral vectors. It follows that
	\[|G_i|=|\pi(G_i)| \le \Big\| \sum_{x\in \pi(G_i)}x\ \Big\|_1
	\le k \Big\| \sum_{x\in \pi(G_i)}x \ \Big\|_\infty  = k \|\pi(\wt{x}_i-\wt{x}^\itg_i)\|_\infty \le 2k\rho\cdot f(k, \gamma).\] 
	Since $\|g\|_\infty \le \gamma$ for every element $g\in G_i$, we infer that
	\[ \|\wt{x}_i - \wt{x}^\itg_i\|_\infty \le \Big\| \sum_{g\in G_i}g \Big\|_\infty
	\le \gamma |G_i| \le 2 k \gamma \rho\cdot f(k, \gamma). \]
	By combining this with $\| \wt{x}_i - \wt{x}^\frc_i \|_\infty \le \rho$, we get
	\[ \|\wt{x}^\itg_i - \wt{x}^\frc_i\|_\infty 
	\le \| \wt{x}^\itg_i - \wt{x}_i \|_\infty + \|\wt{x}_i - \wt{x}^\frc_i \|_\infty 
	\le 2 k \gamma \rho\cdot f(k, \gamma)+\rho \leq 3 k \gamma \rho\cdot f(k, \gamma).\]
	This implies that $\|x^\itg - x^\frc\|\leq 3 k \gamma \rho\cdot f(k, \gamma)$.
\end{proof}

\section{Solving the linear relaxation}\label{sec:lp}

In this section we prove \cref{lem:relaxation-multi} and \cref{lem:relaxation-two}. As mentioned in \cref{sec:intro}, we will rely on results of Cslovjecsek et al.~\cite{CslovjecsekEHRW20}, who considered the dual problem.

\begin{lemma}[Corollary~18 of~\cite{CslovjecsekEHRW20}, with adjusted notation]\label{lem:dual-multi}
 Suppose we are given a linear program $P=(x,A,b,c)$ in the form~\eqref{eq:ilp-eq-form}. Let $n$ be the number of columns of~$A$ and $d\coloneqq \depth(A^\trans)$. Then, in the PRAM model, one can using $n$ processors in time $\log^{\Oh(2^{d})} n$ compute an optimal fractional solution to $P$. 
\end{lemma}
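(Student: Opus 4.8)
The plan is to obtain \cref{lem:dual-multi} with essentially no work of our own: the statement is a restatement, in the notation of this paper, of Corollary~18 of Cslovjecsek et al.~\cite{CslovjecsekEHRW20}, and all that has to be done is to match up the two formalisms. Their corollary deals with linear programs whose constraint matrix has bounded \emph{dual} treedepth --- that is, the matrix becomes a bounded-depth matrix after transposition --- and bounds, in the PRAM model, the cost of computing an optimal fractional solution by $\log^{\Oh(2^d)} N$ on $N$ processors, where $d$ is the dual treedepth and $N$ is the number of columns of the matrix, provided that a witnessing row/column ordering (an elimination forest of the transpose) is supplied on input. Here we have set $d\coloneqq\depth(A^\trans)$ and taken $n$ to be the number of columns of $A$, so the hypotheses and the conclusion line up once one observes that ``dual treedepth at most $d$'' is exactly the condition $\depth(A^\trans)\leq d$, and that their size parameter --- the number of columns of the matrix --- is our~$n$. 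The write-up would consist of recording this correspondence; no new combinatorial or algorithmic argument is needed on our side.

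For orientation, let me recall why~\cite{CslovjecsekEHRW20} are able to prove such a bound, since this paper relies on it as a black box. After the given permutation, a bounded-dual-depth matrix has a recursive block-angular shape: a bounded number of ``linking'' rows above a block-diagonal remainder, and recursively so within each block. Their algorithm removes the linking rows by passing to a suitable dual --- the recursive ``Laplace dualization'' following Norton et al.~\cite{NortonPT92} --- turning those rows into a bounded number of variables; the program then splits into independent subprograms of strictly smaller dual depth, which are solved recursively and in parallel, after which the optimal values and solutions are recombined by a bounded-dimensional search over the linking block. Unrolling the $d$ levels of recursion, each of which contributes a polylogarithmic overhead, yields the $\log^{\Oh(2^d)} n$ running time; the per-level bookkeeping --- sorting, prefix sums, assembling the blocks --- is parallelized across $n$ processors exactly as in the proof of \cref{lem:block-partition}.

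The single delicate point in that scheme --- ensuring, by an amortized analysis, that the exponent of the polylogarithm remains $\Oh(2^d)$ rather than blowing up multiplicatively across the $d$ levels --- is entirely internal to~\cite{CslovjecsekEHRW20}. Consequently, for the purposes of this paper there is no real obstacle: the proof reduces to the notational translation described in the first paragraph, and I would keep it correspondingly short, essentially a pointer to~\cite[Corollary~18]{CslovjecsekEHRW20} together with the dictionary above.
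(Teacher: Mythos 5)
Your approach is exactly the paper's: Lemma~\ref{lem:dual-multi} is stated as ``Corollary~18 of~\cite{CslovjecsekEHRW20}, with adjusted notation'' and carries no proof of its own, so the only work is the notational dictionary you describe (dual treedepth $\leftrightarrow\depth(A^\trans)$, size parameter $\leftrightarrow$ number of columns $n$). Your sketch of the internal argument of~\cite{CslovjecsekEHRW20} is a fair summary but, as you say yourself, is not needed here since the reference is used as a black box.
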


\begin{lemma}[Corollary~17 of~\cite{CslovjecsekEHRW20}, with adjusted notation]\label{lem:dual-two}
  Suppose we are given a linear program $P=(x,A,b,c)$ in the form~\eqref{eq:ilp-eq-form}. Let $n$ be the number of columns of~$A$ and suppose further that $A^\trans$ is $(r,s)$-stochastic. Then, in the PRAM model, one can using $n$ processors in time $2^{\Oh(r^2+rs^2)}\cdot \log^{\Oh(rs)} n$ compute an optimal fractional solution to $P$.
\end{lemma}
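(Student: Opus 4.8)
The plan is to observe that this lemma is, up to a cosmetic change of notation, Corollary~17 of Cslovjecsek et al.~\cite{CslovjecsekEHRW20}, so that the proof reduces to matching the hypotheses and invoking that result. First I would unfold the assumption ``$A^\trans$ is $(r,s)$-stochastic'' via the definition from \cref{sec:prelims}: it says that after deleting the first $r$ columns of $A^\trans$ --- equivalently, the first $r$ rows of $A$ --- the remainder decomposes into blocks with at most $s$ columns each; transposing back, this means $A$ consists of at most $r$ ``global'' rows on top of a block-diagonal part whose diagonal blocks have at most $s$ rows. This is precisely the structural class of constraint matrices handled by the parallel linear programming solver of \cite{CslovjecsekEHRW20}. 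The size parameter ``$n=$ number of columns of $A$'' is the quantity in which \cite[Corollary~17]{CslovjecsekEHRW20} is stated, and the PRAM model, the processor count, and the running time $2^{\Oh(r^2+rs^2)}\cdot \log^{\Oh(rs)} n$ all carry over verbatim.

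The one point I would take care to verify is that the result provides an explicit \emph{optimal fractional solution vector}, not just the optimal value or a dual certificate, since that is what the statement (and its downstream uses in \cref{sec:lp}) requires. As recalled in \cref{sec:intro}, the algorithm of \cite{CslovjecsekEHRW20} is built on a recursive Laplace-dualization scheme in the spirit of Norton et al.~\cite{NortonPT92}: the block structure lets one write the optimum as a combination of a bounded number of subproblems of strictly smaller block-depth, with the combination step parallelizable, and unwinding the recursion yields mutually optimal primal and dual solutions --- in particular the primal optimum asked for here. A self-contained proof would simply reproduce this recursion for the present normal form, but as the setting is identical to \cite{CslovjecsekEHRW20} there is nothing genuinely new to do.

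Accordingly I do not anticipate a real mathematical obstacle; the work is bookkeeping. The two things to check are: (i) that the normal form $Ax=b$, $x\geq 0$ coincides with the one used in \cite{CslovjecsekEHRW20}, appealing to the standard cost- and structure-preserving equivalences between LP normal forms if it does not; and (ii) that the corner cases permitted by the \cref{sec:prelims} definition of $(r,s)$-stochastic matrices --- empty blocks and trailing all-zero rows --- are harmless, which they are. As remarked after \cref{lem:block-partition}, we assume throughout that the input matrix already arrives in stochastic form, so no preprocessing is needed to recover the decomposition and the stated running time is unaffected.
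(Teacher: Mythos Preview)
Your proposal is correct and matches the paper's treatment: the lemma is stated as a direct citation of Corollary~17 in~\cite{CslovjecsekEHRW20} with adjusted notation, and the paper provides no proof of its own. Your additional bookkeeping (unfolding the $(r,s)$-stochastic condition on $A^\trans$, checking that a primal solution vector is returned, and handling normal-form and corner-case trivia) is all reasonable but goes beyond what the paper actually does here.
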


In the following, we focus on the proof of \cref{lem:relaxation-multi}. The proof of \cref{lem:relaxation-two} follows from the same line of reasoning, so we only discuss necessary differences at the end. Throughout this section we only work with linear programming without any integrality constraints, so for brevity we drop adjectives ``fractional'' in the notation.

\bigskip

It will be convenient to work with linear programs in the following more general form:
\begin{gather}
 \min c^\trans x\nonumber \\
 Ax \leq b \tag{\ineqform}\label{eq:ilp-ineq-form}\\
 x\geq 0\nonumber
\end{gather}
Note that every linear program in the form~\eqref{eq:ilp-eq-form} can be reduced to a program in the form~\eqref{eq:ilp-ineq-form} by replacing each equality with two inequalities. This reduction preserves the depth of the constraint matrix, as well as being $(r,s)$-stochastic.

Thus, from now on let us fix a linear program $P=(x,A,b,c)$ in the form~\eqref{eq:ilp-ineq-form}; our goal is to compute an optimal solution to $P$.
Let $n$ be the number of rows of $A$ and $d\coloneqq \depth(A)$. We may assume that $A$ has no columns with only zero entries, hence $A$ has at most $dn$ columns.

The reason for using form~\eqref{eq:ilp-ineq-form} is that $P$ admits a simple formulation of the dual linear program. Namely, the dual of $P$ is the following program $P^\trans$:
\begin{gather}
 \max b^\trans y\nonumber \\
 A^\trans y \leq c\nonumber\\
 y\leq 0\nonumber
\end{gather}
We observe that with the help of \cref{lem:dual-multi}, we can efficiently solve $P^\trans$.

\begin{lemma}\label{lem:solve-dual}
 One can compute an optimal solution $y^\frc$ to $P^\trans$ in time $\log^{\Oh(2^d)} n$, using $n$ processors.
\end{lemma}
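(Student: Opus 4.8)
The statement to prove is \cref{lem:solve-dual}: one can compute an optimal solution $y^\frc$ to the dual program $P^\trans$ in time $\log^{\Oh(2^d)} n$ using $n$ processors, where $d = \depth(A)$. The natural approach is to recognize that $P^\trans$ is itself a linear program whose constraint matrix is (essentially) $A^\trans$, so its \emph{primal} structure is governed by the \emph{dual} treedepth of $A$ — precisely the regime covered by \cref{lem:dual-multi}. So first I would put $P^\trans$ into the standard form~\eqref{eq:ilp-eq-form} expected by \cref{lem:dual-multi}: the variable vector is $y$ with sign constraint $y \leq 0$ (substitute $y = -y'$ with $y' \geq 0$), the objective $\max b^\trans y$ becomes $\min (-b)^\trans y'$ after sign flip, and the inequality system $A^\trans y \leq c$ is turned into an equality system by adding slack variables, i.e. $-A^\trans y' + \mathrm{Id}\, z = c$ with $z \geq 0$. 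This produces a program in the form~\eqref{eq:ilp-eq-form} whose constraint matrix is $(-A^\trans \mid \mathrm{Id})$.

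\textbf{Controlling the treedepth.} The key verification is that the matrix $(-A^\trans \mid \mathrm{Id})$ has $\depth\big((-A^\trans \mid \mathrm{Id})^\trans\big) = \Oh(d)$, so that \cref{lem:dual-multi} applies with the right parameter. Transposing, $(-A^\trans \mid \mathrm{Id})^\trans = \binom{-A}{\mathrm{Id}}$, which is $A$ with an identity block stacked below it (after padding with zero columns appropriately). Adding an identity block below a matrix increases its depth by at most one: in the elimination forest of $A$, each new row of the identity block touches exactly one column, so it can simply be assigned as a descendant of that column's leaf (or absorbed into an existing block), raising the depth by at most $1$. Hence $\depth\big(\binom{-A}{\mathrm{Id}}\big) \leq d+1$. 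Now \cref{lem:dual-multi} — applied with the roles swapped, so that its hypothesis $d \coloneqq \depth(A^\trans)$ reads here as $\depth\big((-A^\trans\mid\mathrm{Id})^\trans\big) \leq d+1$ — yields an optimal solution in time $\log^{\Oh(2^{d+1})} n = \log^{\Oh(2^d)} n$, using $N$ processors where $N$ is the number of columns of $(-A^\trans\mid\mathrm{Id})$. Since $A$ has $n$ rows and at most $dn$ columns, the matrix $(-A^\trans \mid \mathrm{Id})$ has at most $dn + n = \Oh(dn)$ columns, which is $\Oh(n)$ for fixed $d$; and with a standard load-balancing observation, $\Oh(dn)$ processors can be simulated by $n$ processors with only an $\Oh(d)$ factor overhead in time, which is absorbed. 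Finally, from the returned optimal $y'$ I recover $y^\frc = -y'$, and since the reformulation was an exact bijection on feasible solutions preserving objective values (up to the fixed sign flip), $y^\frc$ is optimal for $P^\trans$.

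\textbf{Anticipated obstacle.} The conceptual content is trivial; the delicate points are entirely bookkeeping. The main thing to be careful about is that \cref{lem:dual-multi} is stated with "$n$ = the number of columns of $A$" and "$d = \depth(A^\trans)$", i.e. it is parameterized by the \emph{dual} depth and the \emph{column} count of \emph{its own} input matrix — so I must be scrupulous about which matrix plays which role when I invoke it, to avoid a circular or mismatched application. A secondary subtlety: \eqref{eq:ilp-eq-form} in this paper nominally has $b,c$ of lengths matching $m$ and $n$ respectively, and one should check the dimensions line up after adding slacks (the slack block has as many columns as $A^\trans$ has rows, i.e. the number of columns of $A$). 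Neither of these is a real difficulty, just a source of potential sign/index errors, so I would write the reformulation out once explicitly and then cite \cref{lem:dual-multi} verbatim. No genuinely hard step is expected here — this lemma is an applicability check for the black box of \cite{CslovjecsekEHRW20}, and the real work of the section lies in the subsequent lemmas that use $y^\frc$ together with complementary slackness to recover a primal optimum.
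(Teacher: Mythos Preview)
Your proposal is correct and follows essentially the same route as the paper: substitute $y \mapsto -y$, add slacks to reach form~\eqref{eq:ilp-eq-form}, observe that the resulting constraint matrix $(-A^\trans\ I)$ has bounded $\depth$ on the transpose side, and invoke \cref{lem:dual-multi}. The only refinement in the paper's version is that $\depth\big((-A^\trans\ I)^\trans\big)$ is in fact exactly $d$, not merely $\leq d+1$, since appending rows with a single non-zero entry adds no edges to the primal graph and hence does not affect block-decomposability at any stage of the recursive depth computation; either bound suffices for the claimed running time.
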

\begin{proof}
By negating the variables and introducing a vector of slack variables $z$, one for every constraint in~$P^\trans$, solving $P^\trans$ is equivalent to solving the following program $\wo{P}^\trans$:
 \begin{gather}
 \min b^\trans y\nonumber \\
 -A^\trans y + I z= c \nonumber\\
 y\geq 0, z\geq 0\nonumber
\end{gather}
More precisely, optimal solutions of $P^\trans$ can be obtained from optimal solutions of $\wo{P}^\trans$ by dropping variables of $z$ and negating the variables of $y$.
Note that 
$$\depth\left(\begin{pmatrix} -A^\trans & I \end{pmatrix}^\trans\right)=\depth(A)=d.$$
Since $\wo{P}^\trans$ is in the form~\eqref{eq:ilp-eq-form} and its constraint matrix has at most $n+dn$ columns, we may use \cref{lem:dual-multi} to find an optimal solution to $\wo{P}^\trans$ in time $\log^{2^{\Oh(d)}} n$.  Consequently, within the same asymptotic running time we can find an optimal solution $y^\frc$ to $P^\trans$.
\end{proof}

Thus, by applying the algorithm of \cref{lem:solve-dual}, we may assume that we have an optimal solution $y^\frc$ to the dual program $P^\trans$. 
Classic linear programming duality tells us that the optimum values of the programs $P$ and $P^\trans$ are equal. In other words, if we denote 
$$\lambda\coloneqq b^\trans y^\frc,$$
then $\lambda=\opt^\R(P)=\opt^\R(P^\trans)$. Note here that $\lambda$ can be computed from $y^\frc$ in time $\Oh(\log n)$. However, we are interested in computing not only the optimum value of a solution to $P$ --- which is $\lambda$ --- but we would like to actually find some optimal solution.

To this end, we will exploit the knowledge of $y^\frc$ trough the complementary slackness conditions. Let us denote the consecutive variables of $x$ as $x_1,\ldots,x_m$, where $m$ is the number of columns of $A$, and similarly enumerate the entries of $y$, $b$, and $c$. Also, let the consecutive columns of $A$ be $a_{\circ,1},\ldots,a_{\circ,m}$ and the consecutive rows of $A$ be $a_{1,\circ}^\trans,\ldots,a_{n,\circ}^\trans$.
The following claim captures the assertions that can be inferred from the complementary slackness conditions.

\begin{claim}\label{cl:slackness}
 There exists an optimal solution $x^\frc$ to $P$ satisfying the following properties:
 \begin{enumerate}[label=(S\arabic*),ref=(S\arabic*),leftmargin=*]
  \item\label{s:nzero-tight} For every $i\in \{1,\ldots,n\}$ such that $y^\frc_i<0$, we have $a_{i,\circ}^\trans x^\frc = b_i$.
  \item\label{s:ntight-zero} For every $j\in \{1,\ldots,m\}$ such that $a_{\circ,j}^\trans y^\frc < c_j$, we have $x^\frc_j=0$.
 \end{enumerate}
\end{claim}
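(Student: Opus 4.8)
The plan is to invoke the classical strong duality and complementary slackness theorem for the primal–dual pair $(P, P^\trans)$ in the inequality form~\eqref{eq:ilp-ineq-form}. First I would recall the exact statement: if $x^\frc$ is feasible for $P$ and $y^\frc$ is feasible for $P^\trans$, then both are optimal for their respective programs if and only if the complementary slackness conditions hold, namely
\[
y^\frc_i\bigl(a_{i,\circ}^\trans x^\frc - b_i\bigr)=0 \quad\text{for all }i\in\{1,\ldots,n\},
\qquad
x^\frc_j\bigl(a_{\circ,j}^\trans y^\frc - c_j\bigr)=0 \quad\text{for all }j\in\{1,\ldots,m\}.
\]
Since $\lambda=\opt^\R(P)=\opt^\R(P^\trans)$ is finite (the dual has the optimal solution $y^\frc$, and the primal is feasible — this should be noted, or otherwise we may assume feasibility of $P$, since the whole algorithm is run on a program that is assumed to have a solution), both optima are attained. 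Pick any optimal solution $x^\frc$ to $P$. Optimality of $x^\frc$ together with optimality of $y^\frc$ and strong duality forces the complementary slackness equalities above.

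From the first family of equalities, whenever $y^\frc_i\neq 0$ — in particular whenever $y^\frc_i<0$ (recall $y\le 0$ in $P^\trans$, so $y^\frc_i\le 0$ always, and the only way to be nonzero is to be strictly negative) — the other factor must vanish, giving $a_{i,\circ}^\trans x^\frc = b_i$; this is~\ref{s:nzero-tight}. From the second family, whenever the slack $a_{\circ,j}^\trans y^\frc - c_j\neq 0$ — in particular when $a_{\circ,j}^\trans y^\frc < c_j$ (again, feasibility of $y^\frc$ in $P^\trans$ means $A^\trans y^\frc\le c$, so the slack is always $\le 0$, hence nonzero iff strictly negative) — the factor $x^\frc_j$ must be $0$; this is~\ref{s:ntight-zero}. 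So in fact \emph{every} optimal solution $x^\frc$ to $P$ satisfies~\ref{s:nzero-tight} and~\ref{s:ntight-zero}, which is a slightly stronger statement than what is claimed (the claim only asserts existence).

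There is essentially no serious obstacle here; the only points requiring a little care are (i) making sure the dual $P^\trans$ written in the excerpt is genuinely the correct Lagrangian dual of $P$ in form~\eqref{eq:ilp-ineq-form} with the sign conventions $x\ge 0$, $Ax\le b$, $y\le 0$, $A^\trans y\le c$ — this is standard but worth a one-line check — and (ii) justifying that both optima are attained so that strong duality and complementary slackness apply in their exact-equality form rather than only as weak duality. For (ii), $y^\frc$ being an optimal solution certifies the dual optimum is attained and finite; since by assumption the overall program under consideration has a solution, $P$ is feasible, and a feasible linear program with bounded objective (bounded by $\lambda=b^\trans y^\frc$ via weak duality) attains its optimum. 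With these two remarks in place, the claim is an immediate consequence of the complementary slackness theorem, e.g.\ \cite[Section~7.4]{schrijver1998theory}.
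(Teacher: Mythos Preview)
Your proposal is correct and matches the paper's intent: the paper states the claim as a direct consequence of the complementary slackness conditions without giving a separate proof, and your argument fills in exactly those standard details. Your observation that in fact \emph{every} optimal primal solution satisfies \ref{s:nzero-tight} and \ref{s:ntight-zero} is correct and slightly sharper than the existential form the paper states.
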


Let 
$$X\coloneqq \{i\colon y^\frc_i<0\}\subseteq \{1,\ldots,n\}\qquad \textrm{and}\qquad Y\coloneqq \{j\colon a_j^\trans y^\frc<c_j\}\subseteq \{1,\ldots,m\}$$
be the sets of indices to which the implications of \cref{cl:slackness} apply. Before we continue, let us discuss computing $X$ and $Y$ in the PRAM model using $n$ processors.
Obviously, $X$ can be computed in time $\Oh(1)$. As for $Y$, we claim that it can be computed in time $d^{\Oh(1)}\cdot \log n$. Observe that computing the inner products $a_{
\circ,j}^\trans y^\frc$ for all $j\in \{1,\ldots,m\}$ boils down to computing $m$ sums, where the $j$th sum ranges over the list of non-zero entries of the column $a_{
\circ,j}$. Such lists can be computed in time $d^{\Oh(1)}\cdot \log n$ by sorting the list of non-zero entries of $A$ in the lexicographic order (first by the column index and then by the row index), and then splitting it appropriately. Since the total length of the lists is at most $dn$, their sums can be computed in time $d^{\Oh(1)}\cdot \log n$ on $n$ processors by assigning to each list a number of processors proportional to its length.

Thus, we may assume that we have computed the sets $X$ and $Y$. Let $\wo{X}=\{1,\ldots,n\}\setminus X$. We introduce the following notation:
\begin{itemize}[nosep]
 \item Let $\tilde{x}$, $\tilde{c}$, $\tilde{y}$, $\tilde{b}$ be obtained from $x$, $c$, $y$, $b$ by removing all the entries with indices in $Y$, $Y$, $\wo{X}$, and~$\wo{X}$, respectively.
 \item Let $\dbtilde{x}$ be the vector of the remaining variables of $x$, i.e., those with indices in $Y$.
 \item Let $\wt{A}$ be the matrix obtained from $A$ by removing all rows with indices in $\wo{X}$ and all columns with indices in $Y$. 
\end{itemize} 
The notation is extended to solutions $x^\frc$, $y^\frc$, etc. naturally.
Note that all these objects can be computed in time $d^{\Oh(1)}$ using $n$ processors.

Observe that 
\begin{equation}\label{eq:beaver}
 \wt{A}^\trans \tilde{y}^\frc = \tilde{c},
\end{equation}
as in the solution $y^\frc$ to $P^\trans$, all the constraints with indices outside of $Y$ are tight, by the definition of $Y$, while the entries of $y^\frc$ outside of $\tilde{y}^\frc$ are zeros anyway. Further, \cref{cl:slackness} can be rewritten as follows.

\begin{claim}\label{cl:slackness-2}
 There exists an optimal solution $x^\frc$ to $P$ such that
 \begin{equation}\label{eq:slackness-2}
 \wt{A}\tilde{x}^\frc =\tilde{b}\qquad\textrm{and}\qquad \dbtilde{x}^\frc=0.
 \end{equation}
\end{claim}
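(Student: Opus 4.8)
The plan is to observe that \cref{cl:slackness-2} is nothing more than \cref{cl:slackness} rephrased in the reduced notation, so the only real task is to check that the row/column deletions defining $\wt A$, $\tilde b$, $\tilde x$, $\dbtilde x$ line up with conditions \ref{s:nzero-tight} and \ref{s:ntight-zero}. For completeness, recall why \cref{cl:slackness} itself holds: after the harmless rewriting $Ax\le b\Leftrightarrow -Ax\ge -b$ and the substitution $y\mapsto -y$, the program $P^\trans$ is the standard Lagrangian dual of $P$, so complementary slackness for the primal--dual pair $(P,P^\trans)$ states exactly that $y^\frc_i<0$ forces the $i$-th primal inequality to be tight (this is \ref{s:nzero-tight}), and $a_{\circ,j}^\trans y^\frc<c_j$ forces $x^\frc_j=0$ (this is \ref{s:ntight-zero}). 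Hence every optimal solution of $P$ already satisfies both conditions; I would fix one such $x^\frc$ and keep it for the rest of the argument.

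From this $x^\frc$ the two assertions of \cref{cl:slackness-2} come out as follows. The vector $\dbtilde x$ is by definition the subvector of $x$ on the coordinates indexed by $Y$, and \ref{s:ntight-zero} says each of these coordinates of $x^\frc$ vanishes, so $\dbtilde x^\frc=0$. For the first equality, I would unfold the definitions: since $\wo X=\{1,\dots,n\}\setminus X$, the vector $\tilde b$ is $b$ restricted to the rows indexed by $X$, and $\wt A$ is the submatrix of $A$ on the rows indexed by $X$ and the columns indexed by $\{1,\dots,m\}\setminus Y$. Therefore, for each $i\in X$,
\[
(\wt A\,\tilde x^\frc)_i=\sum_{j\in\{1,\dots,m\}\setminus Y}a_{i,j}\,x^\frc_j=\sum_{j=1}^{m}a_{i,j}\,x^\frc_j=a_{i,\circ}^\trans x^\frc,
\]
where the middle step uses that $x^\frc_j=0$ for $j\in Y$, which we just established. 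By \ref{s:nzero-tight}, $a_{i,\circ}^\trans x^\frc=b_i$ for all $i\in X$, which is precisely $\wt A\,\tilde x^\frc=\tilde b$. This yields~\eqref{eq:slackness-2}.

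I do not expect a genuine obstacle: the statement contributes no mathematical content beyond \cref{cl:slackness}, and the only point requiring any care is that dropping the $Y$-columns of $A$ before multiplying does not alter the product on the surviving rows --- which is guaranteed exactly because $x^\frc$ vanishes on $Y$. Everything else --- $x^\frc\ge 0$, $Ax^\frc\le b$, and $c^\trans x^\frc=\opt^\R(P)$ --- is inherited verbatim from the fact that $x^\frc$ was chosen as an optimal solution to $P$ in \cref{cl:slackness}.
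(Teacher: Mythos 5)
Your proof is correct and takes the same approach the paper intends: the paper simply asserts that \cref{cl:slackness-2} is a rewriting of \cref{cl:slackness}, and your argument supplies exactly the omitted bookkeeping — deducing $\dbtilde{x}^\frc=0$ from \ref{s:ntight-zero}, and then using that vanishing to identify the truncated product $(\wt A\tilde x^\frc)_i$ with the full inner product $a_{i,\circ}^\trans x^\frc=b_i$ via \ref{s:nzero-tight} for $i\in X$.
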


The next lemma explains the main gain provided by the complementary slackness conditions: if a solution to $P$ satisfies the equalities given in \cref{cl:slackness-2}, then it automatically is an optimal solution.

\begin{lemma}\label{lem:slack-optimal}
 Suppose a vector $x^\frc\in \R_{\geq 0}^m$ satisfies~\eqref{eq:slackness-2}. Then $c^\trans x^\frc=\lambda$.
\end{lemma}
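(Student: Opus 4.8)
The statement is a direct consequence of the complementary slackness bookkeeping already set up, and the proof is a short chain of equalities. Starting from $c^\trans x^\frc$, I would first split the sum according to whether a column index lies in $Y$ or not. Since the hypothesis $\dbtilde{x}^\frc=0$ means precisely that $x^\frc_j=0$ for every $j\in Y$, the $Y$-part of the sum vanishes and we are left with $c^\trans x^\frc=\tilde c^\trans\tilde x^\frc$, where $\tilde c$ and $\tilde x$ are $c$ and $x^\frc$ with the $Y$-entries deleted.

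Next I would invoke the identity~\eqref{eq:beaver}, namely $\wt A^\trans\tilde y^\frc=\tilde c$, to rewrite $\tilde c^\trans\tilde x^\frc=(\wt A^\trans\tilde y^\frc)^\trans\tilde x^\frc=(\tilde y^\frc)^\trans\wt A\tilde x^\frc$. Now the second half of the hypothesis, $\wt A\tilde x^\frc=\tilde b$, turns this into $(\tilde y^\frc)^\trans\tilde b$. Finally, $\tilde y^\frc$ and $\tilde b$ are $y^\frc$ and $b$ restricted to the coordinate set $X=\{i\colon y^\frc_i<0\}$, so $(\tilde y^\frc)^\trans\tilde b=\sum_{i\in X}y^\frc_i b_i$; and since $y^\frc$ is a feasible solution to $P^\trans$ we have $y^\frc\le 0$, hence $y^\frc_i=0$ for every $i\notin X$ and the sum over $X$ equals the full sum $b^\trans y^\frc=\lambda$. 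Chaining these equalities gives $c^\trans x^\frc=\lambda$.

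\textbf{Main obstacle.} There is no real difficulty here: the argument is purely linear-algebraic manipulation of the restricted vectors and matrices. The only point that needs a moment's care is the very last step, where one must use that the constraint $y\le 0$ of $P^\trans$ forces the entries of $y^\frc$ outside $X$ to be exactly zero (not merely nonpositive), so that restricting the inner product $b^\trans y^\frc$ to the index set $X$ loses nothing. Everything else is just unwinding the definitions of $\tilde{\cdot}$ and $\dbtilde{\cdot}$ introduced before the lemma.
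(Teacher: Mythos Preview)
Your proposal is correct and follows exactly the same chain of equalities as the paper's proof: use $\dbtilde{x}^\frc=0$ to reduce to $\tilde c^\trans\tilde x^\frc$, substitute $\tilde c=\wt A^\trans\tilde y^\frc$ via~\eqref{eq:beaver}, regroup and apply $\wt A\tilde x^\frc=\tilde b$, and finally identify $\tilde b^\trans\tilde y^\frc$ with $b^\trans y^\frc=\lambda$ using that $y^\frc_i=0$ for $i\notin X$. Your explicit justification of the last step (via the sign constraint $y\le 0$ in $P^\trans$) is in fact a slight clarification of a point the paper handles only implicitly.
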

\begin{proof}
 We observe that
 \begin{align*}
  c^\trans x^\frc & = \tilde{c}^\trans \tilde{x}^\frc & & \text{[by } \dbtilde{x}^\frc=0\text{]}\\
                  & = \left(\wt{A}^\trans \tilde{y}^\star\right)^\trans \tilde{x}^\frc & & \text{[by~\eqref{eq:beaver}]} \\
                  & = (\tilde{y}^\star)^\trans \wt{A}\tilde{x}^\frc & & \\
                  & = (\tilde{y}^\star)^\trans \tilde{b} & & \text{[by } \wt{A}\tilde{x}^\frc =\tilde{b}\text{]}\\
                  & = \tilde{b}^\trans \tilde{y}^\star = \lambda, & & \text{[as the value is a scalar]}
 \end{align*}
 as claimed.
\end{proof}
Now, consider the following linear program $\wh{P}$ and recall that $x_1$ denotes the first variable of $x$:
\begin{align*}
 \min x_1 &  \\
 Ax \leq b, & \qquad \wt{A}\tilde{x} = \tilde{b},  \\
 x\geq 0, & \qquad \dbtilde{x}=0.
\end{align*}
By \cref{cl:slackness-2}, there exists an optimal solution to $P$ that is also a feasible solution to $\wh{P}$. On the other hand, by \cref{lem:slack-optimal}, every feasible solution to $\wh{P}$ is actually an optimal solution to $P$. Therefore, there exists an optimal solution $x^\frc$ to $P$ that satisfies the following: $x^\frc_1=\opt^\R(\wh{P})$. 

We now observe that the value $\opt^\R(\wh{P})$ can be computed in time $\log^{\Oh(2^d)} n$ using the same approach as that used in \cref{lem:solve-dual}. Namely, we may eliminate all the variables of $\dbtilde{x}$ from $\wh{P}$ by just substituting them with zeroes, and we can replace each equality from $\wt{A}\tilde{x}=\tilde{b}$ with two inequalities. In this way, we obtain an equivalent linear program in the form~\eqref{eq:ilp-ineq-form} with at most $3n$ constraints, whose constraint matrix has depth at most $d$. Now, using the approach from \cref{lem:solve-dual} we may find an optimal solution to the dual of this program, whose value coincides with $\opt^\R(\wh{P})$.

To summarize, we have argued the following claim.

\begin{claim}\label{cl:get-x1}
In time $\log^{\Oh(2^d)} n$ we may compute a value $\xi$ (equal to $\opt^\R(\wh{P})$) with the following property: there exists an optimal solution $x^\frc$ to $P$ such that $x_1^\frc=\xi$. 
\end{claim}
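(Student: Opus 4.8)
The plan is to derive the claim by assembling three facts that are already in place: the structural statements \cref{cl:slackness-2} and \cref{lem:slack-optimal}, and the dualization technique used to prove \cref{lem:solve-dual}. First I would settle the structural half, namely that $\xi\coloneqq\opt^\R(\wh{P})$ has the asserted property. Since the constraints of $\wh{P}$ include all of $Ax\le b$ and $x\ge 0$, every feasible solution to $\wh{P}$ is a feasible solution to $P$, and by \cref{lem:slack-optimal} it has objective value $\lambda=\opt^\R(P)$, hence is optimal for $P$. By \cref{cl:slackness-2} there is at least one optimal solution to $P$ that is feasible for $\wh{P}$, so $\wh{P}$ is feasible; as its objective $x_1$ is bounded below by $0$, an optimal solution $x^\frc$ to $\wh{P}$ exists. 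This $x^\frc$ is then an optimal solution to $P$ with $x_1^\frc=\opt^\R(\wh{P})=\xi$, which is what the claim asserts. It remains to compute $\xi$ in time $\log^{\Oh(2^d)} n$ on $n$ processors.

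For this I would follow the proof of \cref{lem:solve-dual}. Substitute zeroes for the variables $\dbtilde{x}$ in $\wh{P}$, and replace each equality of $\wt{A}\tilde{x}=\tilde{b}$ by the pair $\wt{A}\tilde{x}\le\tilde{b}$, $-\wt{A}\tilde{x}\le-\tilde{b}$; this produces an equivalent program $\wh{P}'$ in the form~\eqref{eq:ilp-ineq-form}, with objective $x_1$, at most $3n$ constraints, at most $dn$ variables, and a constraint matrix of depth at most $d$ (deleting rows and columns of $A$, negating rows, and stacking row blocks over the same set of columns only makes the primal graph a subgraph of that of $A$). Since $\wh{P}'$ is feasible and bounded, strong linear programming duality applies, so its dual $(\wh{P}')^\trans$ — with right-hand side equal to the first unit vector — attains the value $\opt^\R(\wh{P}')=\opt^\R(\wh{P})$. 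Exactly as in \cref{lem:solve-dual}, negating the dual variables and introducing a vector of slack variables turns $(\wh{P}')^\trans$ into an equivalent program in the form~\eqref{eq:ilp-eq-form} whose constraint matrix is $\begin{pmatrix}-(\wh{A}')^\trans & I\end{pmatrix}$; its transpose has depth $\depth(\wh{A}')\le d$, and it has at most $(d+3)n$ columns. Applying \cref{lem:dual-multi} solves it in time $\log^{\Oh(2^d)} n$ using $n$ processors, and from the returned optimal solution we recover $\xi=\opt^\R(\wh{P})$ by a single inner product, computable in time $\Oh(\log n)$.

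For the bookkeeping I would note that $X$ is obtained in time $\Oh(1)$ and $Y$ in time $d^{\Oh(1)}\cdot\log n$ (as already argued), after which $\wt{A}$, $\tilde{b}$, $\dbtilde{x}$ and hence $\wh{P}'$ are assembled in time $d^{\Oh(1)}\cdot\log n$ on $n$ processors; thus the dominant cost is the single invocation of \cref{lem:dual-multi}, and the total is $\log^{\Oh(2^d)} n$. The point needing the most care is the assertion that solving the dual actually returns $\opt^\R(\wh{P})$: this uses strong duality and therefore requires $\wh{P}$ to be feasible — which is precisely what \cref{cl:slackness-2} supplies — and bounded, which is immediate from $x_1\ge 0$. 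The remaining points (preservation of depth under the reductions, and absorbing the polynomial overheads into $\log^{\Oh(2^d)} n$) are routine, and in fact the whole argument is essentially a packaging of the facts established earlier in this section.
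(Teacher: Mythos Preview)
Your proposal is correct and follows essentially the same approach as the paper: establish that $\wh{P}$ is feasible via \cref{cl:slackness-2}, that every feasible solution of $\wh{P}$ is optimal for $P$ via \cref{lem:slack-optimal}, then compute $\opt^\R(\wh{P})$ by eliminating the $\dbtilde{x}$ variables, splitting equalities into inequalities, and solving the dual via the technique of \cref{lem:solve-dual}. You are in fact slightly more careful than the paper in explicitly verifying boundedness of $\wh{P}$ (from $x_1\ge 0$) so that strong duality applies.
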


We now use \cref{cl:get-x1} in the following recursive algorithm for finding an optimal solution to~$P$:
\begin{itemize}
 \item If the constraint matrix $A$ is block-decomposable, say $A_1,\ldots,A_t$ ($t\geq 2$) are the blocks of the block decomposition of $A$, then decompose $P$ into $t$ independent programs $P_1,\ldots,P_t$ with constraint matrices $A_1,\ldots,A_t$, respectively. Solve these programs recursively in parallel, by assigning to each program $P_i$ the number of processors equal to the number of rows of $A_i$. Then combine the obtained optimal solutions to $P_1,\ldots,P_t$ into an optimal solution to $P$.
 \item If the constraint matrix $A$ is not block-decomposable, then it can be written as $(a_{\circ,1}\ A')$, where $a_{\circ,1}$ is the first column of $A$ and $A'$ is a matrix such that $\depth(A')<\depth(A)$. Using \cref{cl:get-x1}, in time $\log^{\Oh(2^d)} n$ find a value $\xi$ such that there exists an optimal solution to $P$ that sets the first variable to $\xi$. Now, consider the linear program $P'$ defined as
\begin{gather*}
 \min (c')^\trans x' \\
 A'x' \leq b-\xi\cdot a_{\circ,1} \\
 x'\geq 0
\end{gather*}
 where $c'$ and $x'$ are $c$ and $x$ with the first entry removed, respectively. Apply the algorithm recursively to $P'$, noting that its constraint matrix $A'$ has a strictly smaller depth than $A$. Finally, an optimal solution to $P$ can be obtained from the computed optimal solution to $P'$ by assigning value $\xi$ to the first variable.  
\end{itemize}

The correctness of the algorithm follows from \cref{cl:get-x1} in a straightforward manner. As for the running time, observe that when the algorithm considers a linear program with a constraint matrix that is not block-decomposable, it recurses on a linear program with a strictly smaller depth. On the other hand, when the algorithm considers a linear program with a block-decomposable constraint matrix, it recurses on several linear programs whose contraint matrices are not block-decomposable. It follows that if the initial linear program $P$ has depth $d$, then the recursion has depth at most $2d$. As each level of the recursion is done in parallel in time $\log^{\Oh(2^d)} n$, the total running time of $\log^{\Oh(2^d)} n$ follows.

This concludes the proof of \cref{lem:relaxation-multi}. \cref{lem:relaxation-two} can be proved in exactly the same manner, except that we replace the usage of \cref{lem:dual-multi} with \cref{lem:dual-two}, noting that the linear programs in question are $(r,s)$-stochastic. Also, the recursion has depth $2(r+s)$ instead of $2d$.

\bibliography{ref}

\begin{thebibliography}{10}

\bibitem{AschenbrennerH07}
M.~Aschenbrenner and R.~Hemmecke.
\newblock Finiteness theorems in stochastic integer programming.
\newblock {\em Found. Comput. Math.}, 7(2):183--227, 2007.

\bibitem{CookGST86}
W.~Cook, A.~M.~H. Gerards, A.~Schrijver, and {\'E}.~Tardos.
\newblock Sensitivity theorems in integer linear programming.
\newblock {\em Mathematical Programming}, 34(3):251--264, 1986.

\bibitem{CslovjecsekEHRW20}
J.~Cslovjecsek, F.~Eisenbrand, C.~Hunkenschr\"oder, L.~Rohwedder, and
  R.~Weismantel.
\newblock Block-structured integer and linear programming in strongly
  polynomial and near linear time.
\newblock {\em CoRR}, abs/2002.07745, 2020.
\newblock Accepted to SODA 2021.

\bibitem{DrangeDH16}
P.~G. Drange, M.~S. Dregi, and P.~{van 't Hof}.
\newblock On the computational complexity of vertex integrity and component
  order connectivity.
\newblock {\em Algorithmica}, 76(4):1181--1202, 2016.

\bibitem{EisenbrandHK18}
F.~Eisenbrand, C.~Hunkenschr{\"{o}}der, and K.~Klein.
\newblock Faster algorithms for integer programs with block structure.
\newblock In {\em 45th International Colloquium on Automata, Languages, and
  Programming, {ICALP} 2018}, volume 107 of {\em LIPIcs}, pages 49:1--49:13.
  Schloss Dagstuhl --- Leibniz-Zentrum f{\"{u}}r Informatik, 2018.

\bibitem{arxiv-IP}
F.~Eisenbrand, C.~Hunkenschr{\"{o}}der, K.~Klein, M.~Kouteck{\'{y}}, A.~Levin,
  and S.~Onn.
\newblock An algorithmic theory of integer programming.
\newblock {\em CoRR}, abs/1904.01361, 2019.

\bibitem{EisenbrandW19}
F.~Eisenbrand and R.~Weismantel.
\newblock Proximity results and faster algorithms for integer programming using
  the steinitz lemma.
\newblock {\em ACM Trans. Algorithms}, 16(1), Nov. 2019.

\bibitem{GanianO16}
R.~Ganian and S.~Ordyniak.
\newblock The complexity landscape of decompositional parameters for {ILP}.
\newblock In {\em 30th Conference on Artificial Intelligence, AAAI 2016}, pages
  710--716. {AAAI} Press, 2016.

\bibitem{lcm_hanson_1972}
D.~Hanson.
\newblock On the product of the primes.
\newblock {\em Canadian Mathematical Bulletin}, 15(1):33–37, 1972.

\bibitem{HemmeckeOR13}
R.~Hemmecke, S.~Onn, and L.~Romanchuk.
\newblock {$N$}-fold integer programming in cubic time.
\newblock {\em Math. Program.}, 137(1-2):325--341, 2013.

\bibitem{JansenKL20}
K.~Jansen, K.~Klein, and A.~Lassota.
\newblock The double exponential runtime is tight for 2-stage stochastic
  {ILPs}.
\newblock {\em CoRR}, abs/2008.12928, 2020.

\bibitem{Klein20}
K.~Klein.
\newblock About the complexity of two-stage stochastic {IP}s.
\newblock In {\em Proceedings of the 21st International Conference, {IPCO}
  2020}, volume 12125 of {\em Lecture Notes in Computer Science}, pages
  252--265. Springer, 2020.

\bibitem{KouteckyLO18}
M.~Kouteck{\'{y}}, A.~Levin, and S.~Onn.
\newblock A parameterized strongly polynomial algorithm for block structured
  integer programs.
\newblock In {\em 45th International Colloquium on Automata, Languages, and
  Programming, {ICALP} 2018}, volume 107 of {\em LIPIcs}, pages 85:1--85:14.
  Schloss Dagstuhl --- Leibniz-Zentrum f{\"{u}}r Informatik, 2018.

\bibitem{NortonPT92}
C.~H. Norton, S.~A. Plotkin, and {\'{E}}.~Tardos.
\newblock Using separation algorithms in fixed dimension.
\newblock {\em J. Algorithms}, 13(1):79--98, 1992.

\bibitem{ReidlRVS14}
F.~Reidl, P.~Rossmanith, F.~{S{\'{a}}nchez Villaamil}, and S.~Sikdar.
\newblock A faster parameterized algorithm for treedepth.
\newblock In {\em 41st International Colloquium on Automata, Languages, and
  Programming, {ICALP} 2014}, volume 8572 of {\em Lecture Notes in Computer
  Science}, pages 931--942. Springer, 2014.

\bibitem{schrijver1998theory}
A.~Schrijver.
\newblock {\em Theory of linear and integer programming}.
\newblock John Wiley \& Sons, 1998.

\bibitem{SchultzSvdV96}
R.~Schultz, L.~Stougie, and M.~H. van~der Vlerk.
\newblock Two-stage stochastic integer programming: a survey.
\newblock {\em Statistica Neerlandica}, 50(3):404--416, 1996.

\end{thebibliography}
\clearpage
\appendix
\newcommand{\inappendix}{yes!}
\end{document}

%%% Local Variables:
%%% mode: latex
%%% TeX-master: t
%%% TeX-engine: default
%%% End: